\newcommand{\vlong}[1]{#1}
\newcommand{\vshort}[1]{}
\definecolor{mygreen}{rgb}{0,0.6,0.4}
\definecolor{myred}{rgb}{0.8,0.1,0.1}
\definecolor{myblue}{rgb}{0,0.5,0.8}
\newcommand{\LeftEqNo}{\let\veqno\@@leqno}
\setlist[itemize]{itemsep=0pt plus 1pt}
\setlist[enumerate]{itemsep=0pt plus 1pt}
\newcommand{\automath}[1]{\relax\ifmmode{#1}\else{$#1$}\fi}
\newcommand{\eqtau}{\equiv_\tau}
\theoremstyle{plain}
\newtheorem*{rep@theorem}{\rep@title}
\newcommand{\newreptheorem}[2]{%
\newenvironment{rep#1}[1]{%
 \def\rep@title{#2 \ref{##1}}%
 \begin{rep@theorem}}%
 {\end{rep@theorem}}}
\newcommand{\dpaw }{\textrm{\lowercase{d}PA\textsuperscript{$\omega$}}}
\newcommand{\dlpaw}{\textrm{\lowercase{d}LPA\textsuperscript{$\omega$}}}
\newcommand{\nef}{\textsc{nef}}
\newcommand{\cps}{\textsc{CPS}}
\newcommand{\nth}{\texttt{nth}}
\newcommand{\prf}{\mathop{\texttt{prf}}\,}
\newcommand{\wit}{\mathop{\texttt{wit}}\,}
\newcommand{\refl}{\mathop{\texttt{refl}}}
\newcommand{\subst}[2]{\mathop{\texttt{subst}}\, #1\,#2}
\newcommand{\splitop}{\mathop{\texttt{split}}}
\renewcommand{\split}[2]{\splitop\, #1\, \mathop{\texttt{as}}\,(#2)\,\mathop{\texttt{in}}\,}
\newcommand{\case}[3]{\mathop{\texttt{case}} \,#1\, \mathop{\texttt{of}}\, [#2\,|\, #3]}
\newcommand{\exf}{\mathop{\texttt{exfalso}}\,}
\newcommand{\dest}[2]{\mathop{\texttt{dest}} \,#1\, \mathop{\texttt{as}}\, (#2)\, \mathop\texttt{in}\,}
\newcommand{\catch}[1]{\mathop{\texttt{catch}}_{#1}}
\newcommand{\throw}[1]{\mathop{\texttt{throw}}\,{#1}}
\newcommand{\letop}{\mathop{\texttt{let}}}
\newcommand{\inop}{\mathop{\texttt{in}}}
\newcommand{\letin}[2]{\letop #1=#2 \inop}
\newcommand{\ind}[4]{\texttt{fix}^{#1}_{#3} [#2\,|\,#4]}
\newcommand{\cofix}[3]{{\texttt{cofix}}^{#1}_{#2}{[#3]}}
\newcommand{\recop}{\mathop{\texttt{rec}}}
\newcommand{\rec}[4]{{\recop}^{#1}_{#2}[#3\,|\,#4]}
\newcommand{\injec}[2]{\iota_{#1}(#2)}
\newcommand{\mut}{\tilde{\mu}}
\newcommand{\coupesep}{|\!|}
\newcommand{\coupe}[2]{
    {\mbox{$\langle$}} {#1} {\coupesep} {#2} \mbox{$\rangle$}}
\newcommand{\typered}{\triangleright}
\newcommand{\reds}{\rightsquigarrow_s}
\newcommand{\dltp}{\textrm{\normalfont{d{L}\textsubscript{$\reset$}}}}
\newcommand{\tp}{{{\texttt{t}\!\texttt{p}}}}
\newcommand{\reset}{{\hat{\tp}}}
\newcommand{\coreset}{{\check{\tp}}}
\renewcommand{\shift}{\mu{\reset}.}
\newcommand{\coshift}{\tmu{\coreset}.}
\newcommand{\ovset}[3][0ex]{%
  \mathrel{\mathop{#3}\limits^{
    \vbox to #1{\kern-2\ex@
    \hbox{#2}\vss}}}}
\newcommand{\red}{\rightarrow}  
\newcommand{\basearrow}{\longrightarrow}
\newcommand{\bmark}{\beta     }
\newcommand{\bred} {\basearrow_{\bmark}}
\newcommand{\shorteq}{%
  \settowidth{\@tempdima}{-}
  \resizebox{\@tempdima}{\height}{=}%
}
\newcommand{\tmu}{{\tilde\mu}}
\newcommand{\cut}[2]{\coupe{#1}{#2}}
\newcommand{\cmd}[2]{\coupe{#1}{#2}}
\newcommand{\imp}{\rightarrow}
\newcommand{\defeq}{\triangleq}
\newcommand{\lbvtstar}{{\overline{\lambda}_{[lv\tau\star]}}}
\newcommand{\pole}{{\bot\!\!\!\bot}}
\newcommand{\orth}{{\pole}}
\newcommand{\dpt}[1]{\{#1\}}
\newcommand{\eps}{\varepsilon}
\newcommand{\rdpt}[1]{\dpt{\cdot|#1}}
\newcommand{\negt}{{\bot\!\!\!\bot}}
\newcommand{\dptimp}{\Rrightarrow}
\newcommand{\dptprod}[2]{\Pi{#1}.{#2}}
\newcommand{\mustar}{\mu\!\star\!}
\newcommand{\N}{{\mathbb{N}}}
\renewcommand{\P}{\mathcal{P}}
\newcommand{\lmmt}{$\lambda\mu\tmu$}
\newcommand{\fvn}[2]{\|#2\|_{#1}}
\newcommand{\tvn}[2]{|#2|_{#1}}
\newcommand{\fvf}[1]{\fvn{f}{#1}}
\newcommand{\fve}[1]{\fvn{e}{#1}}
\newcommand{\tvv}[1]{\tvn{v}{#1}}
\newcommand{\tvV}[1]{\tvn{V}{#1}}
\newcommand{\tvVt}[1]{\tvn{V_t}{#1}}
\newcommand{\fvpi}[1]{\tvn{\pi}{#1}}
\newcommand{\tvt}[1]{\tvn{t}{#1}}
\newcommand{\tvp}[1]{\tvn{p}{#1}}
\newcommand{\real}{\Vdash}
\newcommand{\prfcase}[1] {\paragraph{\normalfont\textbullet~ \textbf{Case}  #1.}}
\newcommand{\cmdt}[2]{\cmd{#1}{#2}_t}
\newcommand{\cmdp}[2]{\cmd{#1}{#2}_p}
\newcommand{\cmde}[2]{\cmd{#1}{#2}_e}
\newcommand{\cmdv}[2]{\cmd{#1}{#2}_v}
\newcommand{\cmdf}[2]{\cmd{#1}{#2}_f}
\newcommand{\cmdpi}[2]{\cmd{#1}{#2}_{\pi}}
\def\limp{\Rightarrow}
\def\<{\langle}
\def\>{\rangle}
\newcommand{\Quote}{\texttt{quote}}
\newcommand{\dom}{\mathop{\mathrm{dom}}}
\newcommand{\FV}{{FV}}
\renewcommand{\P}{\mathcal{P}}
\renewcommand{\C}{\mathcal{C}}
\renewcommand{\L}{\mathscr{L}}
\newcommand{\muteq}{\tmu\shorteq}
\newlength{\saut}
\newcommand{\myfig}[1]{\framebox{\vbox{#1}}}
\newcommand{\myquote}[1]{\emph}
\newcommand{\noem}{\vspace{-1em}}
\newcommand{\nomidem}{\vspace{-0.5em}}
\newcommand{\hole}{{[\,\,]}}
\newcommand{\sigdash      }{\vdash^{\sigma}} 
\newcommand{\sigsigdash      }{\vdash^{\sigma\sigma'}} 
\newcommand{\optsigma      }{} 
\newcommand{\sigmaopt      }{} 
\newcommand{\sigsigprimeopt}{} 
\newcommand{\tis}[2]{(#1|#2)}
\newcommand{\compat}[2]{#1\diamond#2}
\newcommand{\osetb}[3][0ex]{%
  \mathrel{\mathop{#3}\limits^{
    \vbox to#1{\kern-2\ex@
    \hbox{#2}\vss}}}}
\newcommand{\indpt}[2]{{#1\##2}}
\newcommand{\stext}{\vartriangleleft}
\newcommand{\instore}[1]{#1-in-store}
\newcommand{\ct }{\instore{term}}
\newcommand{\ce }{\instore{context}}
\newcommand{\cp }{\instore{proof}}
\newcommand{\stjoin}[2]{\mathsf{join}(#1,#2)}
\newcommand{\str}[2]{\texttt{str}_\infty^{#1}\,#2}
\newcommand{\defrule}{\autorule{\texttt{def}}}
\newcommand{\Ainf}[1]{A_\infty^{#1}}
\newcommand{\elimmark}{E}
\newcommand{\autorule}[1]{\relax\ifmmode{\scriptstyle(#1)}\else$(#1)$\fi}
\newcommand{\cutrule}{\autorule{\textsc{Cut} }}
\newcommand{\axnrule}{\autorule{\textsc{Ax}_n}}
\newcommand{\axxrule}{\autorule{\textsc{Ax}_t}}
\newcommand{\axrrule}{\autorule{\textsc{Ax}_r}}
\newcommand{\axlrule}{\autorule{\textsc{Ax}_l}}
\newcommand{\murule }{\autorule{\mu          }}
\newcommand{\mutrule}{\autorule{\mut         }}
\newcommand{\botrule}{\autorule{\bot         }}
\newcommand{\imprrule}{\autorule{\imp_r         }}
\newcommand{\implrule}{\autorule{\imp_l         }}
\newcommand{\falrule}{\autorule{\forall_l         }}
\newcommand{\farrule}{\autorule{\forall_r         }}
\newcommand{\indrule}{\autorule{\texttt{fix}}}
\newcommand{\cofixrule}{\autorule{\texttt{cofix}}}
\newcommand{\exrrule}{\autorule{\exists_r         }}
\newcommand{\exlrule}{\autorule{\exists_l         }}
\newcommand{\andeurule}{\autorule{\land^1_\elimmark}}
\newcommand{\andedrule}{\autorule{\land^2_\elimmark}}
\newcommand{\andrrule }{\autorule{\land_r          }}
\newcommand{\andlrule }{\autorule{\land_l          }}
\newcommand{\orrrule }{\autorule{\lor_r           }}
\newcommand{\orlrule }{\autorule{\lor_l           }}
\newcommand{\weakrule  }{\autorule{w        }}
\newcommand{\weakdrule }{\autorule{w^d      }}
\newcommand{\prfrule  }{\autorule{\prf    }}
\newcommand{\convrrule}{\autorule{\equiv_r}}
\newcommand{\convlrule}{\autorule{\equiv_l}}
\newcommand{\reflrule }{\autorule{\refl}}
\newcommand{\witrule  }{\autorule{\wit }}
\newcommand{\lamrule}{\autorule{\lambda}}
\newcommand{\apprule}{\autorule{@      }}
\newcommand{\lrule}    {\autorule{l     }}
\newcommand{\dlrule}   {\autorule{l^d   }}
\newcommand{\tauprule   }{\autorule{\tau_p      }}
\newcommand{\tauerule   }{\autorule{\tau_e      }}
\newcommand{\taucatrule }{\autorule{\tau\tau'   }}
\newcommand{\resetrule}{\autorule{\reset}}
\newcommand{\shiftrule}{\autorule{\mu\reset}}
\newcommand{\dcutrule  }{\autorule{\textsc{Cut}^d}}
\newcommand{\dtcutrule  }{\autorule{\textsc{Cut}^d_t}}
\newcommand{\dmutrule  }{\autorule{{\mut}^d}}
\newcommand{\dxmutrule  }{\autorule{{\mut}^d_x}}
\newcommand{\eqrule}{\autorule{=_l}}
\newcommand{\splitrule    }{\autorule{ \splitop}        }
\newcommand{\dexlrule}{\autorule{\exists_l^d         }}
\newcommand{\dorlrule }{\autorule{\lor_l  ^d         }}
\newcommand{\dandlrule }{\autorule{\land_l^d          }}
\acmConference{\vshort}{\vshort}{\noem}
\begin{document}

\title{A sequent calculus with dependent types for classical arithmetic}         


\author{Étienne Miquey}
\affiliation{
  \department{Équipe Gallinette}              
  \institution{INRIA, LS2N}            
  \streetaddress{2 Chemin de la Houssinière}
  \city{Nantes}
  \postcode{44000}
  \country{France}                    
}
\email{emiquey@inria.fr}          


\begin{abstract}
In a recent paper~\cite{Herbelin12}, 
Herbelin developed {\dpaw}, a calculus in which constructive proofs for the axioms of countable and dependent
choices could be derived via the encoding of a proof of countable universal quantification as a stream of it components.
However, the property of normalization (and therefore the one of soundness) was only conjectured.
The difficulty for the proof of normalization is due to the simultaneous presence 
of dependent types (for the constructive part of the choice),
of control operators (for classical logic), of coinductive objects (to encode
functions of type $\N\to A$ into streams $(a_0,a_1,\ldots)$) 
and of lazy evaluation with sharing (for these coinductive objects).

Elaborating on previous works, we introduce in this paper a variant of {\dpaw} presented as a sequent calculus.
On the one hand, we take advantage of a variant of Krivine classical realizability
that we developed to prove the normalization of classical call-by-need~\cite{MiqHer18}.
On the other hand, we benefit from \dltp, a classical sequent calculus with dependent types 
in which type safety is ensured by using delimited continuations together with a syntactic restriction~\cite{Miquey17}.
By combining the techniques developed in these papers,
we manage to define a realizability interpretation \emph{à la} Krivine of our calculus
that allows us to prove normalization and soundness.
\end{abstract}

\begin{CCSXML}

\end{CCSXML}


\keywords{Curry-Howard, dependent choice, classical arithmetic, side effects,
dependent types, classical realizability, sequent calculus}  

\maketitle

\newcommand{\mypar}[1]{\subsection{#1}}
\section{Introduction}
\mypar{Realizing AC$_\N$ and DC in presence of classical logic}

Dependent types are one of the key features of Martin-L\"of's type theory~\cite{MartinLof98},
allowing formulas to refer to terms. 
Notably, the existential quantification rule is defined so that 
a proof term of type $\exists x^A\!.B$ is a pair $(t, p)$ where $t$---the \emph{witness}---is
of type $A$, while $p$---the \emph{proof}---is of type $B[t/x]$. 
Dually, the theory enjoys two elimination rules: one with a destructor $\wit$ to extract the witness,
the second one with a destructor $\prf$
to extract the proof.
This allows for a simple and constructive proof of the 
full axiom of choice~\cite{MartinLof98}:
$$
\begin{array}{rll}
 AC_A & := & \lambda H.(\lambda x.\wit(H x),\lambda x.\prf(H x)) \\
 &:& (\forall x^A\!.\exists y^B\!. P(x,y)) \imp \exists f^{A\imp B}\!.\forall x^A\!. P(x,f(x))
\end{array}
$$
This term is nothing more that an implementation of Brouwer-Heyting-Kolomogoroff interpretation
of the axiom of choice~\cite{Kolmogoroff1932}:
given a proof $H$ of $\forall x^A\!.\exists y^B\!. P(x,y)$, 
it constructs a choice function which simply maps any $x$ to the witness of $Hx$, while the proof that this function
is sound w.r.t. $P$ returns the corresponding certificate.

Yet, this approach deeply relies on the constructivity of the theory.
We present here a continuation of Herbelin's works~\cite{Herbelin12}, 
who proposed a way of scaling up Martin-L\"of's proof to classical logic.
The first idea is to restrict the dependent types to the fragment 
of \emph{negative-elimination-free} proofs (\nef) 
which, intuitively, only contains constructive proofs behaving as values.
The second idea is to represent a countable universal quantification as an 
infinite conjunction. This allows us to internalize into a formal system (called
\dpaw) the realizability approach~\cite{BerBezCoq98,EscOli14} as a direct 
proofs-as-programs interpretation.

Informally, let us imagine that given a proof $H:\forall x^\N\!.\exists y^B\!. P(x,y)$,
we could create the infinite sequence $H_\infty=(H 0,H 1,\ldots)$ 
and select its $n^{\textrm{th}}$-element with some function \nth.
Then, one might wish that: 
$$\lambda H.(\lambda n.\wit(\nth~n~H_\infty),\lambda n.\prf(\nth~n~H_\infty))$$
could stand for a proof for $AC_\N$.
One problem is that even if we were effectively able to build such a term, $H_\infty$ might
still contain some classical proofs. Therefore, two copies of $H n$ might end up behaving 
differently according to the contexts in which they are executed, and thus returning
two different witnesses (which is known to lead to logical inconsistencies~\cite{Herbelin05}). 
This problem can be fixed by using a shared version of $H_\infty$, that is to say: 
$$\lambda H.\letin{a}{H_\infty}~(\lambda n.\wit(\nth~n~a),\lambda n.\prf(\nth~n~a)\,.$$
In words, the term $H_\infty$ is now shared between all the places which may require
some of its components.

It only remains to formalize the intuition of $H_\infty$,
which is done by means of a stream $\cofix{0}{fn}{(H n, f(S(n)))}$ iterated on $f$ with 
parameter $n$, starting with 0:
$$\begin{array}{r@{}l}
AC_\N:=\lambda H.&\letop a=\cofix{0}{fn}{(H n, f(S(n))}\\
		 &\inop\, (\lambda n.\wit(\nth~n~a),\lambda n.\prf(\nth~n~a)\,.
\end{array}$$
The stream is, at the level of formulas, an inhabitant of a coinductively 
defined infinite conjunction $\nu^0_{Xn}(\exists y.P(n,y))\wedge X(n+1)$.
Since we cannot afford to pre-evaluate each of its components, and we thus have to use a 
\emph{lazy} call-by-value evaluation discipline.
However, it still might be responsible for some non-terminating reductions,
all the more as classical proofs may contain backtrack.

\mypar{Normalization of {\textbf{dPA}$^{\omega}$}}
In~\cite{Herbelin12}, the property of normalization (on which relies the one of consistency)
was only conjectured, and the proof sketch that was given turned out to be hard to formalize properly.
Our first attempt to prove the normalization of {\dpaw} was to derive a continuation-passing style translation (CPS),
but translations appeared to be hard to obtain for \dpaw~as such. 
In addition to the difficulties caused by control operators and co-fixpoints, 
{\dpaw} reduction system is defined in a natural deduction fashion, with contextual rules 
where the contexts involved can be of arbitrary depth. 
This kind of rules are indeed 
difficult to faithfully translate through a CPS. 

Rather than directly proving the normalization of {\dpaw}, we choose to first 
give an alternative presentation of the system under the form of a sequent calculus, which we call {\dlpaw}.
Indeed, sequent calculus presentations of a calculus usually provides good intermediate steps 
for {\cps} translations~\cite{Munch13PhD,MunSch15,AriDowMauJon16} since they enforce a decomposition of the reduction system
into finer-grain rules.
To this aim, we first handled separately the difficulties peculiar to the definition of such a calculus:
on the one hand, we proved with Herbelin the normalization of a calculus with control operators and lazy evaluation~\cite{MiqHer18};
on the other hand, we defined a classical sequent calculus with dependent types~\cite{Miquey17}.
By combining the techniques developed in these frameworks, we finally manage to define {\dlpaw}, 
which we present here and prove to be normalizing.

\mypar{Realizability interpretation of classical call-by-need}
In the call-by-need evaluation strategy, the substitution of a variable 
is delayed until knowing whether the argument is needed.
To this end, Ariola \emph{et al.}~\cite{AriEtAl12} proposed the $\lbvtstar$-calculus, 
a variant of Curien-Herbelin's {\lmmt}-calculus~\cite{CurHer00}
in which substitutions are stored in an explicit environment.
Thanks to Danvy's methodology of semantics artifacts~\cite{DanEtAl10},
which consists in successively refining the reduction system until getting 
context-free reduction rules\footnote{That is to say reduction rules in an abstract machine
for which only the term or the context needs to be analyzed in order to decide whether the rule can be applied.},
they obtained an untyped CPS translation for the $\lbvtstar$-calculus.
%
%
By pushing one step further this methodology, 
we showed with Herbelin how to obtain a realizability interpretation \emph{à la}
Krivine for this framework~\cite{MiqHer18}.
The main idea, in contrast to usual models of Krivine realizability~\cite{Krivine09},
is that realizers are defined as pairs of a term and a substitution.
The adequacy of the interpretation directly provided us with a proof of normalization,
and we shall follow here the same methodology to prove the normalization of~\dlpaw.

\mypar{A sequent calculus with dependent types}
While sequent calculi are naturally tailored to smoothly support
CPS interpretations, there was no such presentation of language with dependent types
compatible with a CPS.
In addition to the problem of safely combining control operators and dependent types~\cite{Herbelin05},
the presentation of a dependently typed language under the form of a sequent calculus 
is a challenge in itself. 
In~\cite{Miquey17}, we introduced such a system, called \dltp, 
which is a call-by-value sequent calculus with classical control  
and dependent types. 
In comparison with usual type systems, we decorated typing derivations with a list of dependencies
to ensure subject reduction. 
Besides, the soundness of the calculus was justified 
by means of a CPS translation taking the dependencies into account.
The very definition of the translation constrained us to use delimited continuations in the calculus
when reducing dependently typed terms. 
At the same time, this unveiled the need for the syntactic restriction of dependencies 
to the \emph{negative-elimination-free} fragment as in~\cite{Herbelin12}.
Additionally, we showed how to relate our calculus to a similar system by Lepigre~\cite{Lepigre16}, 
whose consistency is proved by means of a realizability interpretation.
In the present paper, we  use the same techniques, namely a list of dependencies and delimited continuations,
to ensure the soundness of {\dlpaw}, and we follow Lepigre's interpretation of dependent types for the definition
of our realizability model.

\mypar{Contributions of the paper}
The main contributions of this paper can be stated as follows.
First, we define {\dlpaw} (\Cref{s:dlpaw}), a sequent calculus with classical control, dependent types,
inductive and coinductive fixpoints and lazy evaluation made available thanks to 
the presence of stores. This calculus can be seen as a sound combination of \dltp~\cite{Miquey17}\vspace{-0.9mm} 
and the $\lbvtstar$-calculus~\cite{AriEtAl12,MiqHer18} extended with the expressive power of {\dpaw}~\cite{Herbelin12}.
Second, we prove the properties of normalization and soundness for {\dlpaw} thanks to a realizability
interpretation \emph{à la} Krivine, which we obtain by applying Danvy's methodology of semantic artifacts (\Cref{s:small_step,s:realizability}). 
Lastly, {\dlpaw} incidentally provides us with a direct proofs-as-programs 
interpretation of classical arithmetic with dependent choice, as sketched in~\cite{Herbelin12}.

\paragraph{}\emph{This paper is partially taken from the Chapter 8 of the author's PhD thesis~\cite{these}. 
For more detailed proofs, we refer the reader to the \vlong{different} appendices\vshort{ of the version available at: \url{https://hal.inria.fr/hal-01703526}}.}

\section{A sequent calculus with dependent types for classical arithmetic}
\label{s:dlpaw}
\subsection{Syntax}
\begin{figure*}[t]
  \framebox{\vbox{
\begin{tabular}{>{}l|l}
 $\begin{array}{@{}l}
  \textbf{Closures}\\ 
  \textbf{Commands}\\[0.8em] 
  \textbf{Proof terms }\\  \\
  \textbf{Proof values} \\[0.8em]
  \textbf{Terms}\\ 
  \textbf{Terms values} \\\\
  \end{array}~
\begin{array}{c@{~}c@{~}l}
l	& ::= & c\tau \\
c	& ::= & \cmd{p}{e} \\[0.8em]

p,q 	& ::= & a \mid \injec{i}{p} \mid (p,q) \mid (t,p) \mid \lambda x.p \mid \lambda a.p\mid \refl \\
	& \mid&  \ind{t}{p_0}{ax}{p_S} \mid \cofix{t}{bx}{p}\mid \mu \alpha.c \mid \shift {c_\reset}\\
V	& ::= & a  \mid \injec{i}{V} \mid (V,V) \mid (V_t,V) \mid \lambda x.p \mid \lambda a.p \mid \refl \\[0.8em]

t,u 	& ::= & x \mid 0 \mid S(t) \mid \rec{t}{xy}{t_0}{t_S}\mid \lambda x.t \mid t~u \mid \wit p \\
V_t 	& ::= & x \mid S^n(0) \mid  \lambda x.t\\\\
       
\end{array}$
&
$
\begin{array}{@{}l}
  \textbf{Stores} \\
  \textbf{Storables} \\ [0.8em]
  \textbf{Contexts}\\
  \textbf{Forcing}\\\textbf{contexts}\\  [0.8em]
  \textbf{Delimited}\\ 
  \textbf{continuations}\\\\
  \end{array}
\begin{array}{c@{~}c@{~}l}
\tau 	& ::= & \varepsilon \mid \tau [a := p_\tau] \mid \tau[\alpha := e]\\ 
p_\tau 	& ::= & V \mid \ind{V_t}{p_0}{ax}{p_S} \mid \cofix{V_t}{bx}{p}\\
[0.8em]
	    
e	& ::= & f \mid  \alpha \mid \mut a.c\tau \\ 
f	& ::= & [] \mid \mut[a_1.c_1\mid a_2.c_2]  \mid \mut(a_1,a_2).c  \\
	& \mid&  \mut(x,a).c \mid t\cdot e \mid p\cdot e \mid \muteq.c\\[0.8em]

c_\reset& ::= & \cmd{p_N}{e_\reset}\mid \cmd{p}{\reset} \\
e_\reset& ::= &  \mut a.c_\reset\tau\mid \mut[a_1.c_\reset\mid a_2.c'_\reset]   \\
&\mid & \mut(a_1,a_2).c_\reset \mid \mut(x,a).c_\reset\\
\end{array}
$\\
\multicolumn{2}{l}{
 $\begin{array}{l}  \textbf{\nef} \\\\
  \end{array}~
\begin{array}{c@{~}c@{~}l}
c_N 	& ::= & \cmd{p_N}{e_N}
\qquad\qquad
e_N 	 ::=  \star \mid \mut[a_1.c_N\mid a_2.c'_N] \mid \mut a.c_N\tau \mid \mut(a_1,a_2).c_N \mid \mut(x,a).c_N  \\
p_N,q_N & ::= & a \mid \injec{i}{p_N} \mid (p_N,q_N) \mid   (t,p_N) \mid\lambda x.p \mid \lambda a.p \mid \refl
	 \mid  \ind{t}{p_N}{ax}{q_N} \mid \cofix{t}{bx}{p_N}  \mid \mustar.c_N \mid \shift {c_\reset}\\
 
\end{array}$
}
\end{tabular}

}}
  \caption{The language of \dlpaw}
  \label{fig:language}
\end{figure*} 

The language of {\dlpaw} is based on the syntax of {\dltp}~\cite{Miquey17}, 
extended with the expressive power of {\dpaw}~\cite{Herbelin12} and with explicit stores as in the $\lbvtstar$-calculus~\cite{AriEtAl12}.
We stick to a stratified presentation of dependent types, that is to say 
that we syntactically distinguish terms—that represent \emph{mathematical objects}—from proof terms—that represent \emph{mathematical proofs}.
In particular, types and formulas are separated as well, matching the syntax of \dpaw's formulas.
Types are defined as finite types with the set of natural numbers as the sole ground type,
while formulas are inductively built on atomic equalities of terms,
by means of conjunctions, disjunctions, first-order quantifications, dependent products and co-inductive formulas:
$$\begin{array}{l@{\quad}r@{~~}c@{~~}l}
\text{\bf Types}        &T,U & ::= & \N \mid T\to U \\
\text{\bf Formulas    } &A,B  &::= &\top\mid \bot \mid t = u\mid A\land B \mid A\lor B \\
&&\mid&  \dptprod{a:A}{B}\mid\forall x^T. A \mid \exists x^T.A \mid \nu^t_{x,f}A
\end{array}$$

The syntax of terms is identical to the one in {\dpaw}, including functions $\lambda x.t$ 
and applications $t u$, as well as a recursion operator $\rec{t}{xy}{t_0}{t_S}$, 
so that terms represent objects in arithmetic of finite types.
As for proof terms (and contexts, commands), they are now defined with all the expressiveness of {\dpaw}.
Each constructor in the syntax of formulas is reflected by a constructor in the syntax of proofs 
and by the dual co-proof (\emph{i.e.} destructor) in the syntax of evaluation contexts.
Amongst other things, the syntax includes 
pairs $(t,p)$ where $t$ is a term and $p$ a proof, which inhabit the dependent sum type $\exists x^T\!.A$;
dual co-pairs $\tmu(x,a).c$ which bind the (term and proof) variables $x$ and $a$ in the command $c$;
functions $\lambda x.p$ inhabiting the type $\forall x^T\!.A$ together with their dual,
stacks $t\cdot e$ where $e$ is a context whose type might be dependent in $t$;
functions $\lambda a.p$ which inhabit the dependent product type $\dptprod{a:A}{B}$, 
and, dually, stacks $q\cdot e$, where $e$ is a context whose type might be dependent in $q$;
a proof term $\refl$ which is the proof of atomic equalities $t=t$ and
a destructor $\muteq.c$ which allows us to type the command $c$ modulo an equality of terms;
operators $\ind{t}{p_0}{ax}{p_S}$ and $\cofix{t}{bx}{p}$, as in \dpaw, for inductive and coinductive reasoning;
delimited continuations through proofs $\shift c_\tp$ and the context $\reset$;
a distinguished context $[]$ of type $\bot$, which allows us to reason ex-falso.

As in {\dltp}, the syntax of {\nef} proofs, contexts and commands is defined 
as a restriction of the previous syntax. Technically, they are defined (modulo $\alpha$-conversion)
with only one distinguished context variable $\star$ (and consequently only one binder $\mustar.c$), 
and without stacks of the shape $t\cdot e$ or $q\cdot e$ (to avoid applications).
Intuitively, one can understand {\nef} proofs as the proofs that cannot drop their continuation\footnote{See~\cite{Miquey17} for further details.}. 
The commands $c_\reset$ within delimited continuations are defined as commands of the shape $\cut{p}{\reset}$ or formed by a {\nef} proof
and a context of the shape $\mut a.c_\reset\tau$, $\mut[a_1.c_\reset| a_2.c'_\reset]$\vspace{-1mm},
$\mut(a_1,a_2).c_\reset$ or $\mut(x,a).c_\reset$.

We adopt a call-by-value evaluation strategy except for fixpoint operators\footnote{To highlight the duality 
between inductive and coinductive fixpoints, we evaluate both in a lazy way.
Even though this is not indispensable for inductive fixpoints, we find this approach more natural in that
we can treat both in a similar way in the small-step reduction system and thus through the realizability interpretation.}, 
which are evaluated in a lazy way. 
To this purpose, we use \emph{stores}\footnote{Our so-called \emph{stores} somewhat behave like lazy explicit substitutions or mutable environments. 
See~\cite{MiqHer18} for a discussion on this point.} in the spirit of the $\lbvtstar$-calculus, 
which are  defined as lists of bindings of the shape $[a:=p]$ where $p$ is 
a value or a (co-)fixpoint,
and of bindings of the shape $[\alpha:=e]$ where $e$ is any context.
We assume that each variable occurs at most once in a store $\tau$,
we thus reason up to $\alpha$-reduction and we assume the capability of generating fresh names.
Apart from evaluation contexts of the shape $\tmu a.c$ and co-variables $\alpha$,
all the contexts 
are \emph{forcing contexts} which eagerly require a value to be reduced and trigger the evaluation of lazily stored terms.
The resulting language is given in Figure\,\ref{fig:language}.

\begin{figure*}[t]
 \framebox{\vbox{\setlength{\saut}{0.5em}

\centering 
\begin{tabular}{c|c}
$\begin{array}{r@{~}c@{~}l}
  \multicolumn{3}{l}{\textrm{\bf Basic rules}} \\[0.3em]
  				\cmd{\lambda x.p}{V_t\cdot e}\tau & \red & \cmd{p[V_t/x]}{e}\tau 					\\
 (q\in\text{\nef}) \hfill   	\cmd{\lambda a.p}{q\cdot e}\tau & \red & \cmd{\shift\cmd{q}{\mut a.\cmd{p}{\reset}}}{e}\tau 	\\
 (q\notin\text{\nef})\qquad\hfill 	\cmd{\lambda a.p}{q\cdot e}\tau & \red & \cmd{q}{\mut a.\cmd{p}{e}}\tau 			\\
 (e\neq e_\reset)\hfill		\cmd{\mu\alpha.c}{e}\tau 	& \red & c\tau[\alpha:=e]                   			\\
				\cmd{V}{\mut a.c\tau'}\tau 	& \red & c\tau[a:=V]\tau'                    			\\[\saut]\hline
  \\[-1em]
  
  \multicolumn{3}{l}{\textrm{\bf Elimination rules}}  \\[0.3em]
  \cmd{\injec{i}{V}}{\mut[a_1.c_1\mid a_2.c_2]}\tau 	& \red& c_i\tau[a_i:=V] 			 \\
  \cmd{(V_1,V_2)}{\mut(a_1,a_2).c}\tau				& \red& c\tau[a_1:=V_1][a_2:=V_2]		 \\
  \cmd{(V_t,V)}{\mut(x,a).c}\tau 					& \red& (c[t/x])\tau[a:=V]			 \\
  \cmd{\refl}{\muteq.c}\tau		 				& \red& c\tau 				 \\[\saut]
  \\[-1em]
\end{array}$
\quad&\quad
  $\begin{array}{r@{~}c@{~}l}
     \multicolumn{3}{l}{\textrm{\bf Delimited continuations}} \\[0.3em]
(\text{if } c\tau \red c\tau')\hfill	  \cmd{\shift c}{e}\tau &\red& \cmd{\shift c}{e}\tau'     		\\
  \cut{\mu\alpha.c}{e_\reset}\tau					& \red& c[e_\reset/\alpha]\tau			\\
  \cut{\shift\cmd{p}{\reset}}{e}\tau				& \red& \cmd{p}{e}\tau				\\[\saut]\hline
  \\[-1em]
      
  \multicolumn{3}{l}{\textrm{\bf Call-by-value}} \\[0.3em]
  \text{($a$ fresh)} \hfill \cmd{\injec{i}{p}}{e}\tau 		 & \red& \cmd{p}{\mut a.\cut{\injec{i}{a}}{e}}\tau				\\
  \text{($a_1,a_2$ fresh)} \hfill \cmd{(p_1,p_2)}{e}\tau 	 & \red& \cmd{p_1}{\mut a_1.\cmd{p_2}{\mut a_2.\cut{(a_1,a_2)}{e}}}\tau 	\\
  \text{($a$ fresh)} \hfill \cmd{(V_t,p)}{e}\tau 		 & \red& \cmd{p}{\mut a.\cut{(V_t,a)}{e}}\tau 				\\[\saut]\hline
   \\[-1em]
  
  \multicolumn{3}{l}{\textrm{\bf Laziness}} \\
  \text{($a$ fresh)}\hfill \cmd{\cofix{V_t}{bx}{p}     }{e}\tau		& \red& \cut{a}{e}\tau[a:=\cofix{V_t}{bx}{p}] 			\\
  \text{($a$ fresh)}\hfill \cmd{\ind{V_t}{p_0}{bx}{p_S}}{e}\tau 	& \red& \cut{a}{e}\tau[a:=\ind{V_t}{p_0}{bx}{p_S}]		\\[\saut]
    \\[-1.5em]
  \end{array}$
  \\
  \multicolumn{2}{c}{}\\[-0.8em]\hline
  \multicolumn{2}{c}{
  $\begin{array}{r@{~}c@{~}l}
    \multicolumn{3}{l}{\textrm{\bf Lookup}} \\[0.3em]
  \cut{V}{\alpha}\tau[\alpha := e]\tau' 					& \red& \cut{V}{e}\tau[\alpha := e]\tau'	\\
  \cut{a}{f}\tau[a := V]\tau' 							& \red& \cut{V}{a}\tau[a:=V]\tau' 	 		\\[0.3em]
\qquad\qquad (b'\text{ fresh}) \qquad\qquad\qquad 	\cut{a}{f}\tau[a:=\cofix{V_t}{bx}{p}]\tau' 		& \red& \cmd{p[V_t/x][b'/b]}{\mut a.\cut{a}{f}\tau'}\tau[b':=\lambda y.\cofix{y}{bx}{p}] 	\\
				\cut{a}{f}\tau[a:=\ind{0}{p_0}{bx}{p_S}]\tau' 		& \red& \cmd{p_0}{\mut a.\cut{a}{f}\tau'}\tau 					\\
\qquad\qquad(b'\text{ fresh}) \hfill   	\cut{a}{f}\tau[a:=\ind{S(t)}{p_0}{bx}{p_S}]\tau' 		& \red& \cmd{p_S[t/x][b'/b]}{\mut a.\cut{a}{f}\tau'}\tau[b':=\ind{t}{p_0}{bx}{p_S}]	\\[\saut]\hline
  \\[-1em]
  \multicolumn{3}{l}{\textrm{\bf Terms }} \\[0.3em]
  \multicolumn{3}{c}{
  \begin{array}{c|c}
    \begin{array}{r@{~}c@{~}l}
     (\text{if } t\bred t')\hfill T[t]\tau 		& \red& T[t']\tau 			\\
   (\forall \alpha, \cut{p}{\alpha}\tau \red \cut{(t,p')}{\alpha}\tau)\hfill~~\qquad T[\wit p]\tau &\bred& T[t] \\
   (\lambda x.t)V_t & \bred& t[V_t/x]\\
 \rec{0}{xy}{t_0}{t_S}& \bred& t_0\\
\rec{S(u)}{xy}{t_0}{t_S}& \bred& t_S[u/x][\rec{u}{xy}{t_0}{t_S}/y]\\
  \end{array}
  &
  \begin{array}{l}

\multicolumn{1}{l}{\textrm{where:}} \\[0.3em]

\multicolumn{1}{l}{
\begin{array}{c@{~}c@{~}l}
 C_t\hole 	& ::= & \cmd{(\hole,p)}{e} \mid \cmd{\ind{\hole}{p_0}{ax}{p_S}}{e}\\
 &\mid& \cmd{\cofix{\hole}{bx}p}{e}\mid\cut{\lambda x.p}{\hole\cdot e}\\[0.5em]
 T\hole 	& ::= & C_t\hole \mid T[\hole u] \mid T[\rec{\hole}{xy}{t_0}{t_S}] \\
  \end{array}
  }
\end{array}
  \end{array}
}\\
  \end{array}$
}
\end{tabular}

}}
 \caption{Reduction rules of \dlpaw}
 \label{fig:bigstep}
\end{figure*}
\begin{figure*}[t]
\framebox{\vbox{\input{types}}}
  \caption{Type system for \dlpaw}
 \label{fig:dlpaw_types}
\end{figure*}
\subsection{Reduction rules}
The reduction system of {\dlpaw} is given in Figure\,\ref{fig:bigstep}.
The basic rules are those of the call-by-value \lmmt-calculus and of {\dltp}. 
The rules for delimited continuations are exactly the same as in {\dltp},
except that we have to prevent $\reset$ from being caught and stored by a proof $\mu\alpha.c$.
We thus distinguish two rules for commands of the shape $\cut{\mu\alpha.c}{e}$, depending
on whether $e$ is of the shape $e_\reset$ or not. 
In the former case, we perform the substitution $[e_\reset/\alpha]$, 
which is linear since $\mu\alpha.c$ is necessarily {\nef}.
We should also mention in passing that we abuse the syntax in every other rules, 
since $e$ should actually refer to $e$ or $e_\tp$ 
(or the reduction of delimited continuations would be stuck).
Elimination rules correspond to commands where the proof is a constructor (say of pairs) applied to values,
and where the context is the matching destructor. 
Call-by-value rules correspond to ($\varsigma$) rule of Wadler's sequent calculus~\cite{Wadler03}.
The next rules express the fact that (co-)fixpoints are lazily stored, and reduced only if their value is eagerly demanded 
by a forcing context. 
Lastly, terms are reduced according to the usual $\beta$-reduction, 
with the operator $\texttt{rec}$ computing with the usual recursion rules. 
It is worth noting that the stratified presentation allows to define the reduction of terms as external: 
within proofs and contexts, terms are reduced in place. Consequently, as in {\dltp} the very same happen for 
{\nef} proofs embedded within terms. 
Computationally speaking, this corresponds indeed to the intuition that terms are reduced on an external device.


\subsection{Typing rules}
As often in Martin-L\"of's intensional type theory,
formulas are considered up to equational theory on terms.
We denote by $A\equiv B$ the reflexive-transitive-symmetric closure of the relation $\typered$ induced by the reduction of terms and {\nef} 
proofs as follows:
\begin{center}
\begin{tabular}{lcl}
   $A[t]$ &$\typered$ &$A[t']\quad$ whenever $\quad t\rightarrow_\beta t'$ \\
   $A[p]$ &$\typered$& $A[q] \quad$ whenever $\quad \forall \alpha\,(\cmd{p}{\alpha}\rightarrow\cmd{q}{\alpha})$ \\
\end{tabular}
\end{center}
in addition to the reduction rules for equality and for coinductive formulas:
$$
\begin{array}{r@{~~}c@{~~}l}
    0 = S(t) & \typered & \bot \\
    S(t) = 0 & \typered & \bot \\
\end{array}
\qquad
\begin{array}{r@{~~}c@{~~}l}
    S(t) = S(u) & \typered & t = u \\
   \nu ^t_{fx} A &\typered& A[t/x][\nu^y_{fx} A/f(y)=0] 
\end{array}$$

We work with one-sided sequents where typing contexts are defined by:
$$
\begin{array}{rcl}\Gamma,\Gamma' & ::= & \eps \mid \Gamma,x:T \mid \Gamma,a:A \mid \Gamma,\alpha:A^\negt\mid \Gamma,\reset:A^\negt.
\end{array}
$$
using the notation $\alpha:A^\negt$ for an 
assumption of the refutation of $A$.
This allows us to mix hypotheses over terms, proofs and contexts while keeping 
track of the order in which they are added (which is necessary because of the dependencies).
We assume that a variable occurs at most once in a typing context.

We define nine syntactic kinds of typing judgments:
six\footnote{For terms, proofs, contexts, commands, closures and stores.} in regular mode, that we write $\Gamma \sigdash J$,
and three\footnote{For contexts, commands and closures.} more for the dependent mode, that we
write $\Gamma\vdash_d J;\sigma$.
In each case, $\sigma$ is a list of dependencies---we explain the presence of a list of dependencies in each case thereafter---, which are defined from the following grammar:
$$
\sigma ::= \varepsilon \mid \sigma\dpt{p|q}
$$
The substitution on formulas according to a list of dependencies $\sigma$ is defined by:\nomidem
$$
\eps(A) \defeq \{A\}\qquad\qquad\qquad
\sigma\dpt{p|q}(A) \defeq \begin{cases}
 \sigma (A[q/p]) & \text{if $q\in\nef$} \\
 \sigma(A) & \text{otherwise}
\end{cases}
$$
Because the language of proof terms include constructors for pairs, injections, etc, 
the notation $A[q/p]$ does not refer to usual substitutions properly speaking: $p$ can be a pattern (for instance $(a_1,a_2)$)
and not only a variable.

We shall attract the reader's attention to the fact that 
all typing judgments include a list of dependencies.
Indeed, as in the $\lbvtstar$-calculus, when a proof or a context is caught by a binder, say $V$ and $\tmu a$, 
the substitution $[V/a]$ is not performed but rather put in the store: $\tau[a:=V]$.
Now, consider for instance the reduction of a dependent function $\lambda a.p$ (of type $\dptprod{a:A}{B}$)
applied to a stack $V\cdot e$\footnote{We refer the reader to \cite{Miquey17} for detailed explanations on this rule.}:
\begin{align*}\cut{\lambda a .p}{V\cdot e}\tau  &\red \cut{\shift\cut{V}{\tmu a.\cut{p}{\reset}}}{e}\tau\\
&\red \cut{\shift\cut{p}{\reset}}{e}\tau[a:=V]
 \red \cut{p}{e}\tau[a:=V]\nomidem
\end{align*}
Since $p$ still contains the variable $a$, whence his type is still $B[a]$, whereas the type of $e$ is $B[V]$.
We thus need to compensate the missing substitution\footnote{On the contrary, 
the reduced command in {\dltp}  would have been $\cut{p[V/a]}{e}$, which is 
typable with the \cutrule~rule over the formula $B[V/a]$.}. 

We are mostly left with two choices.
Either we mimic the substitution in the type system, which would amount to the following typing rule:
$$\qquad
\infer{\sigmaopt\Gamma\vdash c\tau}{\Gamma,\Gamma'\vdash \tau(c) & \Gamma \vdash \tau:\Gamma'}
$$
$$\hspace{-3mm}\raisebox{0.6em}{\mbox{$
\begin{array}{l@{~~}|@{~~}l}
\begin{array}{l}
 \text{where:}\\
 \tau[\alpha:=e](c) \defeq \tau(c) \\
\end{array}
&\begin{array}{l}
 \tau[a:=p_N](c) \defeq \tau(c[p_N/a]) \quad~~ (p\in\nef)\\
 \tau[a:=p](c) \defeq \tau(c)\hfill(p\notin\nef) \\
\end{array}
\end{array}
$}
}
$$
Or we type stores in the spirit of the $\lbvtstar$-calculus, 
and we carry along the derivations all the bindings liable to be used in types, which constitutes again a list of dependencies.

The former solution has the advantage of solving the problem before typing the command, 
but it has the flaw of performing computations which would not occur in the reduction system.
For instance, the substitution $\tau(c)$ could duplicate co-fixpoints (and their typing derivations), which would never happen in the calculus.
That is the reason why we favor the other solution, which is closer to the calculus in our opinion. 
Yet, it has the drawback that it forces us to carry a list of dependencies even in regular mode. 
Since this list is fixed (it does not evolve in the derivation except when stores occur), 
we differentiate the denotation of regular typing judgments, written $\Gamma \sigdash J$, from the one of judgments in dependent mode, 
which we write $\Gamma \vdash_d J;\sigma$ to highlight that $\sigma$ grows along derivations.
The type system we obtain is given in \Cref{fig:dlpaw_types}.

\subsection{Subject reduction}
\label{s:sub_red}



We shall now prove that typing is preserved along reduction. As for the $\lbvtstar$-calculus,
the proof is simplified by the fact that substitutions are not performed (except for terms), which keeps us from proving 
the safety of the corresponding substitutions.
Yet, we first need to prove some technical lemmas about dependencies. 
To this aim, we define a relation $\sigma\dptimp \sigma'$
between lists of dependencies, which expresses the fact that any typing derivation obtained with $\sigma$ could be obtained as well as with $\sigma'$:
 $$\sigma \dptimp \sigma' ~\defeq~\sigma(A) =\sigma(B) \limp \sigma'(A) =\sigma'(B) \eqno(\text{for any}~ A,B)$$

\begin{proposition}[Dependencies weakening]\label{prop:dlpaw:dpt_weak} 
 If $\sigma,\sigma'$ are two lists of dependencies such that $\sigma\dptimp\sigma'$, then any derivation using $\sigma$ can be done using $\sigma'$ instead. 
 In other words, the following rules are admissible:
 $$
 \infer[\weakrule]{\Gamma \vdash^{\sigma'} J}{\Gamma \vdash^{\sigma} J}
 \qquad\qquad\qquad
 \infer[\weakdrule]{\Gamma \vdash_d J;\sigma'}{\Gamma \vdash_d J;\sigma}
 $$
\end{proposition}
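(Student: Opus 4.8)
The plan is to reduce everything to a single closure property of the relation $\dptimp$ and then run a routine induction on the derivation. First I would record the auxiliary fact that $\dptimp$ is stable under extension by a common binding: if $\sigma\dptimp\sigma'$ then $\sigma\dpt{p|q}\dptimp\sigma'\dpt{p|q}$ for every binding $\dpt{p|q}$. This is immediate from the definition of the substitution $\sigma(\cdot)$: assuming $\sigma\dpt{p|q}(A)=\sigma\dpt{p|q}(B)$, a case analysis on whether $q$ is \nef{} turns the left-hand equality into either $\sigma(A[q/p])=\sigma(B[q/p])$ or $\sigma(A)=\sigma(B)$, and in both cases the hypothesis $\sigma\dptimp\sigma'$ yields the corresponding equality for $\sigma'$, that is $\sigma'\dpt{p|q}(A)=\sigma'\dpt{p|q}(B)$. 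Iterating this gives $\sigma\sigma''\dptimp\sigma'\sigma''$ for any common suffix $\sigma''$, which is the form I will actually use when typing stores.

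Next I would observe that the dependency list is \emph{inspected} only by the rules $\cutrule$ and $\resetrule$, through their side condition $\sigma(A)=\sigma(B)$; every other rule simply transports the list, possibly enlarging it by a binding that is determined by the subject and not by $\sigma$. I would then prove both admissible rules $\weakrule$ and $\weakdrule$ simultaneously, by induction on the derivation, with the induction hypothesis generalised over the structural suffix of bindings that the dependent rules append to the ambient list. For a rule that leaves the list untouched one applies the induction hypothesis to each premise with the same pair $\sigma\dptimp\sigma'$ and re-applies the rule verbatim. For $\cutrule$ and $\resetrule$ the premises are handled by the induction hypothesis and the side condition is transported directly: $\sigma(A)=\sigma(B)$ gives $\sigma'(A)=\sigma'(B)$ by the very definition of $\dptimp$. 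For the rules that grow or refine the list---$\dcutrule$, $\dmutrule$, $\dorlrule$, $\dexlrule$, $\dandlrule$, and $\tauprule$ on the store side---the relevant premise carries the list enlarged by one binding (or concatenated with the store's own list), and the auxiliary fact guarantees that the enlarged lists are still related by $\dptimp$, so the induction hypothesis applies and the rule is reconstructed.

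The subtle point, and the reason for generalising the induction hypothesis, is the dependent mode, where the list is not merely extended monotonically but refined in place: the hole $\dpt{\cdot|p}$ introduced by $\dcutrule$ is progressively replaced by a pattern $\dpt{a|p}$, $\dpt{(a_1,a_2)|p}$ or $\dpt{(x,a)|p}$ as the forcing context is decomposed. The key thing to check is that the binding touched by $\dmutrule$ and its siblings always belongs to the episode-local suffix---everything introduced downstream of the $\shiftrule$ that opens the dependent block---and never to the ambient prefix that the proposition weakens, since the ambient list produced in regular mode only ever ends in filled bindings coming from stores. Consequently $\sigma'$ shares the exact structural suffix of $\sigma$, the dependent left-rules still apply syntactically, and each dependent case collapses to an instance of the auxiliary fact. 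Keeping the ambient list apart from the store-produced list in $\lrule$ and $\tauprule$ is the only genuinely error-prone bookkeeping; the entire logical content is the one-line stability fact for $\dptimp$.
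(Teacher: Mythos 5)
Your proof is correct and follows essentially the same route as the paper's: a routine induction on the typing derivation in which the two rules that actually inspect the list, \cutrule{} and \resetrule, are discharged directly by the definition of $\dptimp$, while every list-extending rule is handled by the stability of $\dptimp$ under appending a common suffix (your auxiliary fact is the same ingredient the paper isolates as the first item of its ``dependencies implication'' lemma). Your extra care about the dependent mode---that weakening only touches the ambient prefix while the episode-local suffix $\dpt{\cdot|p}$, $\dpt{a|p}$, etc.\ is preserved structurally---is a correct and welcome elaboration of a point the paper's one-line proof leaves implicit.
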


We can prove the safety of reduction with respect to typing:
\begin{theorem}[Subject reduction]
\label{thm:subject_reduction}
For any context $\Gamma$ and any closures $c\tau$ and $c'\tau'$ such that $c\tau \rightarrow c'\tau'$, we have:
\begin{center}
 1.~If ~$\Gamma\vdash c\tau$~then~$\Gamma\vdash c'\tau'$.\qquad
 2.~If ~$\Gamma\vdash_d c\tau;\varepsilon$~then~$\Gamma\vdash_d c'\tau';\varepsilon$.
 \end{center}
\end{theorem}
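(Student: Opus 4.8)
The plan is to prove subject reduction by induction on the reduction relation $c\tau \red c'\tau'$, analyzing each reduction rule from \Cref{fig:bigstep} as a separate case, and treating the regular mode (part 1) and the dependent mode (part 2) in parallel since they share the same reduction rules. Before starting the case analysis, I would first establish a handful of technical lemmas about how the list of dependencies $\sigma$ interacts with typing, since these are what make the dependent case go through. In particular, beyond the dependencies weakening of \Cref{prop:dlpaw:dpt_weak}, I expect to need: a substitution lemma for terms (since terms are the only things genuinely substituted, as in the rule $\cmd{\lambda x.p}{V_t\cdot e}\tau \red \cmd{p[V_t/x]}{e}\tau$); a lemma stating that a \nef{} proof $p$ and a value $V$ bound to it in the store are interchangeable in types, matching the definition of $\sigma\dpt{p|q}(A)$; and a lemma allowing one to relate the list of dependencies $\sigma\dpt{a|V}$ produced when a binding $[a:=V]$ enters the store to the substituted formula, so that the mismatch between the type $B[a]$ of $p$ and the type $B[V]$ of $e$ highlighted in the discussion of the dependent function reduction is absorbed by the dependency list rather than by an actual substitution.

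The heart of the proof is the case analysis. For the \textbf{basic rules} and \textbf{elimination rules}, the strategy is uniform: invert the typing derivation of the redex, read off the premises for the constructor on the left and the matching destructor on the right, and reassemble them into a derivation of the contractum, using the \lrule{} and \tauprule/\tauerule{} rules to extend the store typing with the new bindings $[a:=V]$, $[a_i:=V]$, etc. The \textbf{call-by-value} and \textbf{laziness} rules are pure administrative reshufflings—introducing a fresh $\tmu a$ binder or deferring a (co-)fixpoint into the store—so there one just checks that the store-typing judgment $\Gamma\sigdash\tau:(\Gamma';\sigma')$ can be extended appropriately and that the conversion rules $\convrrule,\convlrule$ let the types line up. The \textbf{lookup} rules, where a stored value or (co-)fixpoint is fetched and, in the fixpoint cases, unfolded, will require using the coinductive conversion $\nu^t_{fx}A \typered A[t/x][\nu^y_{fx}A/f(y)=0]$ together with the \cofixrule{} and \indrule{} typing rules to type the unfolded body, again threading the freshly generated binding $[b':=\ldots]$ through the store.

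The main obstacle will be the \textbf{delimited continuations} block in the dependent mode. Here the interplay of the $\reset$ marker, the dependency list $\sigma\dpt{\cdot|p}$, and the restriction to \nef{} proofs is exactly where naive subject reduction fails, and where the rules $\dcutrule,\resetrule,\shiftrule,\dmutrule$ and their dependent-left counterparts must be used precisely. For the step $\cut{\shift\cmd{p}{\reset}}{e}\tau \red \cmd{p}{e}\tau$ I must show that the dependency information accumulated inside the delimited command (recorded via $\reset:A^\negt$ and the side condition $\sigma(A)=\sigma(B)$ of \resetrule) is enough to retype the command $\cmd{p}{e}$ in regular mode, i.e.\ that the dependent typing can be soundly collapsed back; and for $\cut{\mu\alpha.c}{e_\reset}\tau \red c[e_\reset/\alpha]\tau$ I must invoke the linearity of the substitution—guaranteed because $\mu\alpha.c$ is necessarily \nef—to justify substituting the delimited context $e_\reset$ for $\alpha$ without duplicating or discarding it, which is what keeps the dependency bookkeeping consistent. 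I would prove a dedicated lemma to the effect that a \nef{} proof's continuation variable occurs linearly and that such substitution preserves both the typing and the dependency list, and isolate it so that the delimited-continuation cases reduce to a clean application of it together with the weakening \Cref{prop:dlpaw:dpt_weak}.
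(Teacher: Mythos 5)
Your proposal is correct and follows essentially the same route as the paper's proof: an induction on the reduction relation supported by the same auxiliary apparatus (dependencies weakening, a safe term-substitution lemma, and implications between dependency lists that absorb the stored bindings), followed by the same case-by-case analysis in which bindings are pushed into the store-typing judgment rather than substituted. The only divergence is one of emphasis: the paper dispatches the delimited-continuation cases by noting they are identical to those of {\dltp}~\cite{Miquey17}, whereas you single them out as the main obstacle and sketch a direct linearity argument, which is a reasonable way to fill in what the paper outsources.
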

\begin{proof}
 The proof follows the usual proof of subject reduction, by induction on the reduction $c\tau \rightarrow c'\tau'$.
\vlong{See Appendix~\ref{app:sub_red}.}
\end{proof}

\begin{figure*}[t]
 \frame{\vbox{
 \input{ded_nat}
 }}\vspace{-1em}
 \caption{Typing rules of $\dpaw$}
 \label{dpaw_types}
\end{figure*}

\subsection{Natural deduction as macros}
\label{s:macros}
We can recover the usual proof terms for elimination rules in natural deduction systems, 
and in particular the ones from $\dpaw$,
by defining them as macros in our language. 
The definitions are straightforward, using delimited continuations
for $\letop \dots\inop $ and the constructors over {\nef} proofs which
might be dependently typed:
$$
\begin{array}{l}
\begin{array}{@{}r@{~~\defeq~~}l}
  {\letin{a}{p}q} 		& \mu\alpha_p.\cmd{{p}}{\mut a.\cmd{{q}}{\alpha_p}} \\
  {\split{p}{a_1,a_2}q} 	& \mu\alpha_p.\cmd{{p}}{\mut (a_1,a_2).\cmd{{q}}{\alpha_p}} \\
  {\case{p}{a_1.p_1}{a_2.p_2}}  & \mu\alpha_p.\cmd{{p}}{\mut[a_1.\cmd{{p_1}}{\alpha_p}|a_2.\cmd{{p_2}}{\alpha_p}]}\\
  {\dest{p}{a,x}q}  		& \mu\alpha_p.\cmd{{p}}{\mut (x,a).\cmd{{q}}{\alpha_p}}  \\
  \prf p 			&  \shift\cmd{p}{\tmu (x,a).\cmd{a}{\reset}}\\
\end{array}\\
\\[-0.9em]
\begin{array}{c@{~~}|@{~~}c}
\begin{array}{r@{~\defeq~}l@{}}
  \subst p q			& \mu \alpha.\cmd{p}{\muteq.\cmd{q}{\alpha}} \\
  {\exf{p}}  			& \mu\alpha.\cmd{{p}}{[]} 
\end{array}&
\begin{array}{r@{~\defeq~}l@{}}
  {\catch{\alpha}p}  		& \mu\alpha.\cmd{{p}}{\alpha} \\
  {\throw{\alpha}~p}  		& \mu\_.\cmd{{p}}{\alpha}\\
\end{array}
\end{array}
\end{array}
$$
where $\alpha_p = \reset$ if $p$ is {\nef} and $\alpha_p = \alpha$ otherwise.

It is then straightforward to check that the macros match the expected typing rules:
\begin{proposition}[Natural deduction]
 The typing rules from $\dpaw$, given in Figure~\ref{dpaw_types}, are admissible.
\end{proposition}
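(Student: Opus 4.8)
The plan is to prove the admissibility of each rule of Figure~\ref{dpaw_types} separately, by unfolding the definition of the corresponding macro and exhibiting a derivation built solely from the sequent-calculus rules of Figure~\ref{fig:dlpaw_types}. Every eliminator macro is a single $\mu$-abstraction $\mu\alpha_p.\cmd{p}{e}$ (only $\prf$ starts directly with a $\shift$), so the shape of the proof term forces the skeleton of its typing derivation; the real work is to plug the premises of the $\dpaw$ rule into the leaves and to discharge the dependency side-conditions. I would therefore structure the argument around the case split $p\in\nef$ versus $p\notin\nef$ already present in the definition of $\alpha_p$, as this is precisely what decides whether the derivation stays in the regular judgment $\Gamma\vdash J$ or passes through the dependent mode $\Gamma\vdash_d J;\sigma$.

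I would spell out the rule for $\letop$ in full as the representative case, $\letin{a}{p}q := \mu\alpha_p.\cmd{p}{\mut a.\cmd{q}{\alpha_p}}$. When $p\notin\nef$, the side condition of the rule gives $\bullet\notin B$, hence $B[p/\bullet]=B=B[a/\bullet]$ carries no dependency, and the whole term is typed in regular mode: $\murule$ reduces the goal to the command $\cmd{p}{\mut a.\cmd{q}{\alpha}}$, which $\cutrule$ splits into the premise $\Gamma\vdash p:A$ and a goal for $\mut a.\cmd{q}{\alpha}$; the latter is closed by $\mutrule$, a second $\cutrule$, the premise $\Gamma,a:A\vdash q:B$, and the axiom $\axlrule$ on $\alpha$. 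When $p\in\nef$ we have $\alpha_p=\reset$, so the macro is in fact $\shift\cmd{p}{\mut a.\cmd{q}{\reset}}$ and I would switch to the dependent mode: $\shiftrule$ reduces the goal to $\Gamma,\reset:B[p/\bullet]^\negt\vdash_d\cmd{p}{\mut a.\cmd{q}{\reset}};\varepsilon$, then $\dcutrule$ types $p:A$ in regular mode and hands the context to the dependent mode with dependency $\dpt{\cdot|p}$, then $\dmutrule$ binds $a:A$ and updates the dependency to $\dpt{a|p}$, and a final $\dcutrule$ combined with $\resetrule$ closes the branch. The eliminators $\dest$, $\split$ and $\case$ are strictly analogous, using $\dexlrule$, $\dandlrule$ and $\dorlrule$ in place of $\dmutrule$ (and their regular-mode counterparts in the non-$\nef$ branch).

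The remaining macros are lighter. Since $\exf{p}$, $\catch{\alpha}\,p$ and $\throw{\alpha}\,p$ return a type that does not depend on $p$, they are derived purely in regular mode from $\murule$ and $\cutrule$, using the axiom $\axlrule$ for $\alpha$ and the $\botrule$-context $[]$ for ex-falso; the projection $\pi_i$ is the special case of the $\split$ eliminator with conclusion $a_i:A_i$, which again carries no dependency. The only macro forcing $p\in\nef$ from the outset is $\prf p := \shift\cmd{p}{\mut(x,a).\cmd{a}{\reset}}$: here I would apply $\shiftrule$ to enter the dependent mode, $\dcutrule$ to type $p:\exists x^T.A(x)$, $\dexlrule$ to bind $x:T$ and $a:A(x)$ while recording the dependency $\dpt{(x,a)|p}$, and finally $\dcutrule$ with $\axrrule$ on $a$ and $\resetrule$ at the delimiter, the announced type being $A(\wit p)$.

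The step I expect to be the genuine obstacle is discharging the $\resetrule$ premise $\sigma(A)=\sigma(B)$ in each $\nef$ case. This is exactly where the dependency machinery pays off: one must check that the formula standing at the $\reset$ leaf—$B[a/\bullet]$ for $\letop$, or $A(x)$ for $\prf$—is identified, \emph{under} the accumulated list of dependencies $\dpt{a|p}$ (resp.\ $\dpt{(x,a)|p}$), with the result type $B[p/\bullet]$ (resp.\ $A(\wit p)$) announced by $\shift$. By the definition of $\sigma\dpt{p|q}(\cdot)$ this holds precisely because $p\in\nef$, so that the binding is actually substituted into the formula rather than discarded; conversely, the $p\notin\nef$ branch is sound only thanks to the side condition $\bullet\notin B$, which annihilates the dependency. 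Proposition~\ref{prop:dlpaw:dpt_weak} is what lets these local manipulations of the dependency list be threaded through the derivations without disturbing the judgments already established.
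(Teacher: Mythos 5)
Your proposal is correct and takes essentially the same approach as the paper: the paper's appendix likewise proves admissibility by unfolding each macro into its sequent-calculus derivation (spelling out $\prf p$ and $\subst p q$ as the representative cases), entering the dependent mode through $\shiftrule$ when the macro uses $\reset$, and discharging the $\resetrule$ premise $\sigma(A)=\sigma(B)$ via the accumulated dependency list exactly as you describe. The only cosmetic difference is that for $\prf$ the paper inserts an explicit $\convrrule$ step turning $a:A(x)$ into $a:A(\wit (x,a))$ before the binding $\dpt{(x,a)|p}$ identifies it with $A(\wit p)$, a detail you fold into your ``identification under the accumulated dependencies.''
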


One can even check that the reduction rules in {\dlpaw} for these proofs almost mimic the ones of \dpaw. 
To be more precise, the rules of {\dlpaw} do not allow to simulate each rule of {\dpaw}, 
due to the head-reduction strategy amongst other things. 
Nonetheless, up to a few details the reduction of a command in {\dlpaw} follows one particular reduction path of 
the corresponding proof in {\dpaw}, or in other words, one reduction strategy.

\newcommand{\nthn}[2]{\nth_{#1}\,#2}
The main result is that using the macros, the same proof terms are suitable for
countable and dependent choice~\cite{Herbelin12}.
We do not state it here, but following the approach of~\cite{Herbelin12}, 
we could also extend {\dlpaw} to obtain a proof for the axiom of bar induction.
\begin{theorem}[Countable choice~\cite{Herbelin12}]\label{thm:acn}
 We have:
$$\begin{array}{@{~~}rc@{~~}l@{}l}
AC_\N		&:=& \lambda H.&\letop a=\cofix{0}{bn}{(H n, b(S(n))}\\
&&&\,\inop\, (\lambda n.\wit(\nthn n a),\lambda n.\prf(\nthn n a)\\
		&: & \multicolumn{2}{l}{\forall x^\N \exists y^T P(x,y) \imp \exists f^{\N\imp T} \forall x^\N P(x,f(x))}\\
\end{array}$$
where $\nthn n a:= \pi_1(\ind{n}{a}{x,c}{\pi_2(c)})$.
\end{theorem}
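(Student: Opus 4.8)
The plan is to verify that the proof term $AC_\N$ type-checks against the stated type by building the typing derivation compositionally, following the structure of the term. The outermost constructor is $\lambda H$, so by the rule \farrule\ (or rather \imprrule, since the hypothesis is a proof) it suffices to derive that the body has type $\exists f^{\N\imp T}\forall x^\N P(x,f(x))$ under the assumption $H:\forall x^\N\exists y^T P(x,y)$. The body is a $\letop$-construct, which by the macro definitions of \Cref{s:macros} unfolds to $\mu\alpha_p.\cmd{\cofix{0}{bn}{(Hn,b(S(n)))}}{\tmu a.\cmd{q}{\alpha_p}}$, where $q$ is the pair $(\lambda n.\wit(\nthn n a),\lambda n.\prf(\nthn n a))$. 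The admissibility of the natural-deduction rule \autorule{\letop} (established in the preceding Proposition) lets me instead reason directly with the derived rule, so the key is to type the co-fixpoint and then the final pair under the binding $a$.

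First I would type the co-fixpoint $\cofix{0}{bn}{(Hn,b(S(n)))}$. Using the \cofixrule\ rule with $t:=0$, I need the body $(Hn,b(S(n)))$ to inhabit the coinductive formula's unfolding; the intended invariant is $A := \exists y^T P(n,y)\land X(S(n))$, so that $\nu^0_{Xn}A$ encodes the infinite conjunction $\prod_{n}\exists y^T P(n,y)$ discussed in the introduction. Here the first component $Hn$ has type $\exists y^T P(n,y)$ by \faerule\ applied to $H$, and the second component $b(S(n))$ uses the coinductive hypothesis $b$ at argument $S(n)$. The positivity side-condition ($f$ positive in $A$) is immediate since $X$ occurs only in the second conjunct in a covariant position. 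This gives $a:\nu^0_{Xn}A$ after the binding.

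Next I would verify the two projections. The auxiliary term $\nthn n a := \pi_1(\ind{n}{a}{x,c}{\pi_2(c)})$ iterates the second-projection-and-unfold $n$ times, peeling the stream down to its $n^{\textrm{th}}$ component; by the conversion rule \convrrule\ together with the unfolding $\nu^t_{fx}A\typered A[t/x][\nu^y_{fx}A/f(y)=0]$ of \Cref{s:sub_red}, each application of $\pi_2$ re-exposes a coinductive tail, and the induction \indrule\ on $n$ shows $\nthn n a : \exists y^T P(n,y)$. Consequently $\wit(\nthn n a):T$ by \witrule\ (which requires $\nthn n a\in\nef$, satisfied because the stream components are built from the \nef\ proof $H$ and constructors), so $\lambda n.\wit(\nthn n a):\N\imp T$ gives the witness function $f$; dually $\prf(\nthn n a):P(n,\wit(\nthn n a))$, so $\lambda n.\prf(\nthn n a):\forall x^\N P(x,f(x))$, matching the second component of the dependent pair. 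Assembling these with \exrrule\ yields the required $\exists f^{\N\imp T}\forall x^\N P(x,f(x))$.

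The main obstacle will be the bookkeeping around dependency and the \nef\ restriction rather than the high-level structure. Specifically, the pair $q$ is dependently typed—the type of its second component mentions $\wit$ applied to the same argument as the first—so I must check the $\tmu a$ binding in the $\letop$ macro is typed in dependent mode, tracking $a$ through the list of dependencies $\sigma$ so that the conversions $A[\wit(\nthn n a)/\bullet]$ are licensed; this is exactly where the delimited-continuation machinery and the condition ``$p\notin\nef\limp\bullet\notin B$'' in the \autorule{\letop} rule come into play, and I would lean on \Cref{prop:dlpaw:dpt_weak} to adjust dependency lists as the derivation proceeds. The second delicate point is confirming that $\nthn n a$ genuinely lies in the \nef\ fragment so that \witrule\ and the substitution-into-types are both applicable, which requires checking that neither the inductive iterator nor the projections introduce a non-\nef\ context; granting this, the remaining steps are the routine conversions recorded above.
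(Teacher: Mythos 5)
Your proposal is correct and follows essentially the same route as the paper's proof, which is given as an explicit compositional typing derivation (Figure~\ref{fig:ac_n}): type the stream with \cofixrule{} against the coinductive formula $\nu^0_{fx}[\exists y^T\!.P(x,y)\land f(S(x))=0]$, type $\nthn{n}{a}$ by \indrule{} plus the unfolding conversion $\Ainf{n}\equiv A(n)\land\Ainf{S(n)}$ and a projection, and assemble via the \autorule{\letop} macro, \witrule, \exrrule{} and the conversion $P(x,(\lambda n.\wit(\nthn{n}{a}))x)\equiv P(x,\wit(\nthn{x}{a}))$. The delicate points you flag (the \nef{} status of $\nthn{n}{a}$ and the dependency bookkeeping in the $\tmu a$ binding) are exactly the ones the paper's derivation has to discharge.
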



\begin{theorem}[Dependent choice~\cite{Herbelin12}]\label{thm:dc}
 We have:
$$\begin{array}{r@{~~}c@{~~}l@{}l}
DC &:=& \lambda H.\lambda x_0.\letop~ a=(x_0,\cofix{0}{bn}{d_n})fsix~\\ 
   & & \inop~ (\lambda n.\wit(\nthn n a),(\refl,\lambda n.\pi_1(\prf(\prf(\nthn n a)))))\\
   &:& \forall x^T\!. \exists y^T\!. P(x,y) \imp \\ 
&& \qquad  \forall x_0^T\! .\exists  f\in T^\N.( f(0) = x_0 
 \land \forall n^\N\!. P( f (n), f (s (n))))\\
\end{array}$$
where $d_n:=\dest{H n}{y,c} (y,(c,b\,y)))$\\
and ~~~
$\nthn n a:= \ind{n}{a}{x,d}{(\wit(\prf d),\pi_2(\prf(\prf(d))))}$.
\end{theorem}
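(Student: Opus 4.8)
The statement is a typing claim, so the plan is to exhibit a derivation that the displayed proof term $DC$ has the displayed type. The term is built entirely from the macros introduced in \Cref{s:macros} (the $\letop\dots\inop$ binder, the destructor $\texttt{dest}$, and the eliminators $\prf$ and $\pi_i$) together with the native $\cofixop$ and $\texttt{fix}$ operators, so the first move is to invoke the Natural Deduction proposition: the $\dpaw$ rules of \Cref{dpaw_types} are admissible in {\dlpaw}. It therefore suffices to transcribe Herbelin's $\dpaw$ derivation of $DC$~\cite{Herbelin12} into the type system of \Cref{fig:dlpaw_types}, following the very pattern already used for the companion statement \Cref{thm:acn} on countable choice.

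The heart of the derivation is the typing of the shared coinductive witness $(x_0,\cofix{0}{bn}{d_n})$. I would first fix the coinductive formula it inhabits: it is the dependent analogue of the stream formula $\nu^0_{Xn}(\exists y.P(n,y))\wedge X(n+1)$ used for $AC_\N$, now carrying at each stage a witness $y$ together with a proof that it is $P$-related to the stage from which it was produced, and the tail of the stream. Type-checking the body $d_n=\dest{H n}{y,c}(y,(c,b\,y))$ through \cofixrule\ then reduces to instantiating $H$ to obtain some $\exists y^T.P(n,y)$, destructing it to bind the witness $y$ and its certificate $c$, and repackaging $(y,(c,b\,y))$ against a single unfolding of the coinductive formula, justified by the conversion $\nu^t_{fx}A\typered A[t/x][\nu^y_{fx}A/f(y)=0]$ of the equational theory. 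Pairing with $x_0$ gives $a$ the type of a sequence anchored at $x_0$.

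Next I would type the extractor $\nthn n a=\ind{n}{a}{x,d}{(\wit(\prf d),\pi_2(\prf(\prf d)))}$ via \indrule: the base case is $a$ at index $0$, while in the step case $d$ denotes the stage-$x$ datum, from which $(\wit(\prf d),\pi_2(\prf(\prf d)))$ produces the stage-$S(x)$ datum by reading off the next witness and the remaining certificate. From $\nthn n a$ one then obtains, through $\wit$ and $\prf$, both the value $f(n):=\wit(\nthn n a)$ and a proof of $P(f(n),f(S(n)))$. Assembling $(\lambda n.\wit(\nthn n a),(\refl,\lambda n.\pi_1(\prf(\prf(\nthn n a)))))$ against $\exists f^{\N\imp T}.(f(0)=x_0\wedge\forall n^\N.P(f(n),f(S(n))))$ completes the derivation; here $\refl$ discharges the equation $f(0)=x_0$, which holds because $\nthn 0 a$ reduces to $a=(x_0,\cofix{0}{bn}{d_n})$ and hence $\wit(\nthn 0 a)$ reduces to $x_0$, so that $f(0)=x_0$ up to $\equiv$.

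I expect the main obstacle to lie in the bookkeeping of dependencies and of the $\nef$ restriction, rather than in any isolated hard argument. Every occurrence of $\wit$ and $\prf$, as well as every dependently typed use of $\texttt{dest}$, requires its argument to be $\nef$, so one must check that $H$, the cofixpoint and the intervening projections all remain within the $\nef$ fragment throughout the derivation. Moreover the dependent eliminations have to be threaded through a correctly maintained list of dependencies $\sigma$, appealing to \Cref{prop:dlpaw:dpt_weak} to reconcile the dependencies arising at the base and step stages of $\nthn n a$. The second delicate point is purely equational: the unfolding of the coinductive formula, the reductions underlying $f(0)=x_0$, and the indexing identities for $\nthn n a$ must all be discharged by $\equiv$ exactly where \convrrule\ and \convlrule\ are applied. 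Once these invariants are maintained, the derivation follows Herbelin's skeleton faithfully, and the same proof term types in {\dlpaw}.
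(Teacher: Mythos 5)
Your outline is sound and takes exactly the route the paper intends: the paper's own proof of Theorem~\ref{thm:dc} is literally ``Left to the reader,'' the expectation being that one transcribes Herbelin's derivation through the admissible natural-deduction macros in the same way as the explicit typing derivation given for $AC_\N$ in Figure~\ref{fig:ac_n}. Your plan---typing the shared pair $(x_0,\cofixop\ldots)$ against the dependent stream formula via \cofixrule{} and the coinductive unfolding, the extractor via \indrule{}, discharging $f(0)=x_0$ by conversion, and tracking the \nef{} and dependency side conditions---is precisely that transcription, and in fact supplies more detail than the paper itself does.
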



\section{Small-step calculus}
\label{s:small_step}
As for the $\lbvtstar$-calculus~\cite{AriEtAl12,MiqHer18}, we follow here Danvy's methodology of semantic artifacts~\cite{DanEtAl10,AriEtAl12}
to obtain a realizability interpretation.
We first decompose the reduction system of {\dlpaw} into small-step reduction rules, that we denote by $\reds$. 
This requires a refinement and an extension of the syntax, that we shall now present.
To keep us from boring the reader stiff with new (huge) tables for the syntax, typing rules and so forth,
we will introduce them step by step. 
We hope it will help the reader to convince herself of the necessity and of the somewhat naturality of these extensions.

\mypar{Values}
First of all, we need to refine the syntax to distinguish between strong and weak values
in the syntax of proof terms.
As in the $\lbvtstar$-calculus, this refinement is induced by the computational behavior of the calculus:
weak values are the ones which are stored by $\tmu$ binders, but which are not
values enough to be eliminated in front of a forcing context, that is to say variables.
Indeed, if we observe the reduction system, we see that in front of a forcing context $f$, 
a variable leads a search through the store for a ``stronger'' value, which could incidentally provoke the
evaluation of some fixpoints. 
On the other hand, strong values are the ones which can be
reduced in front of the matching forcing context, that is to say functions, $\refl$, pairs of values, 
injections or dependent pairs:
$$
\begin{array}{>{\!\!}lc@{~}c@{~}l}
\textbf{Weak values}   & V	& ::= & a  \mid v \\
\textbf{Strong values} & v	& ::= & \injec{i}{V} \mid (V,V) \mid (V_t,V)  \mid  \lambda x.p \mid \lambda a.p\mid \refl 
\end{array}
$$
This allows us to distinguish commands of the shape $\cut{v}{f}\tau$, where the forcing context (and next the strong
value) are examined to determine whether the command reduces or not; from commands of the shape $\cut{a}{f}\tau$
where the focus is put on the variable $a$, 
which leads to a lookup for the associated proof in the store.

\mypar{Terms}
Next, we need to explicit the reduction of terms. To this purpose,
we include a machinery to evaluate terms in a way which resemble the evaluation of proofs. 
In particular, we define new commands which we write $\cut{t}{\pi}$ where 
$t$ is a term and $\pi$ is a context for terms (or co-term).
Co-terms are either of the shape $\tmu x.c$ or stacks of the shape $u\cdot \pi$. 
These constructions are the usual ones of the \lmmt-calculus (which are also the ones for proofs).
We also extend the definitions of commands with delimited continuations to include the corresponding commands for terms:
$$
\begin{array}{c@{~}|@{~}c}
\begin{array}{@{}l}
  \textbf{Commands}\\
  \textbf{Co-terms} \\
\end{array}\quad
\begin{array}{c@{~}c@{~}l}
c	& ::= & \cmd{p}{e} \mid \cut{t}{\pi}\\
\pi 	& ::= & t\cdot \pi \mid \mut x.c \\ 
\end{array}
&
\begin{array}{c@{~}c@{~}l}
c_\reset& ::= & \cdots\mid \cmd{t}{\pi_\reset}\\
\pi_\reset & ::= & t\cdot \pi_\reset \mid \mut x.c_\reset
\end{array}
\end{array}
$$
We give typing rules for these new constructions, which are the usual rules for typing contexts in the \lmmt-calculus:
$$
\infer[\implrule]{\Gamma \vdash t \cdot \pi:(T\to U)^\negt}
      {\Gamma \vdash t:T  & \Gamma\vdash \pi:U^\negt}
\qquad	\qquad
\infer[\autorule{\tmu_x}]{\Gamma\vdash \tmu x.c : T^\negt }
      {c : (\Gamma, x:T)}
$$
$$
\infer[\autorule{\textsc{cut}_t}]{\Gamma\sigdash \cmd{t}{\pi}}{\Gamma\sigdash {t}:{T}  & \Gamma\sigdash \pi :T^\negt }
$$
It is worth noting that the syntax as well as the typing and reduction rules for terms now match exactly
the ones for proofs\footnote{Except for substitutions of terms, which we could store as well.}. In other words, with these definitions, 
we could abandon the stratified presentation without any trouble, since reduction rules for
terms will naturally collapse to the ones for proofs.

\mypar{Co-delimited continuations}
Finally, in order to maintain typability when reducing dependent pairs of the strong existential type,
 we need to add what we call \emph{co-delimited continuations}. 
As observed in~\cite{Miquey17}, the \cps~translation of pairs $(t,p)$ in {\dltp} is not the expected one, reflecting the need for a special reduction rule. 
Indeed, consider such a pair of type $\exists x^T\!. A$, the standard way of reducing it would be a rule like:
$$\cut{(t,p)}{e}\tau \reds \cut{t}{\tmu x.\cut{p}{\tmu a.\cut{(x,a)}{e}}}\tau$$
but such a rule does not satisfy subject reduction.
Consider indeed a typing derivation for the left-hand side command, when typing the pair $(t,p)$, $p$ is of type $A[t]$. 
On the command on the right-hand side, the variable $a$ will then also be of type $A[t]$, while it should be of type $A[x]$ for the pair $(x,a)$ to be typed.
We thus need to compensate this mismatching of types, by reducing $t$ within a context where $a$ is not linked to $p$ but to a co-reset $\coreset$ (dually to reset $\reset$), 
whose type can be changed from $A[x]$ to $A[t]$ thanks to a list of dependencies:
$$\cmdp{(t,p)}{e}\tau    \reds  \cmdp{p}{\coshift \cmd{t}{\mut x.\cmd{\coreset}{\tmu a.\cut{(x,a)}{e}}}}\tau 	$$
We thus equip the language with new contexts $\coshift c_\coreset$, which we call \emph{co-shifts} and where $c_\coreset$ is a command
whose last cut is of the shape $\cut{\coreset}{e}$.
This corresponds formally to the following syntactic sets, which are dual to the ones introduced for delimited continuations:
$$
\begin{array}{ccl}
e	& ::= & \cdots \mid  \coshift {c_\coreset} \\[0.8em]  
c_\coreset& ::= & \cmd{p_N}{e_\coreset}\mid \cmd{t}{\pi_\coreset}\mid \cut{\coreset}{e}\\
e_\coreset& ::= &  \mut a.c_\coreset\mid \mut[a_1.c_\coreset\mid a_2.c'_\coreset]  \\ & \mid &  \mut(a_1,a_2).c_\coreset \mid \mut(x,a).c_\coreset \\
\pi_\coreset & ::= & t\cdot \pi_\coreset \mid \mut x.c_\coreset \\[0.8em]   
e_N 	& ::= & \cdots \mid \coshift c_\coreset  \\
\end{array}\leqno 
\begin{array}{@{}l}
  \textbf{Contexts}\\[0.8em]
  \textbf{Co-delimited}\\ 
  \textbf{continuations}\\\\\\[0.8em] 
  \textbf{\nef}\\ 
  \end{array}$$
This might seem to be a heavy addition to the language, 
but we insist on the fact that these artifacts are merely 
the dual constructions of delimited continuations
introduced in \dltp, with a very similar intuition. 
In particular, it might be helpful for the reader to think of the fact that
we introduced delimited continuations for type safety 
of the evaluation of dependent products in $\dptprod{a:A}{B}$ 
(which naturally extends to the case $\forall x^T\!.A$).
Therefore, to maintain type safety of dependent sums in $\exists x^T\!.A$,
we need to introduce the dual constructions of co-delimited continuations.
We also give typing rules to these constructions, which are dual to the typing rules for delimited-continuations:
$$
\scalebox{0.97}{\infer[\!\!\autorule{\tmu\coreset}]{\Gamma\sigdash \coshift c_\coreset : A^\negt}{
\Gamma,\coreset:A\vdash_d c_\coreset;\sigma
  }
\qquad\quad
\infer[\!\!\autorule{\coreset}]{\Gamma,\coreset:B,\Gamma' \vdash_d \cmd{\coreset}{e};\sigma}{\Gamma,\Gamma'\sigdash e:A^\negt & \sigma(A)=\sigma(B)}
}
$$
Note that we also need to extend the definition of list of dependencies to include bindings of the shape $\dpt{x|t}$ for terms, 
and that we have to give the corresponding typing rules to type commands of terms in dependent mode:
$$
\scalebox{0.97}{
\infer[\!\!\dxmutrule]{\Gamma\vdash_d \tmu x.c : T^\negt;\sigma\rdpt{t}\! }{c : (\Gamma, x:T;\sigma\dpt{x|t})}
~~~
\infer[\!\!\dtcutrule]{\Gamma,\coreset:B,\Gamma' \vdash_d \cmd{t}{\pi};\sigma}{\Pi_t\!\! & \Gamma,\coreset:B,\Gamma'\vdash_d \pi:A^\negt;\sigma\dpt{\cdot|t}\! }
}
$$
where $\Pi_t \defeq \Gamma,\Gamma'\sigdash {t}:{T}$.
  
\vlong{The small-step reduction system is given in Appendix~\ref{app:small_step}.
The }
\vshort{Finally, small-step reduction}
rules are written $c_\iota\tau \reds c'_o\tau'$
where the annotation $\iota,p$ on commands 
are indices (\emph{i.e.} $c,p,e,V,f,t,\pi,V_t$) indicating which
part of the command is in control. 
As in the $\lbvtstar$-calculus, we observe an alternation of steps 
descending from $p$ to $f$ for proofs and from $t$ to $V_t$ for terms.
The descent for proofs can be divided in two main phases.
During the first phase, from $p$ to $e$ we observe the call-by-value 
process, which extracts values from proofs, opening recursively the 
constructors and computing values. 
In the second phase, the core computation takes place from $V$ to $f$, 
with the destruction of constructors and the application of function to
their arguments.
The laziness corresponds precisely to a skip of the first phase, waiting
to possibly reach the second phase before actually going through the first one.
  

Here again, reduction is safe with respect to the type system:
\begin{proposition}[Subject reduction]\label{p:small_sub_red}
 The small-step reduction rules satisfy subject reduction.
\end{proposition}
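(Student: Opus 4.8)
The plan is to prove the statement by case analysis on the small-step reduction rules $c_\iota\tau \reds c'_o\tau'$, treating the regular judgments $\Gamma\sigdash J$ and the dependent ones $\Gamma\vdash_d J;\sigma$ in parallel, exactly as the two points of Theorem~\ref{thm:subject_reduction} were handled together. For each rule the recipe is the same: invert the typing derivation of the left-hand closure to recover the derivations of its immediate subterms, then reassemble them into a derivation of the right-hand closure. As in the big-step case, and as for the $\lbvtstar$-calculus, the crucial simplification is that substitutions of proofs and of contexts are never performed but only recorded in the store, so that no substitution lemma is required for them; the only genuine substitution to control is that of terms, for which I would use the usual substitution lemma relating a derivation $\Gamma,x:T\sigdash J$ and a value $V_t$ of type $T$ to a derivation of $J[V_t/x]$.

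A large proportion of the small-step rules are purely administrative: they merely move the control annotation $\iota$ from one position to another (descending from $p$ towards $f$, or from $t$ towards $V_t$) without altering the underlying untyped command. For these, typability is preserved immediately, since the same derivation witnesses both sides once the annotation is forgotten. The work therefore concentrates on the rules that genuinely rewrite a command, where I would lean on two ingredients. The first is the dependencies-weakening property (Proposition~\ref{prop:dlpaw:dpt_weak}): whenever a binding is pushed on or consumed from the store the ambient list of dependencies changes, and in each case one must check that the old and new lists $\sigma,\sigma'$ satisfy $\sigma\dptimp\sigma'$ so that the subderivations can be transported. The second is a store-typing lemma, proved by induction on the store-typing rules $\lrule$, $\tauprule$ and $\tauerule$, guaranteeing that looking up a variable $a$ bound to a value $V$ or to a lazily stored (co-)fixpoint in $\tau$ yields a proof of the announced type; this is precisely what the lookup and laziness rules require.

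The main obstacle, and the reason the small-step presentation had to be enriched, lies in the rules for the strong existential. For a dependent pair the reduction is $\cmdp{(t,p)}{e}\tau \reds \cmdp{p}{\coshift\cmd{t}{\mut x.\cmd{\coreset}{\tmu a.\cut{(x,a)}{e}}}}\tau$, and a naive decomposition breaks subject reduction because $p$ has type $A[t/x]$ whereas the context expects $a$ of type $A[x]$. The co-delimited continuation $\coshift$ together with the co-reset $\coreset$ is exactly the device that repairs this mismatch: the rules $\autorule{\tmu\coreset}$ and $\autorule{\coreset}$ let the type $A[x]$ be reconciled with $A[t/x]$ along the list of dependencies via the binding $\dpt{x|t}$, dually to the way $\reset$ and $\shift$ behave for dependent products. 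I expect this case, and symmetrically the delimited-continuation cases, to demand the most care: one has to enter the dependent mode, thread the dependency $\dpt{x|t}$ (and, dually, $\dpt{\cdot|p}$) through the subderivations using the rules $\dtcutrule$, $\dxmutrule$ and their counterparts, and verify at the (co-)reset that the required equality of types holds modulo $\sigma$. Once these continuation cases are discharged, the remaining constructor-opening rules of the call-by-value phase follow the same pattern as in the big-step proof, and the proposition follows.
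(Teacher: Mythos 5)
Your proposal is correct and follows essentially the same route as the paper: an induction on $\reds$ in which the administrative focus-moving rules are immediate, the remaining cases mirror the big-step proof (no substitution lemma for proofs/contexts since bindings go to the store, dependency weakening to transport subderivations), and the only genuinely new case is the dependent pair $\cmdp{(t,p)}{e}\tau$, which is exactly the one the paper singles out and discharges with the co-delimited continuation rules $\autorule{\tmu\coreset}$, $\autorule{\coreset}$, $\dtcutrule$ and $\dxmutrule$, threading the binding $\dpt{x|t}$ to reconcile $A[t/x]$ with $A[x]$. Your identification of that case as the crux, and of the mechanism that repairs it, matches the paper's displayed derivation.
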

\vlong{\begin{proof}
 The proof is again an induction on $\reds$, see Appendix~\ref{app:small_step}.
\end{proof}}

It is also direct to check that the small-step reduction system simulates the big-step one, 
and in particular that it preserves the normalization :
\begin{proposition}\label{prop:reduction}
 If a closure $c\tau$ normalizes for the reduction $\reds$, then it normalizes for $\red$.
\end{proposition}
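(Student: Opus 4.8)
The plan is to establish that the small-step system $\reds$ \emph{simulates} the big-step system $\red$, and then to conclude by contraposition. First I would fix the embedding of big-step closures into small-step closures. Since the small-step syntax is obtained by refining the big-step one—splitting $V$ into weak and strong values, adding term commands $\cut{t}{\pi}$ with their co-terms, and adding co-shifts $\coshift c_\coreset$—every big-step closure $c\tau$ is already a small-step closure; it suffices to annotate it with the top-level command index so that the whole command is in control. Normal forms on both sides coincide once annotations are erased, so the whole argument can be phrased on reduction sequences through this embedding.

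The core is the simulation lemma: whenever $c\tau \red c'\tau'$, the embedded closure reduces in \emph{at least one} small-step, i.e.\ $c\tau \reds^+ c'\tau'$ (up to re-annotation of the right-hand side). I would prove this by case analysis on the big-step rule that is applied. The basic, elimination, call-by-value, laziness and lookup rules are handled directly: each corresponds to a fixed, non-empty sequence of descent micro-steps in the small-step machine (the alternation descending from $p$ down to $f$, and from $t$ down to $V_t$ described above), and one only has to replay those descent steps on the annotated command and check that they reach exactly the annotated right-hand side. The delicate cases are the \emph{congruence} rules—namely $\cmd{\shift c}{e}\tau \red \cmd{\shift c}{e}\tau'$ for delimited continuations, and the external term rules $T\hole[t]\tau \red T\hole[t']\tau$ lifting $t \bred t'$: here the contracted redex sits under a context, so I would argue by induction on the structure of that context, using the grammars of $C_t\hole$ and $T\hole$, appealing to the induction hypothesis for the inner reduction and then lifting the resulting small-step sequence back through the context. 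This lifting is exactly what the small-step system was designed to make possible, since it evaluates terms in place through the explicit $\cut{t}{\pi}$ machinery.

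From the lemma the statement follows at once. If $c\tau$ does not normalize for $\red$, there is an infinite sequence $c\tau = c_0\tau_0 \red c_1\tau_1 \red \cdots$; applying the lemma to each step yields an infinite small-step sequence $c_0\tau_0 \reds^+ c_1\tau_1 \reds^+ \cdots$, so $c\tau$ does not normalize for $\reds$ either, which is the required contrapositive. The main obstacle is precisely the congruence and term cases of the simulation lemma, and the subtlety is twofold: one must check that \emph{every} big-step step maps to a strictly positive number of $\reds$ steps—otherwise an infinite $\red$-sequence could collapse to a finite $\reds$-sequence and the argument would fail—and one must handle the $\wit$ rule, whose premise quantifies over all contexts via $\forall\alpha,\,(\cut{p}{\alpha}\tau \red \cut{(t,p')}{\alpha}\tau)$. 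Matching this external evaluation of $\wit p$ against the in-place small-step evaluation of terms requires unfolding the co-delimited-continuation machinery, and it is there that most of the bookkeeping will concentrate.
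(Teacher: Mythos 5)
Your overall strategy coincides with the paper's: argue by contraposition, and show that every $\red$-step is reflected by at least one $\reds$-step, with the congruence rule for delimited continuations discharged by the matching small-step congruence rule and the "at least one" requirement correctly identified as essential. Where you diverge is the contextual term rule $T[t]\tau \red T[t']\tau$ (lifting $t\bred t'$), and that is where your argument has a genuine gap. You assert that the small-step system "evaluates terms in place through the explicit $\cut{t}{\pi}$ machinery" and that the inner reduction can therefore be lifted through the context $T\hole$. But it is the \emph{big-step} system that reduces terms in place; the small-step machine only ever touches a term after bringing it into focus (for instance $\cmdp{(t,p)}{e}\tau \reds \cmdp{p}{\coshift\cut{t}{\mut x.\cmd{\coreset}{\tmu a.\cut{(x,a)}{e}}}}\tau$ first extracts $p$, and the term sub-machine then evaluates $t$ following its own fixed strategy, pushing arguments onto $\pi$). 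Consequently an arbitrary in-place $\bred$-contraction under $T\hole$ is not reflected by $\reds$-steps between the two correspondingly annotated closures: the small-step run issued from $T[t]\tau$ need never pass through any annotation of $T[t']\tau$. The same difficulty affects the $\wit$ rule, whose premise quantifies over all contexts $\alpha$.

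The paper sidesteps this case entirely: its proof explicitly \emph{excludes} the contextual term reductions from the step-by-step simulation, and instead observes that the term fragment is essentially simply-typed $\lambda$-calculus with a recursor, hence normalizing, so that an infinite $\red$-sequence cannot consist, from some point on, solely of term reductions; the infinitely many remaining steps are each reflected by at least one $\reds$-step. To repair your proof you would either need this separate normalization argument for terms, or a genuinely weaker simulation relation (closures identified up to reduction of embedded terms), which is substantially more than "lifting the small-step sequence back through the context." Everything else in your proposal—the embedding of big-step closures as annotated small-step closures, the case analysis on the non-term rules, and the final contrapositive chaining—matches the paper's argument.
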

\vlong{
\begin{proof}
 By contraposition, see Appendix~\ref{app:small_step}.
\end{proof}
}

\section{A realizability interpretation of {\textbf{\boldmath dLPA$^\omega$}}}
\label{s:realizability}
We shall now present the realizability interpretation of $\dlpaw$, which will finally give us a proof of its normalization.
Here again, the interpretation combines ideas of the interpretations for the $\lbvtstar$-calculus~\cite{MiqHer18}
and for {\dltp} through the embedding in Lepigre's calculus \cite{Miquey17,Lepigre16}.
Namely, as for the $\lbvtstar$-calculus, formulas will be interpreted by sets of proofs-in-store 
of the shape $\tis{p}{\tau}$, and the orthogonality will be defined between proofs-in-store $\tis{p}{\tau}$ and contexts-in-store $\tis{e}{\tau'}$ such
that the stores $\tau$ and $\tau'$ are compatible.

We recall the main definitions necessary to the realizability interpretation:
\begin{definition}[Proofs-in-store]
We call \emph{closed \cp} (resp. \emph{closed \ce}, \emph{closed \ct}, etc) 
the combination of a proof $p$ (resp. context $e$, term $t$, etc) with a closed store $\tau$ such that
$FV(p)\subseteq \dom(\tau)$. 
We use the notation $\tis{p}{\tau}$ to denote such a pair. 
In addition, we denote by $\Lambda_p$ (resp. $\Lambda_e$, etc.) the set of all proofs 
and by $\Lambda_p^\tau$ (resp. $\Lambda_e^\tau$, etc.) the set of all proofs-in-store.

We denote the sets of closed closures by $\C_0$, and we identify $\tis{c}{\tau}$ with the closure $c\tau$ when $c$ is closed in $\tau$.
\end{definition}

We now recall the notion of compatible stores~\cite{MiqHer18}, which allows us to define
an orthogonality relation between proofs- and contexts-in-store.
\begin{definition}[Compatible stores and union]  
Let $\tau$ and $\tau'$ be stores, we say that:
\begin{itemize}
 \item they are \emph{independent} and note $\indpt{\tau}{\tau'}$ if
 ${\dom(\tau)\cap\dom(\tau')=\emptyset}$.
 \item they are \emph{compatible} and note $\compat{\tau}{\tau'}$ 
 if for all variables $a$ (resp. co-variables $\alpha$) present in both stores: ${a\in \dom(\tau)\cap\dom(\tau')}$;
 the corresponding proofs (resp. contexts) in $\tau$ and $\tau'$ 
 coincide. 
 \item $\tau'$ is an \emph{extension} of $\tau$ and we write $\tau\stext \tau'$ whenever 
 $\compat{\tau}{\tau'}$ and $\dom(\tau)\subseteq\dom(\tau')$.

 \item $\overline{\tau\tau'}$ is \emph{the compatible union } of compatible closed stores $\tau$ and $\tau'$.
 It is defined as $\overline{\tau\tau'}\defeq \stjoin{\tau}{\tau'}$, which itself given by:\vspace{-0.4em}
 $$\begin{array}{r@{~~\defeq~~}l}
  \stjoin{\tau_0[a:=p]\tau_1}{\tau'_0[a:=p]\tau'_1} & \tau_0\tau'_0[a:=p]\stjoin{\tau_1}{\tau'_1} \\
  \stjoin{\tau_0[\alpha:=e]\tau_1}{\tau'_0[\alpha:=e]\tau'_1} & \tau_0\tau'_0[\alpha:=e]\stjoin{\tau_1}{\tau'_1} \\
  \stjoin{\tau_0}{\tau_0'} & \tau_0\tau_0'                                                             \\
 \end{array}\nomidem
 $$
 where $\indpt{\tau_0}{\tau_0'}$.
\end{itemize}
 \end{definition}
The next lemma (which follows from the previous definition) states the main property  
we will use about union of compatible stores.
\begin{lemma}
\label{lm:st_union}
If $\tau$ and $\tau'$ are two compatible stores, then $\tau\stext\overline{\tau\tau'}$ and $\tau'\stext\overline{\tau\tau'}$.
Besides, if $\tau$ is of the form $\tau_0[x:=t]\tau_1$,
then $\overline{\tau\tau'}$ is of the form $\overline{\tau_0}[x:=t]\overline{\tau_1}$ with $\tau_0 \stext \overline{\tau_0}$
and $\tau_1\stext\overline{\tau_1}$.
\end{lemma}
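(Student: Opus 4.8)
The plan is to prove both assertions by induction on the number of recursive calls made by $\stjoin{\tau}{\tau'}$, that is, on the combined number of bindings of $\tau$ and $\tau'$, following the three defining clauses of $\mathsf{join}$. Throughout, I would rely on the standing convention that every variable is bound at most once in a store, which is what rules out binding conflicts when prefixes are concatenated. I would first establish the two extension relations of the first assertion, and then derive the factorization of the second one from it, rather than by a fresh induction.

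For the first assertion the base case is when $\tau$ and $\tau'$ share no variable, so that the third clause applies and $\overline{\tau\tau'}=\tau\tau'$; then $\dom(\tau)\subseteq\dom(\tau\tau')$ and $\dom(\tau')\subseteq\dom(\tau\tau')$ are immediate, and compatibility holds because, by variable uniqueness, the binding of any variable shared between $\tau$ (resp. $\tau'$) and $\tau\tau'$ is exactly its binding in $\tau$ (resp. $\tau'$). In the inductive case, the first shared binding---say $[b:=w]$, where $w$ is a proof, a context, or a term---splits the stores as $\tau=\tau_0[b:=w]\tau_1$ and $\tau'=\tau'_0[b:=w]\tau'_1$ with $\indpt{\tau_0}{\tau'_0}$, and $\overline{\tau\tau'}=\tau_0\tau'_0[b:=w]\overline{\tau_1\tau'_1}$. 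The induction hypothesis gives $\tau_1\stext\overline{\tau_1\tau'_1}$ and $\tau'_1\stext\overline{\tau_1\tau'_1}$; the domain inclusions then follow by bookkeeping, and compatibility is checked by distinguishing whether a shared variable lies in the prefix $\tau_0$ (resp. $\tau'_0$), equals $b$, or lies in the suffix, the last case being handled by the induction hypothesis. This yields $\tau\stext\overline{\tau\tau'}$ and, by the symmetric argument, $\tau'\stext\overline{\tau\tau'}$.

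For the second assertion I would first record the elementary fact that $\mathsf{join}$ preserves the relative order of the bindings of each argument: an easy induction on the same measure shows that the bindings of $\tau$ occur, in their original order, as an ordered subsequence of the bindings of $\overline{\tau\tau'}$ (and likewise for $\tau'$). Applied to $\tau=\tau_0[x:=t]\tau_1$, and using variable uniqueness to guarantee that $[x:=t]$ occurs exactly once in $\overline{\tau\tau'}$, this places every binding of $\tau_0$ before $[x:=t]$ and every binding of $\tau_1$ after it. Splitting $\overline{\tau\tau'}$ at this unique occurrence therefore yields stores $\overline{\tau_0}$ and $\overline{\tau_1}$ with $\overline{\tau\tau'}=\overline{\tau_0}[x:=t]\overline{\tau_1}$, $\dom(\tau_0)\subseteq\dom(\overline{\tau_0})$ and $\dom(\tau_1)\subseteq\dom(\overline{\tau_1})$. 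The extensions $\tau_0\stext\overline{\tau_0}$ and $\tau_1\stext\overline{\tau_1}$ then follow from the first assertion: for any variable shared between $\tau_0$ and $\overline{\tau_0}$, its binding in $\tau_0$ agrees with its binding in $\tau$ (since $\tau_0$ is a sub-store of $\tau$), which agrees with its binding in $\overline{\tau\tau'}$ (since $\tau\stext\overline{\tau\tau'}$), which is itself its binding in $\overline{\tau_0}$ (since $\overline{\tau_0}$ is a sub-store of $\overline{\tau\tau'}$); the identical chain applies to $\tau_1$.

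I expect no conceptual difficulty here, the proof being essentially careful bookkeeping, but two points deserve attention. In the first assertion, the compatibility check of the inductive step rests entirely on the invariant that each variable is bound at most once, without which wrapping the shared binding $[b:=w]$ between the independent prefixes $\tau_0,\tau'_0$ could create a clash. The more delicate point, and the one I would treat with the most care, is the factorization in the second assertion: instead of threading the distinguished binding $[x:=t]$ through a nested case analysis on the position of the first shared binding relative to $x$, I route the argument through the order-preservation property together with the already-proved first assertion, so that both the well-definedness of the split and the two extension relations fall out uniformly.
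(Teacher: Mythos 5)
Your proof is correct, and it in fact supplies more detail than the paper, which gives no argument for this lemma beyond the parenthetical remark that it ``follows from the previous definition'' of $\stjoin{\tau}{\tau'}$. Your induction on the clauses of $\mathsf{join}$, together with the order-preservation observation used to obtain the factorization $\overline{\tau\tau'}=\overline{\tau_0}[x:=t]\overline{\tau_1}$, is exactly the intended unfolding of that definition (the only points left implicit, both harmless, are that the suffixes $\tau_1,\tau'_1$ inherit compatibility from $\tau,\tau'$ so that the induction hypothesis applies, and that the at-most-once convention on variables is tacitly assumed to carry over to the store produced by $\mathsf{join}$).
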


We can now define the notion of pole, which has to satisfy an extra condition due to the presence of delimited continuations
\begin{definition}[Pole]
 A subset $\pole\in \C_0$ is said to be \emph{saturated} or \emph{closed by anti-reduction} 
 whenever for all $\tis{c}{\tau},\tis{c'}{\tau'}\in\C_0$, we have:
 $$(c'\tau' \in \pole) ~ \land ~(c\tau\rightarrow c'\tau')~\Rightarrow~ (c\tau\in\pole)$$
 It is said to be \emph{closed by store extension} if whenever $c\tau$ is in $\pole$, for any store $\tau'$ extending $\tau$,
 $c\tau'$ is also in $\pole$:
 $$(c\tau\in\pole) ~\land~ (\tau\stext\tau') ~\Rightarrow~ (c\tau'\in\pole)$$
 It is said to be \emph{closed under delimited continuations} if whenever $c[e/\reset]\tau$ (resp. $c[V/\coreset]\tau$) is in $\pole$, 
 then $\cut{\shift c}{e}\tau$ (resp .$\cut{V}{\coshift c}\tau$) belongs to $\pole$:\nomidem
 $$(c[e/\reset]\tau\in\pole) ~\Rightarrow~ (\cut{\shift c}{e}\tau\in\pole)\nomidem$$
 $$ (c[V/\coreset]\tau\in\pole) ~\Rightarrow ~(\cut{V}{\coshift c}\tau\in\pole)$$
 A \emph{pole} is defined as any subset of $\C_0$ that is closed by anti-reduction, by store extension and under delimited continuations.
\end{definition}

We verify that the set of normalizing command is indeed a pole:
\begin{proposition}
\label{prop:dlpaw:norm_pole}
 The set $\pole_{\Downarrow}=\{c\tau\in\C_0:~c\tau\text{ normalizes }\}$ is a pole.
\end{proposition}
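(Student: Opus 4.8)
The plan is to verify, one after the other, the three defining closure properties of a pole for $\pole_{\Downarrow}$, each of which reduces to an elementary observation about the reduction relation $\red$ on closures.

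\emph{Closure by anti-reduction.} Suppose $c\tau \red c'\tau'$ with $c'\tau' \in \pole_{\Downarrow}$. Since reduction of closures is (head-)deterministic, any reduction path issued from $c\tau$ necessarily begins with this step and thereafter coincides with a path issued from $c'\tau'$; as the latter normalizes, so does $c\tau$. Equivalently, prepending the step $c\tau \red c'\tau'$ to a terminating reduction of $c'\tau'$ yields a terminating reduction of $c\tau$. Hence $c\tau \in \pole_{\Downarrow}$, and this case is immediate.

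\emph{Closure by store extension.} Assume $c\tau \in \pole_{\Downarrow}$ and $\tau \stext \tau'$. The bindings present in $\tau'$ but not in $\tau$ are \emph{inert} for the reduction of $c$: since $FV(c)\subseteq\dom(\tau)$ and the freshness convention guarantees that every name generated while reducing $c\tau'$ is distinct from $\dom(\tau')$, no lookup performed during the reduction of $c\tau'$ can ever reach one of these extra bindings. Consequently every reduction step of $c\tau'$ is mirrored step-for-step by the corresponding reduction of $c\tau$, the surplus bindings being merely carried along untouched (here one may invoke \Cref{lm:st_union} to control the shape of the store when the extension interleaves bindings). An infinite reduction of $c\tau'$ would therefore project onto an infinite reduction of $c\tau$, contradicting $c\tau \in \pole_{\Downarrow}$; hence $c\tau' \in \pole_{\Downarrow}$.

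\emph{Closure under delimited continuations.} This is the only clause requiring genuine care, and I treat the $\reset$ case, the $\coreset$ case being symmetric by duality. Assume $c[e/\reset]\tau \in \pole_{\Downarrow}$; I must show $\cut{\shift c}{e}\tau$ normalizes. By the reduction rules for delimited continuations, $\cut{\shift c}{e}\tau$ can only reduce by reducing the delimited command $c$ internally, until its head command reaches the form $\cmd{p}{\reset}$, at which point the commit rule $\cut{\shift \cmd{p}{\reset}}{e}\tau'' \red \cmd{p}{e}\tau''$ fires. The crucial point is that each internal step $c\tau_1 \red c'\tau_2$ is mirrored, after replacing the tail occurrence of $\reset$ by $e$, by a step $c[e/\reset]\tau_1 \red c'[e/\reset]\tau_2$, because substituting the context $e$ for $\reset$ commutes with the reduction taking place in head position. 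Thus reduction sequences of $\cut{\shift c}{e}\tau$ are in length-preserving correspondence with reduction sequences of $c[e/\reset]\tau$, the single commit step being the only one without a counterpart; being strictly decreasing, it cannot on its own generate an infinite sequence. An infinite reduction of $\cut{\shift c}{e}\tau$ would therefore yield an infinite reduction of $c[e/\reset]\tau$, contradicting the hypothesis. The co-shift case, passing from $c[V/\coreset]\tau \in \pole_{\Downarrow}$ to $\cut{V}{\coshift c}\tau \in \pole_{\Downarrow}$, is handled identically, using the dual commit rule against the stored value $V$.

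The main obstacle is precisely this last clause: establishing the simulation between reducing a command underneath a $\shift$ (resp. $\coshift$) and reducing it once the context $e$ (resp. value $V$) has been substituted for $\reset$ (resp. $\coreset$), together with the verification that this substitution commutes with $\red$ so that no reduction is spuriously lost or created apart from the harmless commit step. The remaining two clauses are routine.
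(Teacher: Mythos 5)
Your proof is correct and takes essentially the same route as the paper's: the paper dispatches the first two closure conditions by citing the corresponding verification for the $\lbvtstar$-calculus in~\cite{MiqHer18}, and proves the delimited-continuation clause by exactly your case analysis (an internal step under $\shift$ is mirrored by a step of $c[e/\reset]\tau$, while the commit step lands on $c[e/\reset]\tau$ itself), arguing contrapositively that non-normalization transfers. The one point where you are slightly more assertive than the paper is the claim that substituting $e$ for $\reset$ commutes exactly with head reduction---the paper hedges with ``almost'', since a $\mu\alpha.c$ facing an $e_\reset$ triggers a linear substitution whereas after $[e/\reset]$ the context is no longer an $e_\reset$ and would be stored instead---but this discrepancy is harmless for the normalization argument and is glossed over at the same level of informality in the paper itself.
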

 \begin{figure*}[t]
  \myfig{
$
\begin{array}{>{\hspace{-0.2cm}}r@{~}c@{~}l@{~}|@{~}l}
     \fvf{\bot}	    & \defeq & \Lambda^\tau_f\\ 
     \fvf{\top}	    & \defeq & \emptyset\\ 
     \fvf{\dot F (t)}& \defeq & F(t)\\
 \fvf{\exists x.A}  & \defeq & \bigcap_{t\in\Lambda_t} \fvf{A[t/x]}\\
 \fvf{\forall x.A}  & \defeq & (\bigcap_{t\in\Lambda_t} \fvf{A[t/x]}^{\pole_v})^{\pole_f}\\
 \fvf{\forall a.A}  & \defeq & (\bigcap_{t\in\Lambda_p} \fvf{A[p/a]}^{\pole_v})^{\pole_f}\\
 \fvf{\nu^t_{fx}A}  & \defeq & \bigcup_{n\in\N}\fvf{F^n_{A,t}}\\ 
     \tvV{A} 	    & \defeq & \fvf{A}^{\pole_V} \\
     \fve{A} 	    & \defeq & \tvV{A}^{\pole_e} \\

     \multicolumn{4}{l}{}\\
     \tvVt{\N} 	 & \defeq & \{\tis{S^n(0)}{\tau}, n\in\N\}\\
 \tvVt{t\in T}   & \defeq & \{\tis{V_t}{\tau} \in \tvVt{T}: V_t \eqtau t \}\\
 \tvVt{T\to U} 	 & \defeq & \multicolumn{2}{l}{\{\tis{\lambda x .t}{\tau}  : \forall V_t \tau', \compat{\tau}{\tau'}\land \tis{V_t}{\tau'}\in\tvVt{T} \Rightarrow \tis{t[V_t/x]}{\overline{\tau\tau'}}\in\tvt{U}\}}\\
   \end{array}
   \hspace{-5.4cm}
   \begin{array}{r@{~}c@{~}l}
     \fvf{t=u}	    & \defeq & \begin{cases}\{\tis{\muteq.c}{\tau}: c\tau\in\pole\}& \text{if~} t\equiv_\tau u\\ \Lambda^\tau_f & \text{otherwise}\end{cases}\\
     \fvf{p\in A}   & \defeq & \{\tis{V}{\tau}\in\tvV{A}: V \eqtau p \}^{\pole_f} \\
     \fvf{T\imp B}  & \defeq & \{\tis{V_t\cdot e}{\tau} : \tis{V_t}{\tau}\in\tvVt{t\in T} \land \tis{e}{\tau}\in \fve{B}\}\\
     \fvf{A\imp B}  & \defeq & \{\tis{V\cdot e}{\tau}   : \tis{V}{\tau}\in\tvV{A} \land \tis{e}{\tau}\in \fve{B}\}\\
     \fvf{T\land A} & \defeq & \{\tis{\tmu(x,a).c}{\tau}: \forall \tau',V_t\in\tvVt{T}^{\tau'},{V}\in\tvV{A}^{\tau'},\compat{\tau}{\tau'}\Rightarrow c[V_t/x]\overline{\tau\tau'}[a:=V]\in\pole\}\\
 \fvf{A_1\land A_2} & \defeq & \{\tis{\tmu(a_1,a_2).c}{\tau}: \forall \tau',V_1\in \!\tvV{A_1}^{\tau'},V_2\in\!\tvV{A_2}^{\tau'},\compat{\tau}{\tau'}\Rightarrow c\overline{\tau\tau'}[a_1:=V_1][a_2:=V_2]\!\in\!\pole\}\\
 \fvf{A_1\lor A_2}  & \defeq & \{\tis{\tmu[a_1.c_1|a_2.c_2]}{\tau}: \forall\tau',{V}\in\tvV{A_i}^{\tau'},\compat{\tau}{\tau'}\Rightarrow c\overline{\tau\tau'}[a_i:=V]\in\pole\}\\
\tvp{A} 	    & \defeq & \fve{A}^{\pole_p} \\
     \\     
     \fvpi{T} 	 & \defeq & \tvVt{A}^{\pole_\pi}\\
     \tvt{T} 	 & \defeq & \fvpi{A}^{\pole_t}\,\\
              \\              
   \end{array}
   $
\vspace{-6em}   
\begin{flushright}\begin{tabular}{|l}
\begin{minipage}{0.35\textwidth}
where:
\begin{itemize}
 \item $p \in S^{\tau}$ (resp. $e$,$V$,etc.) denotes $\tis{p}{\tau} \in S$ (resp. $\tis{e}{\tau}$, $\tis{V}{\tau}$, etc.),
\item $F$ is a function from $\Lambda_t$ to $\P(\Lambda_f^\tau)_{/{\equiv_\tau}}$.
\end{itemize}
\end{minipage}
\end{tabular}
\end{flushright}
   }
   \caption{Realizability interpretation for \dlpaw}
   \label{fig:dlpaw_real}
   \end{figure*}

We finally recall the definition of the orthogonality relation w.r.t. a pole,
which is identical to the one for the $\lbvtstar$-calculus:
\begin{definition}[Orthogonality]
Given a pole $\pole$, we say that a {\cp} $\tis{p}{\tau}$ is {\em orthogonal} to a {\ce} $\tis{e}{\tau'}$
and write $\tis{p}{\tau}\orth\tis{e}{\tau'}$
if $\tau$ and $\tau'$ are compatible and $\cut{p}{e}\overline{\tau\tau'}\in\pole$.
The orthogonality between terms and co-terms is defined identically.
\end{definition}

We are now equipped to define the realizability interpretation of $\dlpaw$.
Firstly, in order to simplify the treatment of coinductive formulas, 
we extend the language of formulas with second-order variables $X,Y,\dots$ 
and we replace ${\nu^t_{fx} A}$ by $\nu^t_{Xx} A[X(y)/f(y)=0]$. 
The typing rule for co-fixpoint operators then becomes:
$$\infer[\cofixrule]{\sigmaopt\Gamma\sigdash \cofix{t}{bx}{p}: \nu^t_{Xx} A \optsigma}{
	\sigmaopt\Gamma\sigdash t:T  \optsigma
	& 
	\sigmaopt\Gamma,x:T,b:\forall y^T\!. X(y)\sigdash p:A  \optsigma
	& 
	X \notin\FV(\Gamma)
}$$
where $X$ has to be \text{ positive in } $A$.

Secondly, as in the interpretation of {\dltp} through Lepigre's calculus,
we introduce two new predicates, $p\in A$ for {\nef} proofs and $t\in T$ for terms.
This allows us to decompose the dependent products and sums into:
$$\hspace{-3.5mm}\begin{array}{c@{~}|@{\;}c}
\begin{array}{c@{~}c@{~}l}
\forall x^T\!.A &\defeq& \forall x.(x\in T \imp A)\qquad \\
\exists x^T\!.A &\defeq& \exists x.(x\in T \imp A)\qquad \\
\end{array}
&
\begin{array}{c@{~}c@{~}l}
\dptprod{a:A}{B}&\defeq& A \imp B \hfill(a\notin \FV(B))\\
\dptprod{a:A}{B}&\defeq& \forall a.(a\in A \imp B) \quad~(\text{otw.})\\
\end{array}
\end{array}
$$
This corresponds to the language of formulas and types defined by:
$$
\begin{array}{ccl}
T,U & ::= & \N \mid T \to U \mid t\in T \\
A,B  &::= &\top\mid \bot \mid X(t) \mid t = u\mid A\land B \mid A\lor B \\
&\mid& \forall x. A \mid \exists x. A \mid \forall a.A \mid \nu^t_{Xx}A \mid a\in A
\end{array}\leqno
\begin{array}{l}
\textbf{Types}\\
\textbf{Formulas}\\\\
\end{array}
$$
and to the following inference rules:
\setlength{\saut}{-0.5em}
$$
\begin{array}{c}
\infer[\autorule{\forall^a_r}]{\Gamma \sigdash v:\forall a.A}{\Gamma\sigdash v:A & a\notin \FV(\Gamma)}\qquad
\infer[\autorule{\forall^a_l}]{\Gamma \sigdash e:(\forall a.A)^\negt}{\Gamma\sigdash e:A[q/a] & q~\nef} 
\\\\[\saut]
\infer[\autorule{\forall^x_r}]{\Gamma \sigdash v:\forall x.A}{\Gamma\sigdash v:A & x\notin \FV(\Gamma)}\qquad
\infer[\autorule{\forall^x_l}]{\Gamma \sigdash e:(\forall x.A)^\negt}{\Gamma\sigdash e:A[t/x]}
\\\\[\saut]
\infer[\autorule{\exists^x_r}]{\Gamma \sigdash v:\exists x.A}{\Gamma\sigdash v:A[t/x]}\qquad
\infer[\autorule{\exists^x_l}]{\Gamma \sigdash e:(\exists x.A)^\negt}{\Gamma\sigdash e: A & x\notin \FV(\Gamma)}
\\\\[\saut]
\infer[\autorule{\in^p_r}]{\Gamma \sigdash p:p\in A}{\Gamma \sigdash p:A & p~\nef}           \qquad
\infer[\autorule{\in^p_l}]{\Gamma \sigdash e:(q\in A)^\negt}{\Gamma \sigdash e:A^\negt}     
\\\\[\saut]
\infer[\autorule{\in^t_r}]{\Gamma \sigdash t:t\in T}{\Gamma \sigdash t:T}                   \qquad
\infer[\autorule{\in^t_l}]{\Gamma \sigdash \pi:(t\in T)^\negt}{\Gamma \sigdash \pi:T^\negt}
\end{array}
$$
These rules are exactly the same as in Lepigre's calculus~\cite{Lepigre16} up to our stratified presentation in
a sequent calculus fashion, and modulo our syntactic restriction to {\nef} proofs instead of his semantical restriction.
It is a straightforward verification to check that the typability is maintained through the decomposition of dependent products and sums.

Another similarity with Lepigre's realizability model is that truth/falsity values 
will be closed under observational equivalence of proofs and terms. 
To this purpose, for each store $\tau$ we introduce the relation $\equiv_\tau$, which we define as 
the reflexive-transitive-symmetric closure of the relation $\typered_\tau$:
\begin{center}
\begin{tabular}{lcl}
   $t$ &$\typered_\tau$ &$t'\quad$ whenever $\quad \exists \tau',\forall \pi, (\cmd{t}{\pi}\tau\rightarrow\cmd{t'}{\pi}\tau'$ \\
   $p$ &$\typered_\tau$& $q \quad$ whenever $\quad \exists \tau',\forall f\,(\cmd{p}{f}\tau\rightarrow\cmd{q}{f}\tau')$ \\
\end{tabular}
\end{center}

All this being settled, it only remains to determine how to interpret coinductive formulas. 
While it would be natural to try to interpret them by fixpoints in the semantics,
this poses difficulties for the proof of adequacy\vshort{\footnote{See~\cite[Sec. 8.4.2]{these} for a discussion on this matter.}}.
\vlong{We discuss this matter in \Cref{app:cofix}, but as for now,
we will give a simpler interpretation.}
We stick to the intuition that since \texttt{cofix} operators are lazily evaluated,
they actually are realizers of every finite approximation of the (possibly infinite) coinductive formula.
Consider for instance the case of a stream: 
$$\texttt{str}^{0}_\infty {p}\defeq \cofix{0}{bx}{(p x,b(S(x)))}$$ of type $\nu^0_{Xx} A(x)\land X(S(x))$.
Such stream will produce on demand any tuple 
$(p0,(p1,...(pn,\square)...))$
where $\square$ denotes the fact that it could be any term, in particular $\texttt{str}_\infty^{n+1} {p}$.
Therefore, $\texttt{str}_\infty^0 p$ should be a successful defender of the formula 
$$(A(0)\land (A(1) \land ...(A(n)\land \top)...)$$
Since $\texttt{cofix}$ operators only reduce when they are bound to a variable in front of a forcing context,
it suggests interpreting the coinductive formula $\nu^0_{Xx} A(x)\land X(S(x))$ at level $f$ 
as the union of all the opponents to a finite approximation\vlong{~\footnote{See Appendix~\ref{app:cofix} for a discussion on this point.}}.

To this end, given a coinductive formula $\nu^0_{Xx} A$ where $X$ is positive in $A$, we define its finite approximations by:
$$F^0_{A,t} \defeq \top \qquad\qquad\qquad F^{n+1}_{A,t} \defeq A[t/x][F^{n}_{A,y}/X(y)]$$
Since $X$ is positive in $A$, we have for any integer $n$ and any term $t$
that $\fvf{F^{n}_{A,t}}\subseteq \fvf{F^{n+1}_{A,t}}$.
We can finally define the interpretation of coinductive formulas by:
$$\fvf{\nu^t_{Xx}A}   \defeq  \bigcup_{n\in\N}\fvf{F^n_{A,t}}$$

The realizability interpretation of closed formulas and types is defined in \Cref{fig:dlpaw_real}
by induction on the structure of formulas at level $f$, and by orthogonality at levels $V,e,p$.
When $S$ is a subset of $\P(\Lambda_p^\tau)$ (resp. $\P(\Lambda_e^\tau),\P(\Lambda_t^\tau),\P(\Lambda_\pi^\tau)$), 
we use the notation $S^{\pole_f}$ (resp. $S^{\pole_V}$, etc.)
to denote its orthogonal set restricted to $\Lambda_f^\tau$:
$$S^{\pole_f}\defeq \{\tis{f}{\tau}\in\Lambda^\tau_f : \forall \tis{p}{\tau'}\in S, \compat{\tau}{\tau'} \Rightarrow \cut{p}{f}\overline{\tau\tau'} \in\pole\}$$

At level $f$, closed formulas are interpreted by sets of strong forcing contexts-in-store $\tis{f}{\tau}$.
As explained earlier, these sets are besides closed under the relation $\equiv_\tau$ along their component $\tau$,
we thus denote them by $\P(\Lambda_f^\tau)_{/{\equiv_\tau}}$. 
Second-order variables $X,Y,\dots$ are then interpreted by functions from the set of terms $\Lambda_t$ to $\P(\Lambda_f^\tau)_{/{\equiv_\tau}}$
and as is usual in Krivine realizability~\cite{Krivine09}, for each such function $F$ we add a predicate symbol $\dot F$ in the language.

We shall now prove the adequacy of the interpretation with respect to the type system.
To this end, we need to recall a few definitions and lemmas. 
Since stores only contain proof terms, we need to define valuations for term variables in order 
to close formulas\footnote{Alternatively, we could have modified the small-step reduction rules to include substitutions of terms.}.
These valuations are defined by the usual grammar:
$$\rho ::= \varepsilon \mid \rho[x\mapsto V_t]\mid \rho[X\mapsto \dot F]$$
We denote by $\tis{p}{\tau}_\rho$ (resp. $p_\rho$, $A_\rho$) the proof-in-store $\tis{p}{\tau}$ where
all the variables $x\in\dom(\rho)$ (resp. $X\in\dom(\rho)$) have been substituted by the corresponding term $\rho(x)$ 
(resp. falsity value $\rho(x)$).

\begin{definition}Given a closed store $\tau$, a valuation $\rho$ and a fixed pole $\pole$, 
we say that the pair $(\tau,\rho)$ \emph{realizes} $\Gamma$, which we write\footnote{Once again, 
we should formally write $(\tau,\rho)\real_{\!\!\pole}\!\Gamma$ but we will omit the annotation by $\pole$ as often as possible.} $(\tau,\rho) \Vdash \Gamma$, if:
\begin{enumerate}
 \item for any $(a:A) \in\Gamma$, $\tis{a}{\tau}_\rho \in \tvV{A_\rho}$,
 \item for any $(\alpha:A_\rho^\negt) \in\Gamma$, $\tis{\alpha}{\tau}_\rho \in \fve{A_\rho}$,
 \item for any $\dpt{a|p}\in\sigma$, $a\equiv_{\tau} p$,
 \item for any $(x:T) \in\Gamma$, $x\in\dom(\rho)$ and $\tis{\rho(x)}{\tau} \in \tvVt{T_\rho}$.
 \end{enumerate}
 \label{def:store_real}
\end{definition}

%
%


We can check that the interpretation is indeed defined up to the relations $\equiv_\tau$:
\begin{proposition}\label{lm:dlpaw:equiv}
 For any store $\tau$ and any valuation $\rho$, the component along $\tau$ of the truth and falsity values defined in \Cref{fig:dlpaw_real}
 are closed under the relation $\equiv_\tau$:
 \begin{enumerate}
 \item if $\tis{f}{\tau}_\rho \in \fvf{A_\rho}$ and $A_\rho\equiv_\tau B_\rho$, then $\tis{f}{\tau}_\rho\in\fvf{B_\rho}$,
 \item if $\tis{V_t}{\tau}_\rho \in \tvVt{A_\rho}$ and $A_\rho\equiv_\tau B_\rho$, then $\tis{V_t}{\tau}_\rho\in\tvv{B_\rho}$.
\end{enumerate}
The same applies with $\tvp{A_\rho}$, $\fve{A_\rho}$, etc.
\end{proposition}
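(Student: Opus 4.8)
The plan is to prove both statements simultaneously by induction on the structure of the formula $A$ (respectively the type $T$), after first reducing everything to the two levels at which the interpretation is defined directly, namely $\fvf{\cdot}$ for formulas and $\tvVt{\cdot}$ for types. Indeed, all the remaining levels are obtained by iterated orthogonality: $\tvV{A}=\fvf{A}^{\pole_V}$, $\fve{A}=\tvV{A}^{\pole_e}$, $\tvp{A}=\fve{A}^{\pole_p}$, and likewise $\fvpi{T}=\tvVt{T}^{\pole_\pi}$ and $\tvt{T}=\fvpi{T}^{\pole_t}$. Since equal sets have equal orthogonals, it is enough to prove the strengthened statement that $\fvf{A_\rho}=\fvf{B_\rho}$ and $\tvVt{T_\rho}=\tvVt{U_\rho}$ whenever $A_\rho\equiv_\tau B_\rho$ (resp. $T_\rho\equiv_\tau U_\rho$); closure of the other levels then follows automatically, and since $\equiv_\tau$ is symmetric this set-equality is in turn equivalent to the membership-preservation stated in the proposition. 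Moreover, as $\equiv_\tau$ is the reflexive-transitive-symmetric closure of $\typered_\tau$, and $\typered_\tau$ only rewrites the terms and proofs embedded inside a formula, the relation $A_\rho\equiv_\tau B_\rho$ preserves the logical skeleton of the formula; it therefore suffices to treat a single $\typered_\tau$-step on one embedded subterm or subproof, the general case following by transitivity.

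For the base cases, closure is immediate from the way $\equiv_\tau$ is already built into the definitions of \Cref{fig:dlpaw_real}. The constant interpretations $\fvf{\top}=\emptyset$ and $\fvf{\bot}=\Lambda_f^\tau$ are trivially preserved. For $\fvf{t=u}$, a rewrite turning $(t=u)_\rho$ into $(t'=u')_\rho$ satisfies $t_\rho\equiv_\tau t'_\rho$ and $u_\rho\equiv_\tau u'_\rho$, hence $t_\rho\equiv_\tau u_\rho$ holds iff $t'_\rho\equiv_\tau u'_\rho$ does, so both formulas select the same branch of the two-case definition. For $\fvf{p\in A}$ and $\tvVt{t\in T}$, the defining conditions $V\eqtau p$ and $V_t\eqtau t$ are preserved by transitivity of $\equiv_\tau$, while the ambient sets $\tvV{A}$ and $\tvVt{T}$ are handled by the induction hypothesis; the cases $\tvVt{\N}$ and $\tvVt{T\to U}$ are similar, the latter using the hypothesis on $T$ and $U$. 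For a second-order atom $\fvf{\dot F(t)}=F(t)$, we use that the interpreting functions $F$ range in $\P(\Lambda_f^\tau)_{/\equiv_\tau}$ and respect $\equiv_\tau$ on their argument, so that $t\equiv_\tau t'$ yields $F(t)=F(t')$. The inductive cases for $\land$, $\lor$, $\imp$, $\forall x$, $\exists x$ and $\forall a$ all follow by applying the induction hypothesis to the immediate subformulas: the compound definitions only refer to strictly smaller formulas through the lower levels ($\tvV{A_i}$, $\fve{B}$, etc.), which are closed by the orthogonality reduction above. For the quantifier cases one observes that the bound variable is untouched by a $\typered_\tau$-rewrite, hence $A[t/x]_\rho\equiv_\tau B[t/x]_\rho$ for every instance $t$, so the intersections and orthogonals defining $\fvf{\forall x.A}$ and $\fvf{\exists x.A}$ coincide.

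The delicate case, and the one I expect to be the main obstacle, is the coinductive formula $\nu^t_{Xx}A$, interpreted as $\bigcup_{n\in\N}\fvf{F^n_{A,t}}$. A plain structural induction does not apply directly, since the approximation $F^{n+1}_{A,t}=A[t/x][F^n_{A,y}/X(y)]$ is syntactically larger than $A$ itself. The plan is to run a nested induction on $n$, establishing $\fvf{(F^n_{A,t})_\rho}=\fvf{(F^n_{B,t'})_\rho}$ whenever $A_\rho\equiv_\tau B_\rho$ and $t\equiv_\tau t'$, which is exactly what a skeleton-preserving $\typered_\tau$-rewrite of $\nu^t_{Xx}A$ provides. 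The base case $F^0=\top$ is trivial. For the step, the inner induction hypothesis gives that the two functions $y\mapsto\fvf{F^n_{A,y}}$ and $y\mapsto\fvf{F^n_{B,y}}$ are equal; extending the valuation $\rho$ by mapping $X$ to this common function, the identity $\fvf{(F^{n+1}_{A,t})_\rho}=\fvf{(F^{n+1}_{B,t'})_\rho}$ reduces, through this single second-order valuation, to the outer structural induction hypothesis applied to the strict subformula $A$ together with $t\equiv_\tau t'$. Taking unions over $n$ then yields the equality of the coinductive interpretations, which completes the induction and, by the orthogonality reduction of the first paragraph, the proposition.
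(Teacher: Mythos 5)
Your proof is correct and follows essentially the same route as the paper's, which simply states that the result holds ``by induction on the structure of $A_\rho$ and the different levels of interpretation,'' with the base cases ($p\in A$, $t\in T$, $t=u$) direct because their definitions are already given modulo $\equiv_\tau$, and the remaining cases being routine inductions. Your additional care on the coinductive case --- the nested induction on $n$ over the approximations $F^n_{A,t}$, routed through a second-order valuation for $X$ to stay within the structural induction hypothesis --- fills in a detail the paper leaves implicit, but does not change the overall strategy.
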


We can now prove the main property of our interpretation:

\begin{proposition}[Adequacy]\label{prop:dlpaw:adequacy}
  The typing rules are adequate with respect to the realizability interpretation,
  \emph{i.e.} typed proofs (resp. values, terms, contexts, etc.) belong to the corresponding truth values.
\end{proposition}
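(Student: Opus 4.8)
The plan is to prove adequacy by a single mutual induction on the typing derivations, covering all nine kinds of judgments at once. The invariant carried through the induction is the following: for any derivation of $\Gamma\sigdash J$ (resp. $\Gamma\vdash_d J;\sigma$) and any pair $(\tau,\rho)$ with $(\tau,\rho)\real\Gamma$ whose dependencies agree with $\sigma$ in the sense of item~3 of \Cref{def:store_real}, the denotation of the subject lands in the value dictated by \Cref{fig:dlpaw_real}: a proof $p:A$ gives $\tis{p_\rho}{\tau}\in\tvp{\sigma(A)_\rho}$, a context $e:A^\negt$ gives $\tis{e_\rho}{\tau}\in\fve{\sigma(A)_\rho}$, a term $t:T$ gives $\tis{t_\rho}{\tau}\in\tvt{\sigma(T)_\rho}$, a command or closure lands in $\pole$, and a well-typed store $\tau'$ against $(\Gamma';\sigma')$ yields $(\overline{\tau\tau'},\rho)\real\Gamma,\Gamma'$. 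Instantiating $\pole$ with $\pole_\Downarrow$ of \Cref{prop:dlpaw:norm_pole} then yields normalization as an immediate corollary.

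The structural and logical cases (axioms, cut, pairs, injections, $\lambda$-abstractions and their dual forcing contexts, $\mu$ and $\tmu$ binders) follow the familiar Krivine pattern: one unfolds the definition of the relevant truth or falsity value, feeds the inductive hypotheses into it, and closes the command using the fact that $\pole$ is closed by anti-reduction and by store extension. The rules manipulating the store ($\lrule$, $\tauprule$, $\tauerule$) are handled with \Cref{lm:st_union}, which guarantees that passing to a compatible union of stores preserves membership in the values; this is precisely what lets a binding $[a:=V]$ added to $\tau$ extend a realizer of $\Gamma$ into one of $\Gamma,a:A$. For the dependent fragment, the bookkeeping of the list of dependencies is discharged using the admissibility of dependency weakening (\Cref{prop:dlpaw:dpt_weak}) together with the closure of the values under $\eqtau$ (\Cref{lm:dlpaw:equiv}): whenever a value $V$ is stored for a variable $a$, the relation $a\eqtau V$ holds in the store, so $\sigma\dpt{a|V}$ substitutes soundly and the mismatch between $B[a]$ and $B[V]$ created by the reduction of dependent products is absorbed. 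The rules for delimited and co-delimited continuations ($\shiftrule$, $\resetrule$ and their co-duals) are exactly where the extra closure of $\pole$ under (co-)delimited continuations is consumed.

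The main obstacle is the treatment of the coinductive fixpoint, and dually of the inductive one, under lazy evaluation. After the cofix is lazily pushed onto the store by the laziness rule, I must show $\tis{a}{\tau[a:=\cofix{V_t}{bx}{p}]}\in\tvV{\nu^t_{Xx}A}$. Unfolding $\tvV{\cdot}=\fvf{\cdot}^{\pole_V}$ and the interpretation $\fvf{\nu^t_{Xx}A}=\bigcup_{n}\fvf{F^n_{A,t}}$, it suffices to confront the stored cofix with an arbitrary opponent $\tis{f}{\tau'}\in\fvf{F^n_{A,t}}$ and prove the resulting command is in $\pole$. I expect to do this by induction on the approximation level $n$: the lookup rule for cofix unfolds one layer, substituting a fresh $b'$ bound in the store to $\lambda y.\cofix{y}{bx}{p}$, and the typing premise for $p$ together with the induction hypothesis shows that the unfolded body defends $F^{n+1}_{A,t}$; the positivity of $X$ in $A$ provides the monotonicity $\fvf{F^n_{A,t}}\subseteq\fvf{F^{n+1}_{A,t}}$ that makes the chain of approximations coherent. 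Threading the freshly extended stores through this argument while staying inside $\pole$ — using store-extension closure at each unfolding — is the delicate point; the inductive fixpoint is handled by the symmetric argument, recursing on the numeral $S^n(0)$ rather than on an abstract approximation index.
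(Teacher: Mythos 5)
Your proposal is correct and follows essentially the same route as the paper's proof: a mutual induction on typing derivations over all levels of the interpretation in the small-step system, consuming the pole's closure under anti-reduction, store extension and (co-)delimited continuations in the expected places, and handling the stored (co-)fixpoints by an inner induction on the approximation level $n$ (resp. on the numeral $S^n(0)$) via the lookup rule, with positivity supplying the monotonicity $\fvf{F^n_{A,t}}\subseteq\fvf{F^{n+1}_{A,t}}$. The only detail left tacit in your sketch that the paper makes explicit is that the adequacy hypothesis for the body $p$ of the cofix is invoked at the valuation sending $X$ to the function $t\mapsto\fvf{F^n_{A,t}}$, which is what turns the one-step unfolding into a defender of $F^{n+1}_{A,V_t}$.
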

\begin{proof}
By induction on typing derivations such as given in the system
extended for the small-step reduction. 
\vlong{See Appendix~\ref{app:adequacy}.}\qedhere
\end{proof}

We can finally deduce that
$\dlpaw$ is normalizing and sound.
\begin{theorem}[Normalization]	
If~ $\Gamma \sigdash c$, then $c$ is normalizable. 
\end{theorem}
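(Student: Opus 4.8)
The plan is to read off normalization as a direct corollary of the adequacy of the realizability interpretation, instantiated with the pole of normalizing closures. By the time we reach this statement all the hard work has been carried out in \Cref{prop:dlpaw:adequacy}, so the only remaining task is to choose the right pole and to exhibit a store and a valuation realizing the typing context $\Gamma$.

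Concretely, I would fix once and for all the pole $\pole \defeq \pole_{\Downarrow} = \{c\tau \in \C_0 : c\tau \text{ normalizes for } \reds\}$, which is a legitimate pole by \Cref{prop:dlpaw:norm_pole}. It then suffices to produce a pair $(\tau,\rho)$ with $(\tau,\rho)\Vdash\Gamma$, since \Cref{prop:dlpaw:adequacy} applied to the derivation of $\Gamma\sigdash c$ immediately yields $c\tau_\rho \in \pole_{\Downarrow}$, i.e. that $c$ normalizes for $\reds$. Such a realizing pair is built declaration by declaration, following \Cref{def:store_real}: each term variable $x:T$ is sent by $\rho$ to a closed value of $\tvVt{T}$ — these exist for every finite type, taking $0$ for $\N$ and a constant abstraction $\lambda x.V_t$ for an arrow type, which realizes $T\to U$ as soon as $V_t\in\tvVt{U}$; each second-order variable $X$ is mapped to an arbitrary function into $\P(\Lambda_f^\tau)_{/\equiv_\tau}$; and each proof (resp. context) variable is realized in $\pole_{\Downarrow}$ using the crucial feature of this particular pole, namely that a command whose head is an unresolved variable is stuck and hence trivially normalizing, so that free variables behave as universal realizers.

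Once $c\tau_\rho$ is known to normalize for the small-step relation $\reds$, I would conclude with \Cref{prop:reduction}: since the small-step system simulates the big-step one and preserves normalization, $c$ also normalizes for $\red$, which is the desired statement.

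The main obstacle is really concentrated upstream, in the adequacy proposition, whose proof proceeds by induction on the extended (small-step) typing derivations and must handle the interplay of the list of dependencies, of the delimited and co-delimited continuations, and of the lazy (co-)fixpoints through their finite approximations $F^n_{A,t}$. Relative to that, the final argument is light; the only genuine care needed here is the construction of an adequate $(\tau,\rho)$ for an arbitrary context $\Gamma$ — in particular checking that the canonical term realizers indeed inhabit $\tvVt{T}$ for every finite type $T$ — together with the observation that normalization must first be established for $\reds$ and only then transported to $\red$.
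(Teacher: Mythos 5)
Your proof takes exactly the paper's route: the paper's own argument is the one-line observation that normalization is a direct consequence of \Cref{prop:dlpaw:norm_pole} (the normalizing closures form a pole) and \Cref{prop:dlpaw:adequacy} (adequacy). The additional details you supply --- constructing a realizing pair $(\tau,\rho)$ for an arbitrary $\Gamma$ and transporting normalization from $\reds$ back to $\red$ via \Cref{prop:reduction} --- merely make explicit what the paper leaves implicit.
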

\begin{proof}
Direct consequence of \Cref{prop:dlpaw:norm_pole,prop:dlpaw:adequacy}.
\end{proof}


\begin{theorem}[Consistency]\label{thm:consistency}
$\nvdash_{\text{\dlpaw}}p: \bot$
\end{theorem}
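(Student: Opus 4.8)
The plan is to derive consistency as a direct corollary of normalization together with adequacy, using the realizability interpretation to rule out any closed proof of $\bot$. First I would argue by contradiction: suppose there were a closed proof $p$ with $\vdash_{\text{\dlpaw}} p:\bot$. Since the judgment is closed, the typing context is empty, and in particular the empty store and empty valuation trivially realize it, i.e. $(\varepsilon,\varepsilon)\Vdash \varepsilon$. By the Adequacy proposition (\Cref{prop:dlpaw:adequacy}), applied with respect to any fixed pole $\pole$, the proof $p$ then belongs to the truth value of its type, so $\tis{p}{\varepsilon}\in\tvp{\bot}$.

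The key step is to choose the pole appropriately and to inspect the interpretation of $\bot$. Taking $\pole=\pole_{\Downarrow}$, the set of normalizing closures, which is a genuine pole by \Cref{prop:dlpaw:norm_pole}, makes every realizer normalizing. Now I would examine the falsity value at level $f$: by definition in \Cref{fig:dlpaw_real}, $\fvf{\bot}=\Lambda^\tau_f$ contains every forcing context, and in particular the distinguished context $[]$ of type $\bot^\negt$. Thus $\tis{[]}{\varepsilon}\in\fvf{\bot}\subseteq\fve{\bot}$. Since $\tis{p}{\varepsilon}\in\tvp{\bot}=\fve{\bot}^{\pole_p}$, the proof-in-store $\tis{p}{\varepsilon}$ is orthogonal to $\tis{[]}{\varepsilon}$, which by the definition of orthogonality means that the closure $\cut{p}{[]}\,\varepsilon$ lies in $\pole=\pole_{\Downarrow}$, hence normalizes.

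It then remains to observe that a normal form of $\cut{p}{[]}\,\varepsilon$ cannot exist, yielding the contradiction. Here I would inspect the reduction rules of \Cref{fig:bigstep}: the context $[]$ is a forcing context, so against any value it demands a matching constructor, but there is no elimination rule whose right-hand context is $[]$ (it is only ever introduced to reason ex-falso, never eliminated), and the only way to feed it a proof is through the call-by-value, laziness, and lookup phases, none of which ever produces a terminal configuration with $[]$ in control. I expect the main obstacle to be making this last combinatorial analysis fully rigorous: one must trace the small-step descent from $p$ through the store until a strong value confronts $[]$, and check by a case analysis on the shape of that value that the command is neither a normal form nor able to reduce, so that normalization is contradicted. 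This argument mirrors the standard Krivine-style consistency proof, where the empty type has no closed realizer, adapted here to the proofs-in-store setting.
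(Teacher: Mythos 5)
There is a genuine gap in the final step, and it comes from your choice of pole. Your opening moves are fine and match the paper: a closed proof $p$ of $\bot$ is typed in the empty context, the empty store and valuation realize that context, and adequacy (\Cref{prop:dlpaw:adequacy}) gives $\tis{p}{\varepsilon}\in\tvp{\bot}$ \emph{for any pole}. But you then fix $\pole=\pole_{\Downarrow}$ and conclude that $\cut{p}{[]}\,\varepsilon$ normalizes, hoping to contradict this by showing the command can never reach a normal form. That last step cannot work as stated: a configuration that ``is neither a normal form nor able to reduce'' does not exist, since a normal form is by definition a configuration from which no reduction applies. A command that gets stuck against $[]$ \emph{is} normalizing, so membership of $\cut{p}{[]}\,\varepsilon$ in $\pole_{\Downarrow}$ yields no contradiction whatsoever; indeed, under the normalization pole there is no reason for $\tvp{\bot}$ to be empty, and plenty of stuck terms will be orthogonal to everything.

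The fix is to exploit the quantification over poles rather than the structure of $\pole_{\Downarrow}$. The paper takes $\pole\defeq\emptyset$, which vacuously satisfies all three closure conditions and is therefore a valid pole. With this pole, $\fvf{\bot}=\Lambda^\tau_f$ is nonempty, so its orthogonal $\tvV{\bot}$ is empty, hence $\fve{\bot}=\Lambda^\tau_e$ is everything, and $\tvp{\bot}=\fve{\bot}^{\pole_p}=\emptyset$ since no cut can land in the empty pole. This directly contradicts $\tis{p}{\varepsilon}\in\tvp{\bot}$, with no combinatorial analysis of the reduction rules needed. If you want to keep a pole in the spirit of ``observing the behaviour of $\cut{p}{[]}$'', you would have to replace ``normalizes'' by a property that excludes stuck states (e.g.\ reaching a distinguished correct final configuration), which is exactly the strengthening your proposal is missing.
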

\begin{proof}
 Assume there is such a proof $p$, by adequacy $\tis{p}{\varepsilon}$ is in $\tvp{\bot}$ for any pole.
 Yet, the set $\pole \defeq \emptyset$ is a valid pole, and with this pole, $\tvp{\bot}=\emptyset$, which is absurd.
\end{proof}

\section{Conclusion and perspectives}
\paragraph{Conclusion}
At the end of the day, we met our main objective, namely proving the soundness and the normalization 
of a language which includes proof terms for dependent and countable choice in a classical setting.
This language, which we called {\dlpaw}, provides us with the same
computational features as {\dpaw} but in a sequent calculus fashion.
These computational features allow {\dlpaw} to internalize the realizability approach of~\cite{BerBezCoq98,EscOli14} 
as a direct proofs-as-programs interpretation: both proof terms for countable and dependent choices 
furnish a lazy witness for the ideal choice function which is evaluated on demand. 
This interpretation is in line with the slogan that 
with new programing principles---here the lazy evaluation and the co-inductive objects---come
new reasoning principles---here the axioms $AC_\N$ and $DC$.

Interestingly, in our search for a proof of normalization for {\dlpaw}, we developed novel tools to study 
these side effects and dependent types in presence of classical logic.
On the one hand, we set out in \cite{Miquey17} the difficulties related to the definition of a sequent calculus with dependent types. 
On the other hand, building on ~\cite{MiqHer18}, we developed a variant of Krivine realizability adapted to
a lazy calculus where delayed substitutions are stored in an explicit environment.
The sound combination of both frameworks led us to the definition of {\dlpaw} together with its realizability interpretation.

\paragraph{
Krivine's interpretations of dependent choice}\label{s:dlpaw_quote}
The computational content we give to the axiom of dependent choice is pretty different of 
Krivine's usual realizer of the same~\cite{Krivine03}. 
Indeed, our proof uses dependent types to get witnesses of existential formulas, 
and we represent choice functions through the lazily evaluated stream of their values~\footnote{A similar
idea can be found in NuPrl BITT type theory, where choice sequences are used in place of functions~\cite{CohEtAl18}.}. 
In turn, Krivine realizes a statement which is logically equivalent to the axiom of dependent choice
thanks to the instruction \automath{\Quote}, which injectively associates a natural number 
to each closed $\lambda_c$-term.
In a more recent work~\cite{Krivine16}, Krivine proposes a realizability model which has a bar-recursor
and where the axiom of dependent choice is realized using the bar-recursion. 
This realizability model satisfies the continuum hypothesis and many more properties, 
in particular the real numbers have the same properties as in the ground model. 
However, the very structure of this model, where ${\bm \Lambda}$ is of cardinal $\aleph_1$ (in particular infinite streams of integer are terms),
makes it incompatible with $\Quote$.

It is clear that the three approaches are different in terms of programming languages. 
Nonetheless, it could be interesting to compare them from the point of view of the realizability models they give rise to.
In particular, our analysis of the interpretation of co-inductive formulas\vlong{\footnote{See also Appendix~\ref{app:cofix}.}} may suggest that
the interest of lazy co-fixpoints is precisely to approximate the limit situation where $\bm \Lambda$ has infinite objects.

\paragraph{Reduction of the consistency of classical arithmetic in 
  finite types with dependent choice to the consistency of
  second-order arithmetic}
The standard approach to the computational content of classical
dependent choice in the classical arithmetic in finite types is via
realizability as initiated by Spector~\cite{Spector62} in the context
of Gödel's functional interpretation, and later adapted to the context
of modified realizability by Berardi {\em et al}~\cite{BerBezCoq98}. 
The aforementioned works of Krivine~\cite{Krivine03,Krivine16} in the different settings of PA2 and ZF$_\varepsilon$ 
also give realizers of dependent choice.
In all these approaches, the correctness of the realizer, which
implies consistency of the system, is itself justified by a use at the
meta-level of a principle classically equivalent to dependent choice
(dependent choice itself in~\cite{Krivine03}, bar induction or update
induction~\cite{Berger04} in the case of \cite{Spector62,BerBezCoq98}.).

Our approach is here different, since we directly interpret proofs
of dependent choice in classical arithmetic computationally.
Besides, the structure of our realizability interpretation for {\dlpaw} suggests
the definition of a typed CPS to an extension of system $F$\vshort{~\cite{these}}\vlong{\footnote{See \cite[Chapitre 8]{these} for further details.}},
but it is not clear whether its consistency is itself conservative or not over system $F$.
Ultimately, we would be interested in a computational reduction 
of the consistency of {\dpaw} or {\dlpaw} to the one of PA2, that is 
to the consistency of second-order arithmetic.
While it is well-known that $DC$ is conservative over second-order arithmetic 
with full comprehension  (see \cite[Theorem VII.6.20]{Simpson09}),
it would nevertheless be very interesting to have such a direct computational reduction.
The converse direction has been recently studied by Valentin Blot, who presented in~\cite{Blot17} 
a translation of System F into a simply-typed total language with a variant of bar recursion.

%
\vspace{-0.3em}
\begin{acks}
 The author warmly thanks Hugo Herbelin for numerous discussions 
 and attentive reading of this work during his PhD years.
\end{acks}

\vspace{-0.3em}
\bibliographystyle{abbrvurl}
\vlong{\bibliography{biblio}}
\vshort{\bibliography{biblio-opt}}

\vlong{ 
\newpage
\appendix
\onecolumn
\emph{Since most of the proofs contain typing derivations, we switch to a one column format to ease their display.}

\section{Subject reduction (Proofs of Section~\ref{s:sub_red})}
\label{app:sub_red}

We detail here the proof of subject reduction for \dlpaw.
Recall that we define the relation $\sigma\dptimp \sigma'$
between lists of dependencies by:
 $$\sigma \dptimp \sigma' ~\defeq~\sigma(A) =\sigma(B) \limp \sigma'(A) =\sigma'(B) \eqno(\text{for any}~ A,B)$$
We first show that the cases which we encounter in the proof of subject reduction satisfy this relation:
\begin{lemma}[Dependencies implication]The following holds for any $\sigma,\sigma',\sigma''$:\vspace{-0.5em}
\begin{multicols}{2}
 \begin{enumerate}
 \item $\sigma\sigma'' \dptimp \sigma\sigma'\sigma'$
 \item $\sigma\dpt{(a_1,a_2)|(V_1,V_2)} \dptimp \sigma\dpt{a_1|V_1}\dpt{a_2|V_2}$
 \item $\sigma\dpt{\injec i a|\injec i V} \dptimp \sigma\dpt{a|V}$
 \item $\sigma\dpt{(x,a)|(t,V)} \dptimp \sigma\dpt{a|V}\dpt{x|t}$
 \item $\sigma\rdpt{(p_1,p_2)} \dptimp \sigma\dpt{a_1|p_1}\dpt{a_2|p_2}\rdpt{(a_1,a_2)}$
 \item $\sigma\rdpt{\injec i p} \dptimp \sigma\dpt{a|p}\rdpt{\injec i a}$
 \item $\sigma\rdpt{(t,p)} \dptimp \sigma\dpt{a|p}\rdpt{(t,a)}$
\end{enumerate}
\end{multicols}
\noindent where the fourth item abuse the definition of list of dependencies to include a substitution of terms.
\label{lm:dlpaw:dpt_imp}
\end{lemma}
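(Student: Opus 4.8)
The plan is to unfold the definition of $\dptimp$ and reduce everything to a statement about the single substitution that a list of dependencies induces on a formula. Reading the defining clause $\sigma\dpt{p|q}(A)=\sigma(A[q/p])$ (when $q\in\nef$) from right to left, an easy induction shows that $\sigma(A)$ is always a singleton $\{\widehat\sigma(A)\}$, where $\widehat\sigma$ is the composite substitution obtained by applying the clauses of $\sigma$ in reverse order; hence $\sigma(A)=\sigma(B)$ is just $\widehat\sigma(A)=\widehat\sigma(B)$, and $\sigma_1\dptimp\sigma_2$ means exactly $\ker\widehat{\sigma_1}\subseteq\ker\widehat{\sigma_2}$. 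I will record two facts up front: (i) every value $V$ is \nef{} (immediate induction on the grammar of values, as the base cases $a,\lambda x.p,\lambda a.p,\refl$ are \nef{} and $\injec i V,(V,V),(V_t,V)$ preserve \nef{}ness), so that value-dependencies $\dpt{a|V}$ always fire; and (ii) post-composition refines, i.e. $\ker\widehat\sigma\subseteq\ker(\phi\circ\widehat\sigma)$ for every $\phi$, so that in particular any two lists inducing the same substitution are $\dptimp$-equivalent.

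For the pattern items (2)--(4) I will check that both sides induce the very same substitution. By definition of the generalized substitution, replacing a constructor pattern by a matching constructor of values, e.g. $A[(V_1,V_2)/(a_1,a_2)]$, is the simultaneous substitution $A[V_1/a_1,V_2/a_2]$; since $a_1,a_2$ are freshly bound they do not occur in $V_1,V_2$ and are distinct, so this coincides with the sequential substitution $A[V_2/a_2][V_1/a_1]$ performed by the right-hand list. As all values are \nef{}, every clause fires on both sides, the induced maps coincide, and the relation follows (indeed as an equivalence); the cases $\injec i V$ and $(V_t,V)$ are identical.

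The reset items (5)--(7) are the dual phenomenon: the right-hand list first rebuilds the constructor over fresh variables through the reset-dependency $\rdpt{(a_1,a_2)}$ and then eliminates those variables via the $p_i$. Processing right to left, $\cdot$ is replaced by $(a_1,a_2)$ and then $a_1,a_2$ by $p_1,p_2$; because $a_1,a_2$ are fresh and occur nowhere else, this reconstructs exactly the single reset-substitution $[(p_1,p_2)/\cdot]$ of the left-hand list, the \nef{}-side conditions matching since a constructor is \nef{} precisely when its immediate sub-proofs are. When some $p_i$ is not \nef{} the two induced maps differ, but freshness lets the reset-substitution commute with $\widehat\sigma$, which still gives the required kernel inclusion, and I will spell this commutation out.

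The remaining item, the insertion of an extra block $\sigma'$ between $\sigma$ and $\sigma''$, is the delicate one and the main obstacle: $\dptimp$ is \emph{not} preserved by arbitrary middle insertions, since relative to $\sigma''$ the block $\sigma'$ is post-composed but relative to $\sigma$ it is pre-composed. The resolution is the standing freshness convention, namely that the variables bound in $\sigma'$ are fresh and hence occur neither in the source formula nor in any value of $\sigma''$; they are therefore absent from $\widehat{\sigma''}(A)$, on which $\widehat{\sigma'}$ then acts as the identity. Consequently $\widehat\sigma\circ\widehat{\sigma'}\circ\widehat{\sigma''}$ and $\widehat\sigma\circ\widehat{\sigma''}$ agree on every formula and the relation holds. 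I expect the care required to phrase and exploit this freshness invariant, and to keep the right-to-left processing order consistent throughout, to be the crux of the whole argument.
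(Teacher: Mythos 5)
Your reading of the definition is the right one: each list induces a single composed substitution $\widehat\sigma$, obtained by applying the bindings to $A$ from the rightmost outwards, so that $\sigma_1\dptimp\sigma_2$ is the kernel inclusion $\ker\widehat{\sigma_1}\subseteq\ker\widehat{\sigma_2}$. Your treatment of items 2--7 is sound and fills in exactly what the paper leaves implicit (its own proof is the one-liner ``trivial from the definition of $\sigma(A)$''): for the pattern items both lists induce the same substitution because the pattern variables are fresh for the values, and for the reset items the non-\nef{} subcase is correctly handled by commuting the residual substitution past $\widehat\sigma$ and then post-composing.

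Your argument for item 1, however, has a genuine gap. You resolve the ``sandwiched'' position of $\sigma'$ in $\widehat\sigma\circ\widehat{\sigma'}\circ\widehat{\sigma''}$ by claiming that the variables of $\dom(\sigma')$ occur neither in $A,B$ nor in the values of $\sigma''$, so that $\widehat{\sigma'}$ is the identity on $\widehat{\sigma''}(A)$. That invariant fails precisely where the lemma is applied: in the subject-reduction case for $\tmu$, the inserted block is $\dpt{a|V}$ with $a$ the variable bound by $\tmu a.c\tau_1$; the trailing block $\sigma''=\sigma_1$ comes from the store $\tau_1$, which is typed in a context containing $a:A$, so its values may mention $a$, and the formulas occurring in the derivation of $c$ certainly may. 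If your freshness claim did hold, the inserted binding would act trivially and the weakening proposition would be useless --- the whole point of inserting $\dpt{a|V}$ is that it does rewrite formulas containing $a$. The repair combines the two facts you state but do not put together: the variables of $\dom(\sigma')$ are distinct from those of $\dom(\sigma)$ and absent from the \emph{values} of $\sigma$ (these are typed before $\sigma'$'s variables are introduced), so the substitution lemma gives $\widehat\sigma(X[q/a])=\widehat\sigma(X)[\widehat\sigma(q)/a]$ for each binding $\dpt{a|q}$ of $\sigma'$; this converts the sandwiched $\widehat{\sigma'}$ into a post-composition by $[\widehat\sigma(q)/a]$, and post-composition preserves the hypothesis $\widehat\sigma(\widehat{\sigma''}(A))=\widehat\sigma(\widehat{\sigma''}(B))$. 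This is the same commutation you already invoke for items 5--7, so the fix is local, but as written the proof of item 1 does not establish the case the paper actually needs.
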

\begin{proof}
 All the properties are trivial from the definition of the substitution $\sigma(A)$.
\end{proof}

We can now prove that the relation $\dptimp$ indeed matches the expected intuition:
\begin{repproposition}{prop:dlpaw:dpt_weak} [Dependencies weakening]
 If $\sigma,\sigma'$ are two dependencies list such that $\sigma\dptimp\sigma'$, then any derivation using $\sigma$ can be one using $\sigma'$ instead. 
 In other words, the following rules are admissible:
 $$
 \infer[\weakrule]{\Gamma \vdash^{\sigma'} J}{\Gamma \vdash^{\sigma} J}
 \qquad\qquad\qquad
 \infer[\weakdrule]{\Gamma \vdash_d J;\sigma'}{\Gamma \vdash_d J;\sigma}
 $$
\end{repproposition}
\begin{proof}
 Simple induction on the typing derivations. The rules {\resetrule} and {\cutrule} where the list of dependencies is used 
 exactly match the definition of $\dptimp$.
 Every other case is direct using the first item of \Cref{lm:dlpaw:dpt_imp}.
\end{proof}

In addition to the previous proposition, we need the following extra lemma to simplify the proof of subject reduction,
which we will use when concatenating two stores:
\begin{lemma}\label{lm:store_cat}
 The following rule is admissible:
 $$
 \infer[\taucatrule]{\Gamma\sigdash \tau_0\tau_1:(\Gamma_0,\Gamma_1;\sigma_0,\sigma_1)}{
    \Gamma\sigdash \tau_0:(\Gamma_0;\sigma_0)
    &
    \Gamma,\Gamma_0\vdash^{\sigma\sigma_0}\tau_1:(\Gamma_1;\sigma_1)
  }
  $$
\end{lemma}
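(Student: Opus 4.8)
The plan is to proceed by a straightforward induction on the structure of the store $\tau_1$ — equivalently, on the derivation of its typing judgment $\Gamma,\Gamma_0\vdash^{\sigma\sigma_0}\tau_1:(\Gamma_1;\sigma_1)$, which is syntax-directed since each shape of store admits exactly one applicable typing rule. In the base case $\tau_1=\varepsilon$, we have $\Gamma_1=\varepsilon$ and $\sigma_1=\varepsilon$, so that $\tau_0\tau_1=\tau_0$ and the goal $(\Gamma_0,\varepsilon;\sigma_0,\varepsilon)=(\Gamma_0;\sigma_0)$ is exactly the first hypothesis.

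For the inductive step, I would distinguish the two ways the rightmost binding of $\tau_1$ can be formed. If $\tau_1=\tau_1'[a:=p]$, then its derivation ends with $\tauprule$, whose premises are $\Gamma,\Gamma_0\vdash^{\sigma\sigma_0}\tau_1':(\Gamma_1';\sigma_1')$ and $\Gamma,\Gamma_0,\Gamma_1'\vdash^{\sigma\sigma_0\sigma_1'}p:A$, with $\Gamma_1=\Gamma_1',a:A$ and $\sigma_1=\sigma_1'\dpt{a|p}$. Applying the induction hypothesis to the first hypothesis $\Gamma\sigdash\tau_0:(\Gamma_0;\sigma_0)$ and to the typing of $\tau_1'$ yields $\Gamma\sigdash\tau_0\tau_1':(\Gamma_0,\Gamma_1';\sigma_0,\sigma_1')$. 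I would then reapply $\tauprule$, reusing verbatim the subderivation $\Gamma,\Gamma_0,\Gamma_1'\vdash^{\sigma\sigma_0\sigma_1'}p:A$ as its second premise, to conclude $\Gamma\sigdash(\tau_0\tau_1')[a:=p]:(\Gamma_0,\Gamma_1',a:A;(\sigma_0,\sigma_1')\dpt{a|p})$, which is the desired $(\Gamma_0,\Gamma_1;\sigma_0,\sigma_1)$. The case $\tau_1=\tau_1'[\alpha:=e]$ is identical, using $\tauerule$ instead and noting that no dependency is added (so $\sigma_1=\sigma_1'$).

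The only point requiring care — and hence the closest thing to an obstacle in this otherwise routine argument — is the bookkeeping of contexts and dependency lists. The crucial observation is that the premise typing the bound object in the derivation of $\tau_1$ already lives in the context $\Gamma,\Gamma_0,\Gamma_1'$ under the dependency list $\sigma\sigma_0\sigma_1'$, which is exactly the context and list demanded when appending the same binding on top of $\tau_0\tau_1'$. Thus the subderivation can be transplanted unchanged, and everything matches up modulo the associativity of context concatenation and of dependency-list concatenation; in particular, no appeal to dependency weakening (\Cref{prop:dlpaw:dpt_weak}) is needed in this lemma.
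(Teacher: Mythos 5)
Your proof is correct and follows exactly the paper's approach: the paper's own proof is simply ``by induction on the structure of $\tau_1$'', and your write-up fills in precisely the details of that induction, including the key observation that the premise typing the bound object can be reused verbatim thanks to the associativity of context and dependency-list concatenation.
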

\begin{proof}
 By induction on the structure of $\tau_1$.
\end{proof}

Finally, we need to prove that substitutions of terms are safe with respect to typing (recall that substitutions
of proofs are handled through the typing rules for stores):

\begin{lemma}\label{lm:safet}[Safe term substitution]
 If $~\Gamma\sigdash t:T $ then for any conclusion $J$ for typing proofs, contexts, terms, etc; the following holds:\nomidem
\begin{multicols}{2}
  \begin{enumerate}
  \item If ~$ \Gamma,x:T,\Gamma'\sigdash J       $ \quad~	then~~{$        \Gamma,\Gamma'[t/x]\vdash^{\sigma[t/x]} J[t/x]$.}
  \item If ~$ \Gamma,x:T,\Gamma'\vdash_d J;\sigma$ ~		then~~{$        \Gamma,\Gamma'[t/x]\vdash_d J[t/x];\sigma[t/x] $.}
 \end{enumerate}
 \end{multicols}\nomidem
\end{lemma}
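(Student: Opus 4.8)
The plan is to proceed by mutual induction on the typing derivations of the two judgments, proving both statements simultaneously, since the dependent-mode rules (e.g.\ $\dcutrule$) appeal to regular-mode premises and, conversely, $\shiftrule$ passes from the dependent to the regular mode. Observe first that because $\Gamma\sigdash t:T$ and, by the convention that a variable occurs at most once in a context, $x$ does not occur in $\Gamma$, the term $t$ has no free occurrence of $x$; hence iterating $[t/x]$ causes no capture and commutes with all the binders of the calculus, after the usual $\alpha$-renaming afforded by our assumption of fresh-name generation.

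Before the induction I would isolate two commutation facts used pervasively. First, substitution commutes with the dependency operation, namely $\sigma(A)[t/x] = (\sigma[t/x])(A[t/x])$ for every formula $A$; this follows by a straightforward induction on $\sigma$, the only delicate point being the $\nef$ side condition in the clause $\sigma\dpt{p|q}(A)$, which is insensitive to $[t/x]$ since $q\in\nef$ holds iff $q[t/x]\in\nef$. Second, the relation $\typered$ on formulas is stable under $[t/x]$: if $A\typered B$ then $A[t/x]\typered B[t/x]$, because both $\beta$-reduction of terms and the reduction $\cmd{p}{\alpha}\to\cmd{q}{\alpha}$ of $\nef$ proofs commute with term substitution. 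Passing to the reflexive-transitive-symmetric closure yields $A\equiv B \limp A[t/x]\equiv B[t/x]$.

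With these facts in hand, most inductive cases are mechanical: one applies the induction hypotheses to the premises and reassembles the same rule, using that $[t/x]$ distributes over every proof-, context- and term-former, and that the store-typing rules $\tauprule$ and $\tauerule$ follow componentwise. The genuinely interesting cases are the following. For $\axxrule$ applied to $x:T$ itself, the conclusion becomes $\Gamma,\Gamma'[t/x]\vdash^{\sigma[t/x]} t:T$, which follows from the hypothesis $\Gamma\sigdash t:T$ together with weakening by the additional hypotheses $\Gamma'[t/x]$ (term typing being insensitive to the exact dependency list). For the conversion rules $\convrrule$ and $\convlrule$, one invokes the second commutation fact to transport $A\equiv B$ to $A[t/x]\equiv B[t/x]$. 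For the cut rules $\cutrule$, $\dcutrule$ and the continuation rules $\resetrule$, $\autorule{\coreset}$, where a side condition $\sigma(A)=\sigma(B)$ is demanded, one rewrites it through the first commutation fact as $(\sigma[t/x])(A[t/x])=(\sigma[t/x])(B[t/x])$, which is exactly the side condition required in the substituted derivation; \Cref{prop:dlpaw:dpt_weak} may be used to realign the lists when needed.

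The main obstacle is precisely this interaction between the equational theory on formulas and the lists of dependencies: one must verify that every place where typing is justified modulo $\equiv$ or modulo a dependency equality still type-checks after substitution, and that the two commutation lemmas are stated at the right level of generality—in particular that term substitution genuinely commutes with the $\nef$-proof reduction underlying $\typered$. Once these are secured, the binder cases $\farrule$, $\exrrule$, $\exlrule$, $\lamrule$ and the co-term binder $\dxmutrule$ go through routinely, completing the induction.
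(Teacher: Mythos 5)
Your proposal is correct and follows exactly the route the paper takes: the paper's own proof of this lemma is simply ``by induction on typing rules,'' and your argument is a faithful elaboration of that induction, with the commutation of substitution against the dependency lists and against $\equiv$ being precisely the auxiliary facts one needs to make the conversion and cut/reset cases go through.
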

\begin{proof}
 By induction on typing rules.
\end{proof}

We are now equipped to prove the expected property:
\begin{reptheorem}{thm:subject_reduction}
For any context $\Gamma$ and any closures $c\tau$ and $c'\tau'$ such that $c\tau \rightarrow c'\tau'$, we have:\nomidem
\begin{multicols}{2}
\begin{enumerate}
 \item If ~$\Gamma\vdash c\tau$~then~$\Gamma\vdash c'\tau'$.
 \item If ~$\Gamma\vdash_d c\tau;\varepsilon$~then~$\Gamma\vdash_d c'\tau';\varepsilon$.
 \end{enumerate}
\end{multicols}\nomidem
\end{reptheorem}
\nomidem
\begin{proof}
The proof follows the usual proof of subject reduction, by induction on the typing derivation and the reduction $c\tau \rightarrow c'\tau'$.
Since there is no substitution but for terms (proof terms and contexts being stored), there is no need for auxiliary lemmas 
about the safety of substitution. 
We sketch it by examining all the rules from \Cref{fig:dlpaw_types} from top to bottom.
~\\[0.2em]{\textbullet\quad} The cases for reductions of $\lambda$ are identical to the cases proven in the previous chapter for $\dltp$.
~\\[0.2em]{\textbullet\quad} The rules for reducing $\mu$ and $\tmu$ are almost the same
except that elements are stored, which makes it even easier.
For instance in the case of $\tmu$, the reduction rule is:
$$ \cmd{V}{\mut a.c\tau_1}\tau_0 				 \red c\tau_0[a:=V]\tau_1$$
A typing derivation in regular mode for the command on the left-hand side is of the shape:
$$
\infer[\lrule]{\Gamma\sigdash\cmd{V}{\mut a.c\tau_1}\tau_0}{
  \infer[\cutrule]{\Gamma,\Gamma_0\vdash^{\sigma\sigma_0}\cmd{V}{\mut a.c\tau_1}}{
    \infer{\Gamma,\Gamma_0\vdash^{\sigma\sigma_0}V:A}{\Pi_V}
    &
    \infer[\mutrule]{\Gamma,\Gamma_0\vdash^{\sigma\sigma_0}\mut a.c\tau_1:A^\negt}{
       \infer[\lrule]{\Gamma,\Gamma_0,a:A\vdash^{\sigma\sigma_0}c\tau_1}{
	  \infer{\Gamma,\Gamma_0,a:A,\Gamma_1\vdash^{\sigma\sigma_0\sigma_1}c}{\Pi_c}
	  &
	  \infer{\Gamma,\Gamma_0,a:A\vdash^{\sigma\sigma_0}\tau_1:(\Gamma_1;\sigma_1)}{\Pi_{\tau_1}}  
       }
    }
  }
  &
  \infer{\Gamma\vdash^{\sigma}\tau_0:(\Gamma_0;\sigma_0)}{\Pi_{\tau_0}}
}
$$
Thus we can type the command on the right-hand side:
{\small
$$
\infer[\lrule]{\Gamma\sigdash c\tau_0[a:=V]\tau_1}{
  \infer[\weakrule]{\Gamma,\Gamma_0,a:A,\Gamma_1\vdash^{\sigma\sigma_0\dpt{a|V}\sigma_1}c}{
    \infer{\Gamma,\Gamma_0,a:A,\Gamma_1\vdash^{\sigma\sigma_0\dpt{a|V}\sigma_1}c}{\Pi_c} 
  }
  &
  \infer[\taucatrule]{\Gamma\sigdash \tau_0[a:=V]\tau_1:(\Gamma_0,a:A,\Gamma_1;\sigma_0\dpt{a|V}\sigma_1)}{
    \infer[\tauprule]{\Gamma\sigdash \tau_0[a:=V]:(\Gamma_0,a:A;\sigma_0,\dpt{a|V})}{
      \infer{\Gamma\vdash^{\sigma}\tau_0:(\Gamma_0;\sigma_0)}{\Pi_{\tau_0}}
      &
      \infer{\Gamma,\Gamma_0\vdash^{\sigma\sigma_0}V:A}{\Pi_V}
    }
    &
    \infer{\Gamma,\Gamma_0,a:A\vdash^{\sigma\sigma_0}\tau_1:(\Gamma_1;\sigma_1)}{\Pi_{\tau_1}}
  }
}
$$
}
As for the dependent mode, the binding $\dpt{a|p}$ within the list of dependencies is compensated when typing the store as shown in the last derivation. 

~\\{\textbullet\quad} Similarly, elimination rules for contexts $\tmu[a_1.c_1|a_2.c_2]$, $\tmu(a_1,a_2).c$, $\tmu(x,a).c$ or $\muteq.c$ 
are easy to check, using \Cref{lm:dlpaw:dpt_imp} and the rule {\tauprule} in dependent mode to prove the safety with respect to dependencies.

~\\{\textbullet\quad} The cases for delimited continuations are identical to the corresponding cases for $\dltp$.

~\\{\textbullet\quad} The cases for the so-called ``call-by-value'' rules opening constructors are straightforward, 
using again \Cref{lm:dlpaw:dpt_imp} in dependent mode to prove the consistency with respect to the list of dependencies.

~\\{\textbullet\quad} The cases for the lazy rules are trivial.

~\\{\textbullet\quad} The first case in the ``lookup'' section is trivial. 
The three lefts correspond to the usual unfolding of inductive and co-inductive fixpoints. We only sketch the latter in regular mode.
The reduction rule is:
$$\cut{a}{f}\tau_0[a:=\cofix{t}{bx}{p}]\tau_1 ~\red~ \cmd{p[t/x][b'/b]}{\mut a.\cut{a}{f}\tau_1}\tau_0[b':=\lambda y.\cofix{y}{bx}{p}] $$
The crucial part of the derivation for the left-hand side command is the derivation for the cofix in the store:
{
$$
\infer[\tauprule]{\Gamma\sigdash \tau_0[a:=\cofix{t}{bx}{p}]:(\Gamma_0,a:\nu_{fx}^t A;\sigma_0)}{
  \infer{\Gamma\sigdash \tau_0:(\Gamma_0;\sigma_0)}{\Pi_{\tau_0}}
  &
  \infer[\cofixrule]{\Gamma,\Gamma_0\vdash^{\sigma\sigma_0} \cofix{t}{bx}{p}:\nu_{fx}^t A}{
    \infer{\Gamma\vdash^{\sigma\sigma_0} t:T}{\Pi_t}
    & 
    \infer{\Gamma,\Gamma_0, f:T\imp \N,x:T,b:\forall y^T. f(y)=0\vdash^{\sigma\sigma_0} p:A}{\Pi_p}
  }
}
$$
}
Then, using this derivation, we can type the store of the right-hand side command:
$$
\infer[\tauprule]{\Gamma\sigdash \tau_0[b':=\lambda y.\cofix{y}{bx}{p}]:\Gamma_0,b':-\forall y.\nu_{fx}^y A}{
 \infer{\Gamma\sigdash \tau_0:(\Gamma_0;\sigma_0)}{\Pi_{\tau_0}}
  &
  \infer[\farrule]{\Gamma,\Gamma_0\vdash^{\sigma\sigma_0} \lambda y.\cofix{y}{bx}{p}:\forall y.\nu_{fx}^t A}{
    \infer[\cofixrule]{\Gamma,\Gamma_0,y:T\vdash^{\sigma\sigma_0} \cofix{y}{bx}{p}:\nu_{fx}^y A}{
      \infer{\Gamma,\Gamma_0,y:T\vdash^{\sigma\sigma_0} y:T}{}
      & 
      \infer{\Gamma,\Gamma_0, f:T\imp \N,x:T,b:\forall y^T. f(y)=0\vdash^{\sigma\sigma_0} p:A}{\Pi_p}
    }
  }
}
$$
It only remains to type (we avoid the rest of the derivation, which is less interesting) the proof $p[t/x]$
with this new store to ensure us that the reduction is safe (since the variable $a$ will still be of  type $\nu^t_{fx} A$
when typing the rest of the command):
$$
  \infer[\convrrule]{\Gamma,\Gamma_0,b:\forall y.\nu_{fx}^y A\sigdash p[t/x]:\nu_{fx}^t A}{
    \infer{\Gamma,\Gamma_0,b:\forall y.\nu_{fx}^y A\sigdash p[t/x]:A[t/x][\nu_{fx}^y A/f(y)=0]}{\Pi_p}
    &
    \nu_{fx}^t A \equiv A[t/x][\nu_{fx}^y A/f(y)=0]
  }
$$
{
}

~\\{\textbullet\quad} The cases for reductions of terms are easy.
Since terms are reduced in place within proofs, the only things to check is that the reduction 
of $\wit$ preserves types (which is trivial) and that the $\beta$-reduction 
verifies the subject reduction (which is a well-known fact).  \qedhere

 \end{proof}
\newpage
 \section{Natural deduction as macros (Proofs of \Cref{s:macros})}
 \label{app:ded_nat}
 We give here two examples of typing rules for the macros $\subst p q$ and $\prf p q$ 
 (in natural deduction) that are admissible in {\dlpaw}.
 Recall that we have the following typing rules in {\dlpaw}:
 $$
 \infer[\!\!\exlrule]{\sigmaopt\Gamma \sigdash \mut (x,a).c: (\exists x^T. A)^\negt\optsigma }{\sigmaopt\Gamma, x:T, a:A\sigdash c\optsigma}
  \qquad
  \infer[\eqrule]{\sigmaopt\Gamma \sigdash \muteq.\cut{p}{e} : (t=u)^\negt\optsigma }{\sigmaopt\Gamma \sigdash p:A\optsigma & \sigmaopt\Gamma \sigdash e : A[u/t]\optsigma}	
  $$
  and that we defined $\prf p$ and $\subst p q$ as syntactic sugar: 
  $$\prf p \defeq \shift\cmd{p}{\tmu (x,a).\cmd{a}{\reset}}\qquad\qquad
  \subst p q\defeq \mu \alpha.\cmd{p}{\muteq.\cmd{q}{\alpha}}.$$
  Observe that $\prf p$ is now only definable if $p$ is a {\nef} proof term.
  For any $p\in\nef$ and any variables $a,\alpha$, we can prove the admissibility of the $\prfrule$-rule:
{

  $$\infer{\Gamma\sigdash \shift\cmd{p}{\tmu (x,a).\cmd{a}{\reset}}:A(\wit p)\mid \Delta}{
	\infer[\cutrule]{\cmd{p}{\tmu (x,a).\cmd{a}{\alpha}}:\Gamma\vdash_d \Delta,\reset:A(\wit p);\sigma\rdpt{p}}{
	  \Gamma\sigdash p:\exists x^{\N}.A\mid \Delta
	  \hspace{-0.7cm}
	  & 
	  \infer{\Gamma\mid \tmu (x,a).\cmd{a}{\reset}:\exists x^{\N}\!. A\vdash_d \Delta,\reset:A(\wit p);\sigma\rdpt{p}}{
	  \infer[\cutrule]{\cmd{a}{\alpha}:\Gamma,x:\N,a:A(x)\vdash_d \Delta,\reset:A(\wit p);\sigma\dpt{(x,a)|p}}{
	    \infer[\convrrule]{a:A(x)\sigdash a:A(\wit(x,a))}{
	      \infer{a:A(x)\sigdash a:A(x)}{}
	    }
	    ~~&~~
	    \infer[\resetrule]{\Gamma\mid \reset:A(\wit (x,a)) \vdash_d\reset:A(\wit p)\mid\Delta;\sigma}{
	      \sigma\dpt{(x,a)|p}(A(\wit p)=\sigma\dpt{(x,a)|p}(A(\wit (x,a)))}
	    }
	    }
	 }
      }
      $$
}  
Similarly, we can prove that the ($\subst{\!}{\!}$)-rule is admissible:
{

  $$\infer[\murule]{\Gamma\sigdash \mu \alpha.\cmd{p}{\muteq.\cmd{q}{\alpha}}:B[u]\mid \Delta}{
	\infer[\cutrule]{\cmd{p}{\muteq.\cmd{q}{\alpha}}:\Gamma\sigdash \Delta,\alpha:B[u]}{
	  \Gamma\sigdash p:t=u\mid \Delta
	  & \infer[\eqrule]{\Gamma\mid \muteq.\cmd{q}{\alpha}:t=u\sigdash \Delta,\alpha:B[u] }{	
	      \Gamma\sigdash q:B[t]\mid\Delta;\sigma
	      &
	      \infer[\axlrule]{\Gamma\mid \alpha:B[u] \sigdash\alpha:B[u]\mid\Delta}{}
	    }
	 }
      }.
      $$
}

 \begin{reptheorem}{thm:acn}[Countable choice~\cite{Herbelin12}]
 We have:
$$\begin{array}{@{~~}rc@{~~}l@{}l}
AC_\N		&:=& \lambda H.&\letop a=\cofix{0}{bn}{(H n, b(S(n))}\\
&&&\,\inop\, (\lambda n.\wit(\nthn n a),\lambda n.\prf(\nthn n a)\\
		&: & \multicolumn{2}{l}{\forall x^\N \exists y^T P(x,y) \imp \exists f^{\N\imp T} \forall x^\N P(x,f(x))}\\
\end{array}$$
where $\nthn n a:= \pi_1(\ind{n}{a}{x,c}{\pi_2(c)})$.
\end{reptheorem}
\begin{proof}
The complete typing derivation of the proof term for $AC_\N$ from Herbelin's paper~\cite{Herbelin12}
is given in Figure \ref{fig:ac_n}.
\end{proof}

\begin{reptheorem}{thm:dc}[Dependent choice~\cite{Herbelin12}]
 We have:
$$\begin{array}{r@{~~}c@{~~}l@{}l}
DC &:=& \lambda H.\lambda x_0.\letop~ a=(x_0,\cofix{0}{bn}{d_n})fsix~\\ 
   & & \inop~ (\lambda n.\wit(\nthn n a),(\refl,\lambda n.\pi_1(\prf(\prf(\nthn n a)))))\\
   &:& \forall x^T\!. \exists y^T\!. P(x,y) \imp \\ 
&& \qquad  \forall x_0^T\! .\exists  f\in T^\N.( f(0) = x_0 
 \land \forall n^\N\!. P( f (n), f (s (n))))\\
\end{array}$$
where $d_n:=\dest{H n}{y,c} (y,(c,b\,y)))$\\
and ~~~
$\nthn n a:= \ind{n}{a}{x,d}{(\wit(\prf d),\pi_2(\prf(\prf(d))))}$.
\end{reptheorem}
\begin{proof}
Left to the reader.
\end{proof}

 \begin{figure}[h]
\framebox{\vbox{
\input{ac_n}
}
}
\caption{Proof of the axiom of countable choice in \dlpaw}
\label{fig:ac_n}
\end{figure}
 
\newpage 

~\newpage 
\section{Small-step reduction rules (Proofs of \Cref{s:small_step})}
\label{app:small_step}
We give in Figure~\ref{fig:smallstep} the full reduction system based on small-step reduction rules which
are described in Section \ref{s:small_step}. 
We detail thereafter the proofs of the two main properties of the small-step reduction system.

\begin{repproposition}{p:small_sub_red}[Subject Reduction]
 The small-step reduction rules satisfy subject reduction.
\end{repproposition}
\begin{proof}
 The proof is again a tedious induction on the reduction $\reds$.
 There is almost nothing new in comparison with the cases for the big-step reduction rules: 
 the cases for reduction of terms are straightforward, as well as the administrative reductions changing the focus on a command.
 We only give the case for the reduction of pairs $(t,p)$.
 The reduction rule is:
 $$\cmdp{(t,p)}{e}\tau    \reds  \cmdp{p}{\coshift \cmd{t}{\mut x.\cmd{\coreset}{\tmu a.\cut{(x,a)}{e}}}}\tau 	$$
 Consider a typing derivation for the command on the left-hand side, which is of the shape (we omit the rule \lrule~and the store for conciseness):
 $$
   \infer[\cutrule]{\Gamma\sigdash \cmd{(t,p)}{e}}{
      \infer[\exrrule]{\Gamma\sigdash (t,p):\exists x^T\!.A}{
	\infer{\Gamma \sigdash t:T}{\Pi_t}
	&
	\infer{\Gamma \sigdash p:A[t/x]}{\Pi_p}
      }
      &
      \infer{\Gamma\sigdash e:(\exists x^T\!.A)^\negt}{\Pi_e}
   }
 $$
 Then we can type the command on the right-hand side with the following derivation:
 $$
 \infer[\!\!\cutrule]{\Gamma\sigdash \cmdp{p}{\coshift \cmd{t}{\mut x.\cmd{\coreset}{\tmu a.\cut{(x,a)}{e}}}}}{
    \Pi_p
    &\hspace{-0.5cm}
    \infer[\!\!\autorule{\tmu\coreset}]{\Gamma\sigdash \coshift \cmd{t}{\mut x.\cmd{\coreset}{\tmu a.\cut{(x,a)}{e}}}:A[t]^\negt}{
      \infer[\!\!\dcutrule]{\Gamma,\coreset:A[t]\vdash  \cmd{t}{\mut x.\cmd{\coreset}{\tmu a.\cut{(x,a)}{e}}};\sigma}{
	\Pi'_t
	&\hspace{-0.9cm}
	\infer[\!\!\autorule{\tmu_x}]{\Gamma,\coreset:A[t/x]\vdash_d \mut x.\cmd{\coreset}{\tmu a.\cut{(x,a)}{e}}:T;\sigma\rdpt{t}}{
	  \infer[\!\!\dcutrule]{\Gamma,\coreset:A[t],x:T\vdash_d \cmd{\coreset}{\tmu a.\cut{(x,a)}{e}};\sigma\dpt{x|t}}{
	    \infer[\!\!\mutrule]{\Gamma,x:T\sigdash \tmu a.\cut{(x,a)}{e}:A[x]^\negt}{
	      \infer[\!\!\cutrule]{\Gamma,x:T,a:A[x]\sigdash \cut{(x,a)}{e}:A[x]^\negt}{
		\Pi_{(x,a)} \qquad\qquad \Pi_e
	      }
	    }
	    &\hspace{-0.5cm}
	    A[t] = (\dpt{x|t})(A[x])
	  }
	}
      }
    }
}
$$
where $\Pi_{(x,a)}$ is as expected.
\end{proof}

\begin{repproposition}{prop:reduction}
 If a closure $c\tau$ normalizes for the reduction $\reds$, then it normalizes for $\red$.
\end{repproposition}
\begin{proof}
 By contraposition, one proves that if a command $c\tau$  produces an infinite 
 number of steps for the reduction $\red$, then it does not normalize for $\reds$ either.
 This is proved by showing by induction on the reduction $\red$ that each step, except for the contextual reduction of terms,
 is reflected in  at least on for the reduction $\reds$. 
 The rules for term reductions require a separate treatment, which is really not interesting at this point.  
 We claim that the reduction of terms, which are usual simply-typed $\lambda$-terms, is known to be normalizing anyway and does not deserve
 that we spend another page proving it in this particular setting.
\end{proof}

\begin{figure*}[h]
\framebox{\vbox{
\input{smallstep}
}
}
\caption{Small-step reduction rules}
\label{fig:smallstep}
\end{figure*}
 
 \newpage
 ~\newpage
 \section{Realizability interpretation (Proofs of \Cref{s:realizability})}
 \label{app:adequacy}
 We give here the different proofs relative to the realizability interpretation of \dlpaw.
 
 First, we verify that the set of normalizing closures is indeed a pole:
\begin{repproposition}{prop:dlpaw:norm_pole}
 The set $\pole_{\Downarrow}=\{c\tau\in\C_0:~c\tau\text{ normalizes }\}$ is a pole.
\end{repproposition}
\begin{proof}
 The first two conditions are already verified for the $\lbvtstar$-calculus~\cite{MiqHer18}. 
 The third one is straightforward, since if a closure $\cut{\shift c}{e}\tau$ is not normalizing, 
 it is easy to verify that $c[e/\reset]$ is not normalizing either. 
 Roughly, there is only two possible reduction steps for a command $\cut{\shift c}{e}\tau$:
 either it reduces to $\cut{\shift c'}{e}\tau'$, in which case $c[e/\reset]\tau$ also reduces
 to a closure which is almost $(c'\tau')[e/\reset]$;
 or $c$ is of the shape $\cut{p}{\reset}$ and it reduces to $c[e/\reset]\tau$. 
 In both cases, if $\cut{\shift c}{e}\tau$ can reduce, so can $c[e/\reset]\tau$. 
 The same reasoning allows us to show that if $c[V/\coreset]\tau$ normalizes,
 then so does $\cut{V}{\coshift c}\tau$ for any value $V$.
\end{proof}

We recall two key properties of the interpretation, whose proofs are similar to the proofs
for the corresponding statements in the $\lbvtstar$-calculus~\cite{MiqHer18}:
\begin{lemma}[Store weakening]
\label{lm:dlpaw:st_weak}
 Let $\tau$ and $\tau'$ be two stores such that $\tau\stext\tau'$, let $\Gamma$ be a typing context,
 let $\pole$ be a pole and $\rho$ a valuation. The following statements hold:
 \begin{enumerate}
  \item $\overline{\tau\tau'} = \tau'$
  \item If ~$\tis{p}{\tau}_\rho \in \tvp{A_\rho}$~ for some closed proof-in-store $\tis{p}{\tau}_\rho$ and formula $A$, then ~$\tis{p}{\tau'}_\rho\in\tvp{A_\rho}$. 
  The same holds for each level $e,E,V,f,t,\pi,V_t$ of the interpretation.
  \item If ~$(\tau,\rho)\real \Gamma$~ then ~$(\tau',\rho) \real \Gamma$.
 \end{enumerate}
\end{lemma}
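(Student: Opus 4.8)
The plan is to prove the three statements in order, with item~2 being the substantial one and items~1 and~3 essentially packaging it. Throughout I write $\tau''$ for the extension, so the hypothesis is $\tau \stext \tau''$, that is, $\compat{\tau}{\tau''}$ together with $\dom(\tau) \subseteq \dom(\tau'')$.

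For item~1 I would proceed by structural induction on $\tau''$, following the defining clauses of $\stjoin{\cdot}{\cdot}$. Whenever the head binding of $\tau''$ also occurs in $\tau$, compatibility forces the two bound values to coincide, so the first (or second) clause of the union applies and peels the binding off on both sides; whenever it does not occur in $\tau$, domain inclusion guarantees it belongs to the independent tail, so the last clause appends it unchanged. In either case the recursive call is again between an initial segment of $\tau$ and the corresponding segment of $\tau''$, and the induction yields $\overline{\tau\tau''} = \tau''$.

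The heart of item~2 is the following monotonicity of the compatible union, immediate from the definition of $\stjoin{\cdot}{\cdot}$ and Lemma~\ref{lm:st_union}: if $\tau \stext \tau''$ and $\tau_1$ is compatible with $\tau''$, then $\tau_1$ is also compatible with $\tau$ and $\overline{\tau\tau_1} \stext \overline{\tau''\tau_1}$ (both unions are compatible and $\dom(\overline{\tau\tau_1}) = \dom(\tau)\cup\dom(\tau_1) \subseteq \dom(\tau'')\cup\dom(\tau_1) = \dom(\overline{\tau''\tau_1})$). Granting this, I would prove preservation of membership simultaneously at all levels by induction on the formula $A$, handling the base levels $f$ and $V_t$ directly and deducing the orthogonality levels $V,e,p,\pi,t$ uniformly. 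For a derived level such as $\tvp{A} = \fve{A}^{\pole_p}$: assuming $\tis{p}{\tau}_\rho \in \tvp{A_\rho}$, to show $\tis{p}{\tau''}_\rho \in \tvp{A_\rho}$ I take an arbitrary $\tis{e}{\tau_1} \in \fve{A_\rho}$ with $\compat{\tau''}{\tau_1}$; then $\compat{\tau}{\tau_1}$, so $\cut{p}{e}\overline{\tau\tau_1} \in \pole$ by hypothesis, and since $\overline{\tau\tau_1} \stext \overline{\tau''\tau_1}$, closure of $\pole$ under store extension gives $\cut{p}{e}\overline{\tau''\tau_1} \in \pole$, as required. For the base level $f$, the cases $\bot$ and $\top$ are settled by domain inclusion (resp. vacuously); the cases $t = u$ and $t\in T$ reduce to the observation that $t \eqtau u$ is preserved when $\tau$ is extended, since the defining reductions of $\equiv_\tau$ remain available in the larger store; and the contexts for $\land$, $\lor$, $\exists$ and $\dptprod{a:A}{B}$, as well as the coinductive $\nu^t_{Xx}A$ treated through its finite approximations $F^n_{A,t}$, are handled exactly like the derived levels, since their definitions already quantify over all compatible $\tau_1$ and close commands under $\pole$.

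Finally, item~3 is a direct corollary: unfolding $(\tau,\rho) \real \Gamma$, each clause either asserts a level membership of the form $\tis{a}{\tau}_\rho \in \tvV{A_\rho}$, $\tis{\alpha}{\tau}_\rho \in \fve{A_\rho}$ or $\tis{\rho(x)}{\tau} \in \tvVt{T_\rho}$, transported from $\tau$ to $\tau''$ by item~2, or else the dependency condition $a \eqtau p$, which is preserved under store extension just as in the equality case above. I expect the main obstacle to be precisely this stability of $\equiv_\tau$ under extension together with the union-monotonicity fact: once these are in place, the whole argument is the same (tedious but routine) induction as for the $\lbvtstar$-calculus, the only genuinely new cases being the dependent connectives and the coinductive formula, which nonetheless fit the uniform orthogonality pattern.
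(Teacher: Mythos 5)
Your proposal is correct and follows exactly the intended argument: the paper itself gives no proof of this lemma, deferring to the corresponding statements for the $\lbvtstar$-calculus in~\cite{MiqHer18}, and your development (compatibility and union-monotonicity of $\stjoin{\cdot}{\cdot}$, the uniform orthogonality step using closure of the pole under store extension, stability of $\equiv_\tau$ for the base cases) is precisely that standard argument adapted to the new connectives.
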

\begin{proposition}[Monotonicity]
\label{prop:dlpaw:monotonicity}
For any closed formula $A$, any type $T$ and any given pole $\pole$, we have the following inclusions:
$$\tvV{A} \subseteq \tvp{A} \qquad\qquad \fvf{A}\subseteq \fve{A} \qquad\qquad \tvVt{T}\subseteq \tvt{T}$$
\end{proposition}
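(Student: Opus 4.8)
The plan is to obtain all three inclusions from one standard observation about orthogonality — that $S\subseteq S^{\orth\orth}$ — combined with the purely syntactic inclusions between the relevant classes of in-store objects, exactly as for the corresponding statement in the $\lbvtstar$-calculus. First I would record the syntactic inclusions read off from \Cref{fig:language}: every weak value is a proof, so $\Lambda_V^\tau\subseteq\Lambda_p^\tau$ (recall $V::= a\mid v$, all of which are proof terms); every forcing context is a context, so $\Lambda_f^\tau\subseteq\Lambda_e^\tau$ (recall $e::= f\mid\alpha\mid\tmu a.c\tau$); and every term value is a term, so $\Lambda_{V_t}^\tau\subseteq\Lambda_t^\tau$. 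These are what make the three claimed inclusions well-posed, as inclusions of subsets of $\Lambda_p^\tau$, $\Lambda_e^\tau$ and $\Lambda_t^\tau$ respectively.

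Next I would unfold the definitions from \Cref{fig:dlpaw_real}. Consider $\fvf{A}\subseteq\fve{A}$. By definition $\fve{A}=\tvV{A}^{\pole_e}=(\fvf{A}^{\pole_V})^{\pole_e}$. Take $\tis{f}{\tau_0}\in\fvf{A}$, viewed as a context-in-store through the inclusion $\Lambda_f^\tau\subseteq\Lambda_e^\tau$; I must show $\tis{f}{\tau_0}\in(\fvf{A}^{\pole_V})^{\pole_e}$, that is, $\cut{V}{f}\overline{\tau_0\tau_1}\in\pole$ for every $\tis{V}{\tau_1}\in\fvf{A}^{\pole_V}$ with $\tau_0\diamond\tau_1$. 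But by the very definition of $\fvf{A}^{\pole_V}$ such a $V$ is orthogonal to every element of $\fvf{A}$; instantiating that quantifier at $\tis{f}{\tau_0}\in\fvf{A}$ (which is compatible with $\tau_1$ since $\diamond$ is symmetric) yields $\cut{V}{f}\overline{\tau_1\tau_0}\in\pole$, and since $\overline{\tau_1\tau_0}=\overline{\tau_0\tau_1}$ this is precisely what is required. The other two inclusions are obtained by the same unfolding: $\tvV{A}\subseteq\tvp{A}$ because $\tvp{A}=(\tvV{A}^{\pole_e})^{\pole_p}$ and every $V\in\tvV{A}$, regarded as a proof, is orthogonal to every context in $\tvV{A}^{\pole_e}=\fve{A}$; and $\tvVt{T}\subseteq\tvt{T}$ because $\tvt{T}=(\tvVt{T}^{\pole_\pi})^{\pole_t}$ and every term value in $\tvVt{T}$, regarded as a term, is orthogonal to every co-term in $\tvVt{T}^{\pole_\pi}=\fvpi{T}$.

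In each case the argument is the textbook $S\subseteq S^{\orth\orth}$ reasoning, relying only on the fact that the orthogonality relation is symmetric — it is defined solely through $\cut{p}{e}\overline{\tau\tau'}\in\pole$, and the compatible union is symmetric, $\overline{\tau\tau'}=\overline{\tau'\tau}$. The one point requiring care, and the sole place where anything beyond routine unfolding occurs, is the handling of stores: one must check that the side-condition $\tau\diamond\tau'$ and the symmetry of $\overline{\cdot}$ let membership at the smaller level genuinely transfer to the larger level. Since this is exactly the store bookkeeping already carried out for the monotonicity statement of the $\lbvtstar$-calculus~\cite{MiqHer18}, and our orthogonality is defined identically, I expect no real obstacle beyond reproducing that bookkeeping faithfully, and the three inclusions follow uniformly.
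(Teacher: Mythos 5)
Your proof is correct and is exactly the standard bi-orthogonality argument ($S\subseteq S^{\orth\orth}$ restricted to the appropriate syntactic level) that the paper itself relies on: the paper gives no proof here, simply deferring to the corresponding statement for the $\lbvtstar$-calculus in~\cite{MiqHer18}, whose proof is the one you reproduce. The only bookkeeping point, which you already flag, is that $\overline{\tau\tau'}$ and $\overline{\tau'\tau}$ agree only up to reordering of independent bindings, which is harmless since the pole is closed under store extension.
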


Truth and falsity values are defined up to observational equivalence:
 \begin{replemma}{lm:dlpaw:equiv}
 For any store $\tau$ and any valuation $\rho$, the component along $\tau$ of the truth and falsity values defined in \Cref{fig:dlpaw_real}
 are closed under the relation $\equiv_\tau$:
 \begin{enumerate}
 \item if $\tis{f}{\tau}_\rho \in \fvf{A_\rho}$ and $A_\rho\equiv_\tau B_\rho$, then $\tis{f}{\tau}_\rho\in\fvf{B_\rho}$,
 \item if $\tis{V_t}{\tau}_\rho \in \tvVt{A_\rho}$ and $A_\rho\equiv_\tau B_\rho$, then $\tis{V_t}{\tau}_\rho\in\tvv{B_\rho}$.
\end{enumerate}
The same applies with $\tvp{A_\rho}$, $\fve{A_\rho}$, etc.
\end{replemma}
\begin{proof}
 By induction on the structure of $A_\rho$ and the different levels of interpretation. 
 The different base cases ($p\in A_\rho$, $t\in T$, $t=u$) are direct since their components along $\tau$ are defined modulo $\equiv_\tau$, 
 the other cases are trivial inductions.
\end{proof}

 We can now give the complete proof of adequacy of the typing rules with respect to the realizability interpretation, defined in Figure~\ref{fig:dlpaw_real}.
 \begin{repproposition}{prop:dlpaw:adequacy}
  The typing rules are adequate with respect to the realizability interpretation.
 In other words, if~ $\Gamma$ is a typing context, $\pole$ a pole, $\rho$ a valuation and $\tau$ a store such that ${(\tau,\rho)\real \Gamma;\sigma}$,
then the following hold:
\begin{enumerate}
 \item If~ $v$ is a strong value 	such that $\Gamma\sigdash v:A$      ~or~~$\Gamma\vdash_d v:A      ;\sigma$, then $\tis{v}{\tau}_\rho   \in\tvV{A_\rho}$.
 \item If~ $f$ is a forcing context 	such that $\Gamma\sigdash f:A^\negt$~or~~$\Gamma\vdash_d f:A^\negt;\sigma$, then $\tis{f}{\tau}_\rho   \in\fvf{A_\rho}$.
 \item If~ $V$ is a weak value   	such that $\Gamma\sigdash V:A$      ~or~~$\Gamma\vdash_d V:A      ;\sigma$, then $\tis{V}{\tau}_\rho   \in\tvV{A_\rho}$.
 \item If~ $e$ is a context 		such that $\Gamma\sigdash e:A^\negt$~or~~$\Gamma\vdash_d e:A^\negt;\sigma$, then $\tis{e}{\tau}_\rho   \in\fve{A_\rho}$.
 \item If~ $p$ is a proof term  	such that $\Gamma\sigdash p:A$ 	    ~or~~$\Gamma\vdash_d p:A      ;\sigma$, then $\tis{p}{\tau}_\rho   \in\tvp{A_\rho}$.
 \item If~ $V_t$ is a term value   	such that $\Gamma\sigdash V_t:T$,	                                    then $\tis{V_t}{\tau}_\rho \in\tvVt{T_\rho}$.
 \item If~ $\pi$ is a term context   	such that $\Gamma\sigdash \pi:T$,	                                    then $\tis{\pi}{\tau}_\rho \in\fvpi{T_\rho}$.
 \item If~ $t$ is a term  		such that $\Gamma\sigdash t:T$, 	                                    then $\tis{t}{\tau}_\rho \in\tvt{T_\rho}$.
 \item If~ $\tau'$ is a store		such that $\Gamma\sigdash\tau':(\Gamma';)\sigma'$,                          then $(\tau\tau',\rho) \real (\Gamma,\Gamma';\sigma\sigma')$. 
 \item If~ $c$ is a command		such that $\Gamma\sigdash c$        ~or~~$\Gamma\vdash_d c        ;\sigma$, then $(c\tau)_\rho \in \pole$. 
  \item If~ $c\tau'$ is a closure	such that $\Gamma\sigdash c\tau'$   ~or~~$\Gamma\vdash_d c\tau'   ;\sigma$, then $(c\tau\tau')_\rho \in \pole$. 
\end{enumerate}
\end{repproposition}
\begin{proof}
The proof is done by induction on the typing derivation such as given in the system
extended with the small-step reduction $\reds$. 
Most of the cases correspond to the proof of adequacy for the interpretation of the $\lbvtstar$-calculus,
so that we only give the most interesting cases. To lighten the notations, we omit the annotation by the valuation $\rho$ whenever it is possible.
\prfcase{\exrrule}
We recall the typing rule through the decomposition of dependent sums:
$$\infer{\sigmaopt\Gamma \sigdash (t,p) : (u\in T \land A[u]) \optsigma}{\sigmaopt\Gamma \sigdash t:u\in T \optsigma & \sigmaopt\Gamma \sigdash p: A[u/x] \optsigma }$$
By induction hypothesis, we obtain that $\tis{t}{\tau}\in\tvt{u\in T}$ and $\tis{p}{\tau}\in\tvp{A[u]}$.
Consider thus any context-in-store $\tis{e}{\tau'}\in\fve{u\in T \land A[u]}$ such that $\tau$ and $\tau'$ are compatible, and let us denote by $\tau_0$ the union $\overline{\tau\tau'}$.
We have:
$$\cmdp{(t,p)}{e}\tau_0 \reds  \cmdp{p}{\coshift \cmd{t}{\mut x.\cmd{\coreset}{\tmu a.\cut{(x,a)}{e}}}}\tau_0$$ 
so that by anti-reduction, we need to show that $\coshift \cmd{t}{\mut x.\cmd{\coreset}{\tmu a.\cut{(x,a)}{e}}}\in \fve{A[u]}$.
Let us then consider a value-in-store $\tis{V}{\tau_0'}\in\tvV{A[u]}$ such that $\tau_0$ and $\tau_0'$ are compatible, and let us denote by $\tau_1$ the union $\overline{\tau_0\tau_0'}$.
By closure under delimited continuations, to show that $\cmdp{V}{\coshift \cmd{t}{\mut x.\cmd{\coreset}{\tmu a.\cut{(x,a)}{e}}}}\tau_1$ is in the pole 
it is enough to show that the closure $\cmd{t}{\mut x.\cmd{V}{\tmu a.\cut{(x,a)}{e}}}\tau_1$ is in $\pole$,. Thus it suffices to show that 
the coterm-in-store $\tis{\mut x.\cmd{V}{\tmu a.\cut{(x,a)}{e}}}{\tau_1}$ is in $\fvpi{u\in T}$.

Consider a term value-in-store $\tis{V_t}{\tau_1'}\in\tvVt{u\in T}$, such that $\tau_1$ and $\tau_1'$ are compatible, and let us denote by $\tau_2$ the union $\overline{\tau_1\tau_1'}$.
We have:
$$\cut{V_t}{\mut x.\cmd{V}{\tmu a.\cut{(x,a)}{e}}}{\tau_2}\reds \cmd{V}{\tmu a.\cut{(V_t,a)}{e}}\tau_2\reds \cut{(V_t,a)}{e}{\tau_2[a:=V]}$$ 
It is now easy to check that $\tis{(V_t,a)}{\tau_2[a:=V]}\in\tvV{u\in T \land A[u]}$ and to conclude, using \Cref{lm:dlpaw:st_weak} to get $\tis{e}{\tau_2[a:=V]}\in\fve{u\in T\land A[u]}$,
that this closure is finally in the pole.

\prfcase{\convrrule,\convlrule}
These cases are direct consequences of \Cref{lm:dlpaw:equiv} since if $A,B$ are two formulas such that $A\equiv B$,
in particular $A\equiv_\tau B$ and thus $\tvv{A} = \tvv{B}$.

\prfcase{\reflrule,\eqrule}
The case for $\refl$ is trivial, while it is trivial to show that $\tis{\muteq.\cut{p}{e}}{\tau}$ is in $\fvf{t=u}$
if $\tis{p}{\tau}\in\tvp{A[t]}$ and $\tis{e}{\tau}\in\fve{A[u]}$.
Indeed, either $t \equiv_{\tau} u$ and thus $A[t] \equiv_{\tau} A[u]$ (\Cref{lm:dlpaw:equiv},
or $t \nequiv_{\tau} u$ and $\fvf{t=u}=\Lambda_f^\tau$.

\prfcase{\autorule{\forall_r^x}} This case is standard in a call-by-value language with value restriction.
We recall the typing rule:
$$\infer[\autorule{\forall^x_r}]{\Gamma \sigdash v:\forall x.A}{\Gamma\sigdash v:A & x\notin \FV(\Gamma)}\qquad$$
The induction hypothesis gives us that $\tis{v}{\tau}_\rho$ is in $\tvV{A_\rho}$ for any valuation $\rho[x\mapsto t]$.
Then for any $t$, we have $\tis{v}{\tau}_\rho\in \fvf{A_\rho[t/x]}^{\pole_v} $
so that $\tis{v}{\tau}_\rho\in (\bigcap_{t\in\Lambda_t} \fvf{A[t/x]}^{\pole_v})$. 
Therefore if $\tis{f}{\tau'}_\rho$ belongs to  $\fvf{\forall x.A_\rho} = (\bigcap_{t\in\Lambda_t} \fvf{A[t/x]}^{\pole_v})^{\pole_f}$,
 we have by definition that $\tis{v}{\tau}_\rho\pole \tis{f}{\tau'}_\rho$.

\prfcase{\indrule}
We recall the typing rule:
$$\infer[\indrule]{\sigmaopt\Gamma\sigdash \ind{t}{p_0}{ax}{p_S} : A[t/x]\optsigma}{
	  \sigmaopt\Gamma\sigdash t: \N  \optsigma
	  &
	  \sigmaopt\Gamma \sigdash p_0:A [0/x]  \optsigma
	  &
	  \sigmaopt\Gamma,x:T,a:A\sigdash p_S:A[S(x)/x]\optsigma
	}
$$
We want to show that $\tis{\ind{t}{p_0}{ax}{p_S}}{\tau}\in\tvp{A[t]}$, let us then consider $\tis{e}{\tau'}\in\fve{A[t]}$
 such that $\tau$ and $\tau'$ are compatible, and let us denote by $\tau_0$ the union $\overline{\tau\tau'}$.
By induction hypothesis, we have\footnote{Recall that any term $t$ of type $T$ can be given the type $t\in T$.}
$t\in \tvt{t\in \N}$ and we have:
$$\cmdp{\ind{t}{p_0}{bx}{p_S}}{e}\tau_0\reds  \cmdp{\shift\cmd{t}{\mut y.\cmd{a}{\reset}[a:=\ind{y}{p_0}{bx}{p_S}]}}{e}\tau_0$$
so that by anti-reduction and closure under delimited continuations, it is enough to show that  the coterm-in-store
$\tis{\mut y.\cmd{a}{e}[a:=\ind{y}{p_0}{bx}{p_S}]}{\tau_0}$ is in $\fvpi{t\in \N}$.
Let us then consider $\tis{V_t}{\tau_0'}\in\tvVt{t\in \N}$
such that $\tau_0$ and $\tau_0'$ are compatible, and let us denote by $\tau_1$ the union $\overline{\tau_0\tau_0'}$.
By definition, $V_t=S^n(0)$ for some $n\in\N$ and $t\equiv_{\tau_1} S^n(0)$, and 
we have: 
$$\cmd{S^n(0)}{\mut y.\cmd{a}{e}[a:=\ind{y}{p_0}{bx}{p_S}]}\tau_1 \reds \cmd{a}{e}\tau_1[a:=\ind{S^n(0)}{p_0}{bx}{p_S}] $$
We conclude by showing by induction on the natural numbers that for any $n\in N$,
the value-in-store $\tis{a}{\tau_1[a:=\ind{S^n(0)}{p_0}{bx}{p_S}]}$ is in $\tvV{A[S^n(0)]}$.
Let us consider $\tis{f}{\tau_1'}\in\fvf{A[S^n(0)]}$ such that  the store
$\tau_1[a:=\ind{S^n(0)}{p_0}{bx}{p_S}]$ and $\tau_1'$ are compatible, 
and let us denote by $\tau_2[a:=\ind{S^n(0)}{p_0}{bx}{p_S}]\tau_2'$ their union.
\begin{itemize}
\item 
If $n=0$, we have:
$$
\cmd{a}{f}\tau_2[a:=\ind{0}{p_0}{bx}{p_S}]\tau_2' \reds \cmd{p_0}{\tmu a.\cut{a}{f}\tau'_2}\tau_2
$$
We conclude by anti-reduction and the induction hypothesis for $p_0$, since it is easy to show that $\tis{\tmu a.\cut{a}{f}\tau'_2}{\tau_2}\in\fve{A[0]}$.

\item 
If $n=S(m)$, we have:
$$
\cmd{a}{f}\tau_2[a:=\ind{S(S^m(0))}{p_0}{bx}{p_S}]\tau_2' \reds\cmdp{p_S[S^m(0)/x][b'/b]}{\mut a.\cmd{a}{f}\tau_2'}\tau_2[b':=\ind{S^m(0)}{p_0}{bx}{p_S}]
$$
Since we have by induction that $\tis{b'}{\tau_2[b':=\ind{S^m(0)}{p_0}{bx}{p_S}]}$ is in $\tvV{A[S^m(0)]}$,
we can conclude by anti-reduction, using the induction hypothesis for $p_S$
and the fact that $\tis{\tmu a.\cut{a}{f}\tau'_2}{\tau_2}$ belongs to $\fve{A[S(S^m(0))]}$.
\end{itemize}\noem

\prfcase{\cofixrule}
We recall the typing rule:
$$\infer[\cofixrule]{\sigmaopt\Gamma\sigdash \cofix{t}{bx}{p}: \nu^t_{Xx} A \optsigma}{
	\sigmaopt\Gamma\sigdash t:T  \optsigma
	& 
	\sigmaopt\Gamma,x:T,b:\forall y^T\!. X(y)\sigdash p:A  \optsigma
	& 
	X\text{ positive in } A & X\notin \FV(\Gamma)
}$$
We want to show that $\tis{\cofix{t}{bx}{p}}{\tau}\in\tvp{\nu^t_{Xx} A}$, let us then consider $\tis{e}{\tau'}\in\fve{\nu^t_{Xx} A}$
such that $\tau$ and $\tau'$ are compatible, and let us denote by $\tau_0$ the union $\overline{\tau\tau'}$.
By induction hypothesis, we have $t\in \tvt{t\in T}$ and we have:
$$\cmdp{\cofix{t}{bx}{p}}{e}\tau_0\reds  \cmdp{\shift\cmd{t}{\mut y.\cmd{a}{\reset}[a:=\cofix{y}{bx}{p}]}}{e}\tau_0$$
so that by anti-reduction and closure under delimited continuations, it is enough to show that the coterm-in-store
$\tis{\mut y.\cmd{a}{e}[a:=\cofix{y}{bx}{p}]}{\tau_0}$ is in $\fvpi{t\in \N}$.
Let us then consider $\tis{V_t}{\tau_0'}\in\tvVt{t\in T}$
such that $\tau_0$ and $\tau_0'$ are compatible, and let us denote by $\tau_2$ the union $\overline{\tau_0\tau_0'}$.
We have:
$$\cmd{V_t}{\mut y.\cmd{a}{e}[a:=\cofix{y}{bx}{p}]}\tau_1 \reds \cmd{a}{e}\tau_1[a:=\cofix{V_t}{bx}{p}] $$
It suffices to show now that the value-in store $\tis{a}{\tau_1[a:=\cofix{V_t}{bx}{p}]}$ is in $\tvV{\nu^{V_t}_{Xx}A}$.
By definition, we have:
$$\tvV{\nu^{V_t}_{Xx}A} = (\bigcup_{n\in\N}\fvf{F^n_{A,{V_t}}})^{\pole_V}  
= \bigcap_{n\in\N}\fvf{F^n_{A,{V_t}}}^{\pole_V} 
= \bigcap_{n\in\N}\tvV{F^n_{A,{V_t}}}$$
We conclude by showing by induction on the natural numbers that for any $n\in N$ and any $V_t$,
the value-in-store $\tis{a}{\tau_1[a:=\cofix{V_t}{bx}{p}]}$ is in $\tvV{F^n_{A,V_t}}$.

The case $n=0$ is trivial since $\tvV{F^0_{A,V_t}}=\tvV{\top}=\Lambda^\tau_V$.
Let then $n$ be an integer and any $V_t$ be a term value.
Let us consider $\tis{f}{\tau_1'}\in\fvf{F^{n+1}_{A,V_t}A}$ 
such that $\tau_1[a:=\cofix{V_t}{bx}{p}]$ and $\tau_1'$ are compatible, and let us denote by $\tau_2[a:=\cofix{V_t}{bx}{p}]\tau_2'$
their union.
By definition, we have: 
$$
\cmd{a}{f}\tau_2[a:=\cofix{V_t}{bx}{p}]\tau_2' \reds \cmd{p[V_t/x][b'/b]}{\tmu a.\cut{a}{f}\tau'_2}\tau_2[b':=\lambda y.\cofix{y}{bx}{p}]
$$
It is straightforward to check, using the induction hypothesis for $n$, that $\tis{b'}{\tau_2[b':=\lambda y.\cofix{y}{bx}{p}]}$ is in 
$\tvV{\forall y.y\in T\imp F^n_{A,y}}$.
Thus we deduce by induction hypothesis for $p$, denoting by $S$ the function $t\mapsto \fvf{F^n_{A,t}}$, that: 
$$\tis{p[V_t/x][b'/b]}{\tau_2[b':=\lambda y.\cofix{y}{bx}{p}]} \in \tvp{A[V_t/x][\dot S/X]} =\tvp{A[V_t/x][F^n_{A,y}/X(y)]} = \tvp{F^{n+1}_{A,V_t}}$$
It only remains to show that $\tis{\tmu a.\cut{a}{f}\tau'_2}{\tau_2}\in\fve{F^{n+1}_{A,V_t}}$, which is trivial from the hypothesis for $f$.
\end{proof}

\begin{reptheorem}{thm:consistency}[Consistency]
$\nvdash_{\text{\dlpaw}}p: \bot$
\end{reptheorem}
\begin{proof}
 Assume there is such a proof $p$, by adequacy $\tis{p}{\varepsilon}$ is in $\tvp{\bot}$ for any pole.
 Yet, the set $\pole \defeq \emptyset$ is a valid pole, and with this pole, $\tvp{\bot}=\emptyset$, which is absurd.
\end{proof}

\newpage
\section{About the interpretation of coinductive formulas}
\label{app:cofix}
While our realizability interpretation give us a proof of normalization and soundness for {\dlpaw},
it has two aspects that we should discuss. 
First, regarding the small-step reduction system, one could have expected the lowest level of interpretation to be $v$ instead of $f$.
Moreover, if we observe our definition, we notice that most of the cases of $\fvf{\cdot}$ are in fact defined by orthogonality to
a subset of strong values. 
Indeed, except for coinductive formulas, we could indeed have defined instead an interpretation  $\tvv{\cdot}$ of formulas at level $v$
and then the interpretation $\fvf{\cdot}$ by orthogonality:
$$\begin{array}{ccl}
     \tvv{\bot}	    & \defeq & \emptyset\\ 
     \tvv{t=u}	    & \defeq & \begin{cases}\refl & \text{if~} t\equiv u\\ \emptyset & \text{otherwise}\end{cases}\\
     \tvv{p\in A}   & \defeq & \{\tis{v}{\tau}\in\tvv{A}: v \eqtau p \}\\
     \tvv{T\imp B}  & \defeq & \{\tis{\lambda x .p}{\tau}  : \forall V_t \tau', \compat{\tau}{\tau'}\land \tis{V_t}{\tau'}\in\tvV{T} \Rightarrow \tis{p[V_t/x]}{\overline{\tau\tau'}}\in\tvp{B}\}\\
     \tvv{A\imp B}  & \defeq & \{\tis{\lambda a .p}{\tau}  : \forall V \tau', \compat{\tau}{\tau'}\land \tis{V}{\tau'}\in\tvV{A} \Rightarrow \tis{p}{\overline{\tau\tau'}[a:=V]}\in\tvp{B}\}\\
     \tvv{T\land A} & \defeq & \{\tis{(V_t,V)}{\tau}    : \tis{V_t}{\tau}\in\tvVt{T}  \land \tis{V}{\tau}   \in\tvV{A_2}\}\\
 \tvv{A_1\land A_2} & \defeq & \{\tis{(V_1,V_2)}{\tau}  : \tis{V_1}{\tau}\in\tvV{A_1} \land \tis{V_2}{\tau} \in\tvV{A_2}\}\\
 \tvv{A_1\lor A_2}  & \defeq & \{\tis{\injec i V}{\tau} : \tis{V}{\tau}\in\tvV{A_i}\}\\
 \tvv{\exists x.A}  & \defeq & \bigcup_{t\in\Lambda_t} \tvv{A[t/x]}\\
 \tvv{\forall x.A}  & \defeq & \bigcap_{t\in\Lambda_t} \tvv{A[t/x]}\\
 \tvv{\forall a.A}  & \defeq & \bigcap_{p\in\Lambda_p} \tvv{A[p/x]}\\
     \fvf{A} 	    & \defeq & \{\tis{f}{\tau} : \forall v \tau', \compat{\tau}{\tau'}\land \tis{v}{\tau'}\in\tvv{A} \Rightarrow \tis{v}{\tau'}\orth \tis{F}{\tau}\}\\
   \end{array}$$

If this definition is somewhat more natural, it poses a problem for the definition of  coinductive formulas. 
Indeed, there is a priori no strong value in the orthogonal of $\fvf{\nu^t_{fv} A}$, which is:
$$(\fvf{\nu^t_{fv} A})^{\pole_v} = (\bigcup_{n\in\N}\fvf{F^n_{A,t}})^{\pole_v} = \bigcap_{n\in\N}(\fvf{F^n_{A,t}})^{\pole_v})$$
For instance, consider again the case of a stream of type $\nu^0_{fx} A(x)\land f(S(x))=0$, a strong value in the intersection
should be in every $\tvv{A(0)\land (A(1)\land \dots (A(n)\land \top)\dots)}$, 
which is not possible due to the finiteness of terms\footnote{Yet, it might possible to consider interpretation with infinite proof terms, 
the proof of adequacy for proofs and contexts (which are finite) will still work exactly the same. 
However, another problem will arise for the adequacy of the \texttt{cofix} operator. Indeed, with the interpretation above,
we would obtain the inclusion:
\begin{center}$\bigcup_{n\in\N}(\fvf{F^n_{A,t}})\subset (\bigcap_{n\in\N}\tvt{F^n_{A,t}})^{\pole_f} = \fvf{\nu^t_{fx} A}$\end{center}
which is strict in general. By orthogonality, this gives us that
$\tvV{\nu^t_{fx} A}\subseteq {\bigcup_{n\in\N}(\fvf{F^n_{A,t}}))^{\pole_V}} $, 
while the proof of adequacy only proves that $\tis{a}{\tau[a:=\cofix{t}{b}{x}{p}]}$ belongs to the latter set.}.
Thus, the definition $\tvv{\nu^t_{fv} A}\defeq \bigcap_{n\in\N}\tvv{F^n_{A,t}}$ would give $\tvv{\nu^t_{fx}A} = \emptyset =\tvv{\bot}$.

Interestingly, and this is the second aspect that we shall discuss here, we could have defined instead 
the truth value of coinductive formulas directly by :
$$\tvv{\nu^t_{fx}A}\defeq \tvv{A[t/x][\nu^y_{fx} A/f(y)=0]}$$
Let us sketch the proof that such a definition is well-founded. 
We consider the language of formulas without coinductive formulas and extended with formulas of the shape $X(t)$ where $X,Y,...$ are parameters.
At level $v$, closed formulas are interpreted by sets of strong values-in-store $\tis{v}{\tau}$,
and as we already observed, these sets are besides closed under the relation $\equiv_\tau$ along their component $\tau$.
If $A(x)$ is a formula whose only free variable is $x$, the function which associates to each term $t$ the set $\tvv{A(t)}$ is thus
a function from $\Lambda_t$ to $\P(\Lambda_v^\tau)_{\equiv_\tau}$, let us denote the set of these functions by $\L$.
\begin{proposition}
 The set $\L$ is a complete lattice with respect to the order $\leq_\L$ defined by:\vspace{-0.5em}
 $$F\leq_\L G 	\defeq \forall t\in\Lambda_t. F(t)\subseteq G(t)$$
\end{proposition}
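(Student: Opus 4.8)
The plan is to factor the statement through the standard observation that a space of functions into a complete lattice, ordered pointwise, is itself a complete lattice. Thus the work reduces to two stages: first checking that the codomain $\P(\Lambda_v^\tau)_{\equiv_\tau}$ is a complete lattice under inclusion, and then lifting this structure pointwise to $\L$.

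First I would establish that $\P(\Lambda_v^\tau)_{\equiv_\tau}$---the subsets of $\Lambda_v^\tau$ that are closed under $\equiv_\tau$ along their store component, ordered by $\subseteq$---is a complete lattice. The only thing distinguishing these sets from arbitrary subsets is the closure condition, so the crux is to verify that it survives arbitrary unions and intersections. Given a family $(S_i)_{i\in I}$ of $\equiv_\tau$-closed sets, if $\tis{v}{\tau}$ lies in $\bigcup_i S_i$ (resp.\ $\bigcap_i S_i$) and $\tis{v'}{\tau}\equiv_\tau\tis{v}{\tau}$, then $\tis{v'}{\tau}$ lies in the same set, since it belongs to each $S_i$ that contains $\tis{v}{\tau}$ and each such $S_i$ is closed. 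Hence both $\bigcup_i S_i$ and $\bigcap_i S_i$ are again elements of $\P(\Lambda_v^\tau)_{\equiv_\tau}$, which therefore admits all joins (given by unions) and all meets (given by intersections); the bottom and top elements are $\emptyset$ and $\Lambda_v^\tau$, both trivially closed.

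Then I would lift this to $\L$. Since $\L$ is precisely the set of functions from $\Lambda_t$ into the complete lattice $\P(\Lambda_v^\tau)_{\equiv_\tau}$ equipped with the pointwise order $\leq_\L$, it inherits a complete lattice structure in which suprema and infima are computed argumentwise: for any family $(F_i)_{i\in I}$ in $\L$,
$$\Big(\bigvee_{i\in I} F_i\Big)(t) \defeq \bigcup_{i\in I} F_i(t) \qquad\qquad \Big(\bigwedge_{i\in I} F_i\Big)(t) \defeq \bigcap_{i\in I} F_i(t),$$
each of which is a well-defined element of $\L$ by the first stage, and which are respectively the least upper bound and greatest lower bound of the family for $\leq_\L$. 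The bottom and top of $\L$ are then the constant functions $t\mapsto\emptyset$ and $t\mapsto\Lambda_v^\tau$.

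I do not expect any genuine obstacle here: the single point deserving attention is that the $\equiv_\tau$-closure condition---the feature that makes $\P(\Lambda_v^\tau)_{\equiv_\tau}$ a proper subcollection of the full powerset---is preserved by both unions and intersections, after which the result is the textbook completeness of pointwise function spaces. The real payoff comes downstream: this lattice structure is exactly what is needed so that the monotone operator on $\L$ induced by a formula $A$ in which $f$ occurs only positively admits a greatest fixpoint via Knaster--Tarski, thereby justifying the well-foundedness of the direct definition $\tvv{\nu^t_{fx}A}\defeq\tvv{A[t/x][\nu^y_{fx}A/f(y)=0]}$.
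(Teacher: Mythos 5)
Your proposal is correct and follows essentially the same route as the paper, whose proof is the one-line observation that the order is pointwise and the codomain $\P(\Lambda_v^\tau)$ is a complete lattice. The only difference is that you additionally spell out that the $\equiv_\tau$-closure condition is preserved by arbitrary unions and intersections, a detail the paper leaves implicit.
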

\begin{proof}
 Trivial since the order on functions is defined pointwise and the co-domain $\P(\Lambda^\tau_v)$ is itself a complete lattice.
\end{proof}

We define valuations, which we write $\rho$, as functions mapping each parameter $X$ to a function $\rho(X)\in\L$.
We then define the interpretations $\tvv{A}^\rho,\fvf{A}^\rho,...$ of formulas with parameters exactly as above with the additional
rule\footnote{Observe that this rule is exactly the same as in the previous section (see \Cref{fig:dlpaw_real}).}:
$$\tvv{X(t)}^\rho\defeq \{\tis{v}{\tau}\in\rho(X)(t)\}$$

Let us fix a formula $A$ which has one free variable $x$ and a parameter $X$ such that sub-formulas of the shape $X\,t$ only occur in positive positions in $A$.
\begin{lemma}\label{lm:compatibility}
Let $B(x)$ is a formula without parameters whose only free variable is $x$,
and let $\rho$ be a valuation which maps $X$ to the function $t\mapsto \tvv{B(t)}$.
Then $\tvv{A}^{\rho} = \tvv{A[B(t)/X(t)]}$
\end{lemma}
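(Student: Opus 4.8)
The plan is to proceed by a straightforward structural induction on the formula $A$, carried out at level $v$ only. The key observation is that, apart from the atomic case $X(t)$, every clause defining the truth value $\tvv{\cdot}^\rho$ of a compound formula builds it compositionally from the interpretations of the immediate subformulas, possibly after an application of orthogonality --- recall from Figure~\ref{fig:dlpaw_real} that $\tvV{A}$, $\fve{A}$ and $\tvp{A}$ are all obtained from $\tvv{A}$ (equivalently $\fvf{A}$) by iterated orthogonality. Since orthogonality is a purely set-theoretic operation that preserves equality of the underlying sets, it is enough to prove the identity $\tvv{A}^\rho = \tvv{A[B(t)/X(t)]}$ at level $v$: once it holds for the subformulas, it transfers for free to the levels $V$, $e$, $p$ (and the corresponding term levels) that actually occur in the clauses.

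For the base cases, the formulas $\top$, $\bot$ and $t = u$ carry no parameter, so both $\rho$ and the substitution $[B(t)/X(t)]$ act trivially and the identity is immediate. For $A = X(t)$, unfolding the definition gives $\tvv{X(t)}^\rho = \rho(X)(t)$, which by the assumption on $\rho$ is exactly $\tvv{B(t)}$; symmetrically $X(t)[B(s)/X(s)] = B(t)$, and as $B$ is parameter-free its interpretation is independent of $\rho$, so $\tvv{X(t)[B(t)/X(t)]} = \tvv{B(t)}$ as well.

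For the inductive cases I would apply the induction hypothesis to the immediate subformulas and unfold the matching clause. For the connectives $\land$, $\lor$ and the implications $A_1 \imp A_2$, the clauses mention $\tvV{A_i}^\rho$ and $\tvp{A_2}^\rho$, which coincide with their substituted counterparts by the propagation through orthogonality noted above. For the quantifiers $\forall x.A'$, $\exists x.A'$ and $\forall a.A'$ the truth value is an intersection or a union indexed by all terms (resp. proofs), so beyond the induction hypothesis on each instance one only needs the routine commutation of the first-order (resp. proof) substitution with the parameter substitution, $A'[t/x][B(s)/X(s)] = (A'[B(s)/X(s)])[t/x]$, valid under the usual freshness conventions. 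The relativised predicates $p \in A'$ and $t \in T$ are treated identically, the side condition $v \eqtau p$ being orthogonal to the parameter.

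I do not expect any genuine difficulty here: this is a pure substitution lemma. The only point deserving attention is the bookkeeping ensuring that the equality established at level $v$ really does descend to all the levels appearing in the defining clauses, which is exactly what the compositionality-through-orthogonality remark secures and what lets a single induction at level $v$ suffice. It is worth stressing that the standing positivity hypothesis on $X$ is not used at all in this proof: it will become relevant only afterwards, to ensure monotonicity of the induced operator and thereby the existence of the fixpoint interpreting $\nu^t_{Xx}A$, whereas the present identity holds regardless of the polarity of $X$.
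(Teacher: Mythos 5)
Your proof is correct and follows essentially the same route as the paper's: a structural induction on $A$ in which the only non-vacuous case is $A\equiv X(t)$, where $\tvv{X(t)}^\rho=\rho(X)(t)=\tvv{B(t)}$ by the choice of $\rho$, all other cases being handled compositionally. The extra bookkeeping you supply (propagation of the equality through orthogonality to the other levels, and the remark that positivity is not needed here) is sound but the paper simply records these cases as trivial.
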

\begin{proof}
 By induction on the structure of $A$, all cases are trivial, and this is true for the basic case $A \equiv X(t)$:
 $$\tvv{X(t)}^\rho = \rho(X)(t) = \tvv{B(t)}\noem$$
\end{proof}

Let us now define $\varphi_A$ as the following function:
$$
\varphi_A:\left\{
  \begin{array}{ccc}
  \L& \to & \L \\
  F &\mapsto&t\mapsto \tvv{A[t/x]}^{[X\mapsto F]}
  \end{array}
\right.
$$

\begin{proposition}
 The function $\varphi_A$ is monotone.
\end{proposition}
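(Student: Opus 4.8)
The plan is to derive the monotonicity of $\varphi_A$ from a single induction on the structure of $A$, but to make the induction go through I first strengthen the statement so that it tracks the \emph{polarity} of the parameter $X$. Since $X$ occurs only positively in $A$, the truth value should be covariant in $F$; however, the implication connective flips polarities, so I cannot restrict attention to positive occurrences alone. Instead I prove, for every formula $A$ with parameter $X$ and any $F\leq_\L G$, the two dual claims: if $X$ occurs only positively in $A$, then $\tvv{A}^{[X\mapsto F]}\subseteq\tvv{A}^{[X\mapsto G]}$ while $\fvf{A}^{[X\mapsto G]}\subseteq\fvf{A}^{[X\mapsto F]}$; and if $X$ occurs only negatively in $A$, then both inclusions are reversed. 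The monotonicity of $\varphi_A$ is then exactly the positive clause at level $v$, read pointwise in $t$: $\varphi_A(F)(t)=\tvv{A[t/x]}^{[X\mapsto F]}\subseteq\tvv{A[t/x]}^{[X\mapsto G]}=\varphi_A(G)(t)$, hence $\varphi_A(F)\leq_\L\varphi_A(G)$.

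The induction itself is routine on most connectives. For the base case $A\equiv X(t)$ the claim is immediate, since $\tvv{X(t)}^{[X\mapsto F]}=F(t)\subseteq G(t)=\tvv{X(t)}^{[X\mapsto G]}$. The atomic formulas not mentioning $X$ (such as $t=u$, $p\in B$, $\top$, $\bot$, and $Y(t)$ for $Y\neq X$) do not depend on the valuation, so both interpretations coincide. For $A_1\land A_2$, $A_1\lor A_2$, $\exists x.A'$, $\forall x.A'$ and $\forall a.A'$, the interpretations at level $v$ are built by products, unions and intersections of the components, so covariance is preserved componentwise and the inductive hypotheses apply directly (all subformulas inherit the polarity of $X$ in $A$). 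The passage between the level-$v$ statement and the level-$f$ statement is supplied by orthogonality: since $\fvf{A}=\tvv{A}^{\pole_f}$ and the operation $(\cdot)^{\pole}$ reverses inclusions, a covariant behaviour at level $v$ forces the announced antitone behaviour at level $f$, and conversely. The same remark propagates the statement to the derived levels $\tvV{A}=\fvf{A}^{\pole_V}$, $\fve{A}=\tvV{A}^{\pole_e}$ and $\tvp{A}=\fve{A}^{\pole_p}$, each orthogonality step flipping the direction of the inclusion once more.

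The only delicate case, and the one I expect to be the main obstacle, is the implication $A\equiv A_1\imp A_2$. Here a positive occurrence of $X$ in $A$ is a \emph{negative} occurrence in $A_1$ and a positive one in $A_2$, so the induction must be applied to $A_1$ in its \emph{reversed} polarity. Recalling that
$$\tvv{A_1\imp A_2}^{\rho}\defeq\{\tis{\lambda a.p}{\tau}:\forall V,\tau',\ \tis{V}{\tau'}\in\tvV{A_1}^{\rho}\Rightarrow\tis{p}{\overline{\tau\tau'}[a:=V]}\in\tvp{A_2}^{\rho}\},$$
the antecedent enters through $\tvV{A_1}=\fvf{A_1}^{\pole_V}$ and the conclusion through $\tvp{A_2}$. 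By the negative clause of the inductive hypothesis for $A_1$ we get $\fvf{A_1}^{[X\mapsto F]}\subseteq\fvf{A_1}^{[X\mapsto G]}$, whence, taking orthogonals, $\tvV{A_1}^{[X\mapsto G]}\subseteq\tvV{A_1}^{[X\mapsto F]}$: enlarging $F$ to $G$ \emph{shrinks} the set of admissible arguments. Dually, the positive clause for $A_2$, threaded through the three orthogonalities defining $\tvp{A_2}$, gives $\tvp{A_2}^{[X\mapsto F]}\subseteq\tvp{A_2}^{[X\mapsto G]}$, so the set of admissible results \emph{grows}. Together these make the defining implication easier to satisfy, so the set of qualifying proofs $\lambda a.p$ grows and $\tvv{A_1\imp A_2}^{[X\mapsto F]}\subseteq\tvv{A_1\imp A_2}^{[X\mapsto G]}$, as required; the case where $X$ is negative in $A_1\imp A_2$ is symmetric. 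The entire difficulty is therefore the bookkeeping of polarities across the several orthogonality levels, which the strengthened dual statement is designed to manage uniformly.
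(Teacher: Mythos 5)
Your proof is correct and follows essentially the same route as the paper's: an induction on the structure of $A$ that tracks the polarity of $X$, establishing that truth values are covariant and falsity values antitone in the valuation at positive occurrences (and dually at negative ones), with each orthogonality step flipping the inclusion and the implication case handled by applying the reversed-polarity hypothesis to the antecedent. The paper states this in one sentence where you spell out the bookkeeping, but the underlying argument is the same.
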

\begin{proof}
 By induction on the structure of $A$, where $X$ can only occur in positive positions. 
 The case $\tvv{X(t)}$ is trivial, and it is easy to check that truth values are monotonic
 with respect to the interpretation of formulas in positive positions,
 while falsity values are anti-monotonic.
\end{proof}

We can thus apply Knaster-Tarski theorem to $\varphi_A$, and we denote by $\texttt{gfp}(\varphi_A)$ its greatest fixpoint.
We can now define:
$$\tvv{\nu^t_{Xx} A} \defeq \texttt{gfp}(\varphi_{ A})(t)$$
This definition satisfies the expected equality:
\begin{proposition}\label{prop:dlpaw:coind_def}
 We have:
 $$\tvv{\nu^t_{Xx} A} = \tvv{A[t/x][\nu^y_{Xx} A / X(y)]} $$
\end{proposition}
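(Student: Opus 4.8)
The plan is to exploit the fact that $\tvv{\nu^t_{Xx} A}$ is, by definition, the greatest fixpoint of $\varphi_A$ evaluated at $t$, and then to invoke the compatibility lemma (\Cref{lm:compatibility}) to turn the \emph{parametric} interpretation into a genuine substitution of the coinductive formula itself. First I would set $G \defeq \texttt{gfp}(\varphi_A)$, so that $\tvv{\nu^y_{Xx} A} = G(y)$ holds for every term $y$ by the very definition of the interpretation of coinductive formulas. By the Knaster--Tarski theorem, $G$ is a fixpoint of $\varphi_A$, that is $\varphi_A(G) = G$; unfolding the definition of $\varphi_A$, this means that for every term $t$,
$$G(t) = \varphi_A(G)(t) = \tvv{A[t/x]}^{[X\mapsto G]}.$$

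Next I would observe that $G$ is precisely the function $y \mapsto \tvv{\nu^y_{Xx} A}$. Setting $B(y) \defeq \nu^y_{Xx} A$---which is a formula \emph{without parameters} whose only free variable is $y$, the parameter $X$ occurring in $A$ being bound by the coinductive binder---the valuation $[X\mapsto G]$ maps $X$ exactly to the function $y\mapsto\tvv{B(y)}$. The hypotheses of \Cref{lm:compatibility} are therefore satisfied, and applying it to the formula $A[t/x]$ yields
$$\tvv{A[t/x]}^{[X\mapsto G]} = \tvv{A[t/x][\nu^y_{Xx} A / X(y)]}.$$
Chaining this identity with the fixpoint equation above gives the desired equality $\tvv{\nu^t_{Xx} A} = \tvv{A[t/x][\nu^y_{Xx} A / X(y)]}$.

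The main delicacy lies in the second step: I must check that the valuation $[X\mapsto G]$ literally coincides with the valuation required by \Cref{lm:compatibility}, namely that $G$ really is the pointwise interpretation $y\mapsto\tvv{B(y)}$ of the parameter-free formula $B(y)=\nu^y_{Xx} A$, and not merely of some formula still mentioning $X$. This is exactly what the definition $\tvv{\nu^y_{Xx} A}\defeq G(y)$ guarantees, so no circularity creeps in; what remains is purely a matter of substitution bookkeeping, verifying that $A[t/x][\nu^y_{Xx}A/X(y)]$---substituting the abstraction $y\mapsto\nu^y_{Xx}A$ for the positive parameter $X$---matches the notation $A[B(t)/X(t)]$ used in the statement of the compatibility lemma.
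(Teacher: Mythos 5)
Your proposal is correct and follows essentially the same route as the paper: both define $\tvv{\nu^t_{Xx}A}$ as $\texttt{gfp}(\varphi_A)(t)$, unfold the Knaster--Tarski fixpoint equation to get $\tvv{A[t/x]}^{[X\mapsto\texttt{gfp}(\varphi_A)]}$, and then apply the compatibility lemma with $B(y)=\nu^y_{Xx}A$ to replace the parametric valuation by the literal substitution. The ``delicacy'' you flag---that $B$ is genuinely parameter-free and that $\texttt{gfp}(\varphi_A)$ is exactly $y\mapsto\tvv{B(y)}$---is precisely the observation the paper makes at the start of its proof.
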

\begin{proof}
Observe first that by definition, the formula $B(z)=\tvv{\nu^z_{Xx} A}$ satisfies the hypotheses 
of \Cref{lm:compatibility} and that $\texttt{gfp}(\varphi_{ A}) = t\mapsto B(t)$.
Then we can deduce :
$$
   \tvv{\nu^t_{Xx} A} = \texttt{gfp}(\varphi_{ A})(t)
		      = \varphi_{ A}(\texttt{gfp}(\varphi_A))(t)
		      = \tvv{ A[t/x]}^{[X\mapsto \texttt{gfp}(\varphi_A)]}\\
		      =  \tvv{A[t/x][\nu^y_{Xx} A/X(y)]}\noem
$$
\end{proof}
Back to the original language, 
it only remains to define $\tvv{\nu^t_{fx} A}$ as the set $\tvv{\nu^t_{Xx} A[X(y)/f(y)=0]}$ that we just defined.
This concludes our proof that the interpretation of coinductive formulas
through the equation in \Cref{prop:dlpaw:coind_def} is well-founded.

We could also have done the same reasoning with the interpretation from the previous section, 
by defining $\L$ as the set of functions from $\Lambda_t$ to $\P(\Lambda_f^\tau)_{\equiv_\tau}$.
The function $\varphi_A$, which is again monotonic, is then:
$$
\varphi_A:\left\{
  \begin{array}{ccc}
  \L& \to & \L \\
  F &\mapsto&t\mapsto \tvv{A[t/x]}^{[X\mapsto F]}
  \end{array}
\right.
$$
We recognize here the definition of the formula $F^n_{A,t}$.
Defining $f^0$ as the function $t\mapsto\fvf{\top}$ and $f^{n+1} \defeq \varphi_A(f^n)$
we have:
$$\forall n\in\N, \fvf{F^n_{A,t}} = f^n(t) =\varphi_A^n(f^0)(t)$$

However, in both cases (defining primitively the interpretation at level $v$ or $f$),
this definition does not allow us to prove\footnote{To be honest, we should rather say that 
we could not manage to find a proof, and that we would welcome any suggestion from insightful readers.}
the adequacy of the {\cofixrule} rule.
In the case of an interpretation defined at level $f$, the best that we can do is to show 
that for any $n\in\N$, $f^n$ is a post-fixpoint since for any term $t$, we have:
$$f^{n}(t) = \fvf{F^{n}_{A,t}} \subseteq  \fvf{F^{n+1}_{A,t}} = f^{n+1}(t)=\varphi_A(f^n)(t) $$
With $\fvf{\nu^t_{fx} A}$ defined as the greatest fixpoint of $\varphi_A$, for any term $t$ and any $n\in\N$
we have the inclusion
$f^n(t) \subseteq \texttt{gfp}(\varphi_A)(t) = \fvf{\nu^t_{fx} A}$ and thus:
$$ \bigcup_{n\in\N} \fvf{F^{n}_{A,t}}  = \bigcup_{n\in\N} f^n(t)  \subseteq \fvf{\nu^t_{fx} A}$$
By orthogonality, we get: 
$$ \tvV{\nu^t_{fx} A}\subseteq \bigcap_{n\in\N} \tvV{F^{n}_{A,t}} $$
and thus our proof of adequacy from the last section is not enough to conclude that 
$\cofix{t}{bx}p \in \tvp{\nu^t_{fx} A}$.
For this, we would need to prove that the inclusion is an equality.
An alternative to this would be to show that the function $t\mapsto \bigcup_{n\in\N} \fvf{F^{n}_{A,t}}$
is a fixpoint for $\varphi_A$. In that case, we could stick to this definition and happily conclude that it
satisfies the equation:
$$\fvf{\nu^t_{Xx} A} = \fvf{A[t/x][\nu^y_{Xx} A / X(y)]} $$
This would be the case if the function $\varphi_A$ was Scott-continuous on $\L$ (which is a dcpo), 
since we could then apply Kleene fixed-point theorem\footnote{In fact, Cousot and Cousot proved a constructive version
of Kleene fixed-point theorem which states that without any continuity requirement, 
the transfinite sequence $(\varphi_A^\alpha(f^0))_{\alpha\in O_n}$ is stationary~\cite{Cousot79}. 
Yet, we doubt that the gain of the desired equality is worth a transfinite definition of the realizability interpretation.}
to prove that 
$t\mapsto \bigcup_{n\in\N} \fvf{F^{n}_{A,t}}$ is the stationary limit of $\varphi_A^n(f_0)$.
However, $\varphi_A$ is not Scott-continuous\footnote{In fact, this is nonetheless a good news about our interpretation. 
Indeed, it is well-know that the more ``regular'' a model is, the less interesting it is. 
For instance, Streicher showed that the realizability model induced by Scott domains (using it as a realizability structure) 
was not only a forcing model by also equivalent to the ground model. 
}
(the definition of falsity values involves double-orthogonal sets which 
do not preserve supremums), and this does not apply.

}

\end{document}